\newcommand{\cleqn}{\setcounter{equation}{0}}
\newcommand{\clth}{\setcounter{theorem}{0}}
\newcommand {\sectionnew}[1]{\section{#1}\cleqn\clth}
\newtheorem{theorem}{Theorem}[section]
\newtheorem{lemma}[theorem]{Lemma}
\newtheorem{proposition}[theorem]{Proposition}
\newtheorem{corollary}[theorem]{Corollary}
\newtheorem{definition}[theorem]{Definition}
\renewcommand{\P}{\mathcal{P}}
\newcommand{\M}{\mathcal{M}}
\renewcommand{\L}{\mathcal{L}}
\newcommand{\Z}{\mathbb{Z}}
\newcommand{\A}{\mathcal{A}}
\newcommand{\B}{\mathcal{B}}
\def\res{\mathop{\rm Res}\nolimits}
\def\({\left(}
\def\){\right)}
\def\[{\left[}
\def\]{\right]}
\def\d{\partial}
\def\ep{\epsilon}
\def\La{\Lambda}
\def\d{\partial}
    \newcommand{\Rmnum}[1]{\expandafter\@slowromancap\romannumeral #1@}
\begin{document}

\title{Dispersionless bigraded Toda Hierarchy and its additional symmetry}
\author{Chuanzhong Li\dag\S
, Jingsong He\dag$^*$ } \dedicatory {  \dag Department of
Mathematics,  NBU, Ningbo, 315211, Zhejiang, P.\ R.\ China\\
 \S Department of Mathematics, USTC, Hefei, 230026 Anhui, P.\ R.\ China}

\thanks{$^*$ Corresponding author: hejingsong@nbu.edu.cn}
\texttt{}

\date{}

%%%%%%%%%%%%%%%%%%%%%%%%%%%%%%%%%%%%%%%%%%%%%%%%
\begin{abstract}
 In this paper, we firstly give the definition of dipersionless
bigraded Toda hierarchy (dBTH) and introduce some Sato theory on
dBTH. Then we define Orlov-Schulman's $\M_L$, $\M_R$ operator and
give the additional Block symmetry  of  dBTH. Meanwhile we give
tau function of dBTH and some some related dipersionless bilinear
equations.

\end{abstract}

%%%%%%%%%%%%%%%%%%%%%%%%%%%%%%%%%%%%%%%%%%%%%%%%

\maketitle
Mathematics Subject Classifications(2000).  37K05, 37K10, 37K20.\\
Keywords:   dispersionless bigraded Toda hierarchy, additional symmetry, dispersionless Hirota bilinear identity.\\
\tableofcontents
\allowdisplaybreaks
 \setcounter{section}{0}

\sectionnew{Introduction}
The Toda lattice equation is a nonlinear evolutionary
differential-difference equation introduced by Toda \cite{Toda}
describing an infinite system of masses on a line that interact
through an exponential force. This
equation is completely integrable, i.e. admits infinite conserved
quantities and  has  important applications in many different fields
such as classical and quantum field theory, in particular in the
theory of Gromov-Witten invariants \cite{Z}.
 Considering its application to 2D topological field theory \cite{D witten,witten}, one  extended the interpolated Toda lattice
hierarchy into the so-called extended Toda hierarchy \cite{CDZ,DZ
virasora}. In the paper \cite{C}, it generalized the Toda lattice
hierarchy(TH) and extended the Toda lattice hierarchy by considering
$N+M$ dependent variables and used them to provide a Lax pair
definition of the extended bigraded Toda hierarchy(EBTH). This
hierarchy later lead to a series of results
\cite{TH,ourJMP,ourBlock,solutionBTH}. In fact that model has been proposed
in \cite{CMP kodama} because  the dispersionless EBTH can be
obtained from the dispersionless KP hierarchy and the dispersionless
Toda (dToda) hierarchy  describes the genus zero-limit of the
Landau-Ginzburg formulation of two-dimensional string theory
\cite{Dijkgraaf,Hanany,Takasakicmp} .

Additional symmetries have been analyzed in the explicit form of the
additional flows of KP hierarchy  given by Orlov and Shulman
\cite{os1}. This kind of additional  flows include dynamic variables
explicitly. That additional symmetries  form a centerless
$W_{1+\infty}$ algebra which is closely related to matrix
model\cite{D witten,Douglas} because of the Virasoro constraint and
string equation. About Toda hierarchy, there was parallel results
\cite{Takasakicmp} which was used to give string equations
 and Riemann-Hilbert problem of dispersionless Toda hierarchy \cite{Takasaki}.
 Because of the close reduction relation between Toda hierarchy and the bigraded Toda
 hierarchy(BTH), it motivated us to consider the additional symmetry of the
 BTH. In another paper\cite{ourBlock}, we give a novel Block type
additional symmetry of the BTH. This is the first time to find the
direct relation between  integrable hierarchy and the Block type
algebra. The representation theory of the Block type infinite
algebra has been studied intensively in
references\cite{Block}-\cite{Su}.

Dispersionless integrable systems have been found very important in
the study of all kinds of nonlinear phenomenon in various fields of
physics and mathematics, particularly in the application in
topological field theory \cite{Takasakicmp} and matrix model
theory\cite{Kodama-Pierce,matrix model}. In particular, the
dispersionless integrable systems have many typical properties as
usual integrable systems such as Lax pair, infinite conservation
law, symmetry and the Hirota bilinear
equations(HBEs). There is a dispersionless limit to get the dBTH from the
BTH. However, after taking this limit of the BTH, whether  the Block type Lie algebraic structure can be preserved is still
an interesting question.  So the high relevance of the  TH and BTH
in mathematical physics motivates us to focus on the HBEs and the additional symmetry of the dBTH in this
paper.

The paper is organized as follows.  In Section 2   the
definition of dBTH and corresponding dispersionless version of Sato theory are introduced. In Section 3, we define Orlov-Schulman's $\M_L$,
$\M_R$ function. The Block type  additional symmetry of dBTH will be given
in Section 4. In Section 5, we give the quasi-classical limit  of
BTH to get  dBTH. In Section 6, we give the dispersionless Hirota
bilinear identity of  dBTH which provide a very sound mathematical background in its possible applications.
Section 7 is devoted to conclusions and
discussions.

\sectionnew{ The dBTH }
 Introduce
firstly the lax operator of dispersionless bigraded Toda hierarchy (dBTH) as following
\begin{equation}\label{Lax operator}
\L=k^{N}+u_{N-1}k^{N-1}+\dots + u_{-M}
k^{-M}
\end{equation}
( $N,M \geq1$ are two fixed positive integers).
  The variables $u_j$ are functions of the real variable $x$. The Lax operator $\L$ can be
written in two different dressed ways
  $$\L=e^{ad\varphi_L}(k^N )= e^{ad\phi}e^{ad\varphi_R}(k^{-M}),$$
  which in fact gives  the constraint(string equation)\cite{string equations} of the two dimensional dispersionless Toda hierarchy.
The two dressing functions has the following form \begin{eqnarray}
&& \varphi_L=w_1k^{-1}+w_2k^{-2}+\ldots,
\label{dressP}\\
&& \varphi_R=\tilde{w_1}k+\tilde{w_2}k^2+ \ldots, \label{dressQ}
\end{eqnarray}
We can get the following relation
 \begin{eqnarray}
&& u_{N-1}=-N\frac{\d w_1}{\d x}, \ \ u_{-M}(x)=\phi(x+M).
\end{eqnarray}
  The
pair is unique up to adding some laurent serials about variable $k$ with  coefficients which do not depend on $x$.

 The dispersionless  bigraded Toda hierarchy can be defined as following.
\begin{definition} \label{deflax}
The dispersionless  bigraded Toda hierarchy(dBTH) consists of the system
of flows given in the Lax pair formalism by
\begin{equation}
  \label{edef2}
  \frac{\d \L}{\partial t_{\gamma, n}} = \{ \A_{\gamma,n} ,\L \}:=k(\frac{\partial  \A_{\gamma,n}}{\partial k}\frac{\partial \L}{\partial x}-\frac{\partial  \A_{\gamma,n}}{\partial x}\frac{\partial \L}{\partial k}),
\end{equation}
for $\gamma = N,N-1,N-2, \dots , -M+1$ and $n \geq 0$. The operators
$A_{\gamma ,n}$ are defined by
\begin{subequations}
\label{Adef}
\begin{align}
  &\A_{\gamma,n} =  ( \L^{n+1-\frac{\gamma-1}N })_+ \quad \text{for} \quad \gamma = N,N-1, \dots, 1\\
  &\A_{\gamma,n} = - ( \L^{n+1+\frac{\gamma}M })_- \quad \text{for} \quad \gamma = 0, \dots,
  -M+1.
\end{align}
\end{subequations}
\end{definition}

 Particularly for $N=1=M$ this hierarchy
coincides with the dispersionless  Toda hierarchy.

To see the dBTH clearly, we will introduce two examples as following, i.e. (1,2)-dBTH and (2,2)-dBTH.
\subsection{Example as the (1,2)-dBTH}
  The Lax operator is
\begin{equation}\L=k+u_0 + u_{-1}
k^{-1}+ u_{-2}k^{-2}.
\end{equation}
Then there  will be one fraction power of $\L$, denoted as $\L^{\frac12}$  as following form

\begin{equation}\L^{\frac12}=b_{-1}k^{-1}+b_0 + b_{1}
k+ b_{2}k^{2}+\dots.
\end{equation}
We can get some relations of $\{ b_j; j\geq -1\}$ with $\{u_i; -M\leq i\leq N-1\}$ as following
\begin{eqnarray}\label{a operator12} b_{-1}^{2}= u_{-2}.
 \end{eqnarray}
Then by Lax equation, we get the $t_{1,0}$ flow of (1,2)-BTH which is equivalent to $t_{0,0}$ flow as following
\begin{eqnarray}
 \partial_{1,0} \L= \{k+u_0, \L\}
\end{eqnarray}
which corresponds to
\begin{eqnarray}\label{1,0 flow12}
\begin{cases}
 \partial_{1,0} u_0&=  \frac{\d u_{-1} }{\d x}\\
 \partial_{1,0} u_{-1}&=  \frac{\d u_{-2}}{\d x}+u_{-1}\frac{\d u_{0}}{\d x}\\
  \partial_{1,0} u_{-2}&=2\frac{\d u_{0}}{\d x}u_{-2}.
 \end{cases}
 \end{eqnarray}

By Lax equation, we get the $t_{-1,0}$ flow of (1,2)-BTH
\begin{eqnarray}
 \partial_{-1,0} \L= \{-u_{-2}^{\frac12}k^{-1}, \L\}
\end{eqnarray}
which correspond to
\begin{eqnarray}\label{-1,0 flow12}
\begin{cases}
 \partial_{-1,0} u_0&=  \frac{\d u_{-2}^{\frac12} }{\d x}\\
 \partial_{-1,0} u_{-1}&=  u_{-2}^{\frac12}\frac{\d u_{0}}{\d x}\\
  \partial_{-1,0} u_{-2}&=\frac{\d u_{-1}}{\d x}u_{-2}^{\frac12}-\frac{\d u_{-2}^{\frac12}}{\d x}u_{-1}\\
0&=\frac{\d u_{-2}}{\d x}u_{-2}^{\frac12}-2\frac{\d u_{-2}^{\frac12}}{\d x}u_{-2}.
 \end{cases}
 \end{eqnarray}
\subsection{Example as the (2,2)-dBTH}
  The Lax operator is
\begin{equation}\L=k^2+u_{1}k+u_0 + u_{-1}
k^{-1}+ u_{-2}k^{-2}.
\end{equation}
Then there  will be two different fraction power of $\L$, denoted as $\L_N^{\frac12}$ and $\L_M^{\frac12}$ respectively as following form
\begin{equation}\L_N^{\frac12}=k+a_0 + a_{-1}
k^{-1}+ a_{-2}k^{-2}+\dots,
\end{equation}
\begin{equation}\L_M^{\frac12}=a'_{-1}k^{-1}+a'_0 + a'_{1}
k+ a'_{2}k^{2}+\dots.
\end{equation}
We can get some relations of $\{a_i; i\leq 0\},\{ a'_j; j\geq -1\}$ with $\{u_i; -M\leq i\leq N-1\}$ as following
\begin{eqnarray}\label{a operator}
 u_{1}=2a_0,\ \ \ a_{-1}^{'2}= u_{-2}.
 \end{eqnarray}
Then by Lax equation, we get the $t_{2,0}$ flow of (2,2)-BTH
\begin{eqnarray}
 \partial_{2,0} \L= \{k +\frac12u_1, \L\}
\end{eqnarray}
which corresponds to
\begin{eqnarray}\label{2,0 flow}
\begin{cases}
 \partial_{2,0} u_1&= \frac{\d u_{0}}{\d x}-\frac12\frac{\d u_1}{\d x}\\
 \partial_{2,0} u_0&=  \frac{\d u_{-1}}{\d x}\\
 \partial_{2,0} u_{-1}&= \frac{\d u_{-2}}{\d x}+\frac12\frac{\d u_{-1}}{\d x}\\
  \partial_{2,0} u_{-2}&=\frac{\d u_{-2}}{\d x}.
 \end{cases}
 \end{eqnarray}
Similarly by Lax equation, we get the $t_{1,0}$ flow of (2,2)-BTH which is equivalent to $t_{0,0}$ flow as following
\begin{eqnarray}
 \partial_{1,0} \L= \{k^{2}+u_{1}k+u_0, \L\}
\end{eqnarray}
which corresponds to
\begin{eqnarray}\label{1,0 flow}
\begin{cases}
 \partial_{1,0} u_1&= 2\frac{\d u_{-1}}{\d x}\\
 \partial_{1,0} u_0&=  2\frac{\d u_{-2} }{\d x}+2u_1\frac{\d u_{-1} }{\d x}-u_1\frac{\d u_{0} }{\d x}\\
 \partial_{1,0} u_{-1}&=  u_{1}\frac{\d u_{-2}}{\d x}+2u_{-2}\frac{\d u_{1}}{\d x}+u_{-1}\frac{\d u_{0}}{\d x}\\
  \partial_{1,0} u_{-2}&=2\frac{\d u_{0}}{\d x}u_{-2}.
 \end{cases}
 \end{eqnarray}

By Lax equation, we get the $t_{-1,0}$ flow of (2,2)-BTH
\begin{eqnarray}
 \partial_{-1,0} \L= \{-u_{-2}^{\frac12}k^{-1}, \L\}
\end{eqnarray}
which corresponds to
\begin{eqnarray}\label{-1,0 flow}
\begin{cases}
 \partial_{-1,0} u_1&= 2\frac{\d u_{-2}^{\frac12}}{\d x}\\
 \partial_{-1,0} u_0&=  \frac{\d u_{-2}^{\frac12} u_{1}}{\d x}\\
 \partial_{-1,0} u_{-1}&=  u_{-2}^{\frac12}\frac{\d u_{0}}{\d x}\\
  \partial_{-1,0} u_{-2}&=\frac{\d u_{-1}}{\d x}u_{-2}^{\frac12}-\frac{\d u_{-2}^{\frac12}}{\d x}u_{-1}\\
0&=\frac{\d u_{-2}}{\d x}u_{-2}^{\frac12}-2\frac{\d u_{-2}^{\frac12}}{\d x}u_{-2}.
 \end{cases}
 \end{eqnarray}

 For the convenience
to lead to the Sato equation, we will  define the following
functions:

\begin{equation}
  \B_{\gamma , n} :=
\begin{cases}
  \L^{n+1-\frac{\gamma-1}{N}} &\gamma=N\dots 1\\
 \L^{n+1+\frac{\gamma}{M}} &\gamma = 0\dots -M+1.
  \end{cases}
\end{equation}
\begin{proposition} \label{lemD}
The following identities hold true
\begin{align}
  \label{d6}
   & (\L^{\frac1N})_{t_{\gamma,p}} =   \{- (\B_{\gamma,p})_-, \L^{\frac1N}\} \\
  \label{d6i} &(\L^{\frac1M})_{t_{\gamma,p}} =  \{(\B_{\gamma,p})_+, \L^{\frac1M} \}.
\end{align}
\end{proposition}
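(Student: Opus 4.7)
The plan is to derive both identities directly from the Lax equation (\ref{edef2}) by exploiting two standard features of the dispersionless framework. First, the bracket $\{f,g\}=k(\partial_k f\,\partial_x g-\partial_x f\,\partial_k g)$ is a derivation in each slot, so the ordinary chain rule holds: for any formal Laurent series $F(\L),G(\L)$ obtained by functional calculus in $\L$ one has $\{F(\L),G(\L)\}=F'(\L)G'(\L)\{\L,\L\}=0$, and $\{h,F(\L)\}=F'(\L)\{h,\L\}$ for any $h$. Second, both fractional powers $\L^{1/N}$ and $\L^{1/M}$ are such formal functions of $\L$: they are unambiguously defined through the two dressings $\varphi_L$ and $(\phi,\varphi_R)$ displayed before the proposition, and by construction satisfy $(\L^{1/N})^N=\L$ and $(\L^{1/M})^M=\L$.

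The first concrete step is to apply $\partial_{t_{\gamma,p}}$ to $(\L^{1/N})^N=\L$ and $(\L^{1/M})^M=\L$. Combining the resulting commutative chain rule on the left with $\partial_{t_{\gamma,p}}\L=\{\A_{\gamma,p},\L\}$ on the right, and then converting $\frac{1}{N}\L^{(1-N)/N}\{\A_{\gamma,p},\L\}$ back into $\{\A_{\gamma,p},\L^{1/N}\}$ via the derivation property, one obtains the auxiliary identities
\begin{equation*}
\partial_{t_{\gamma,p}}\L^{1/N}=\{\A_{\gamma,p},\L^{1/N}\},\qquad \partial_{t_{\gamma,p}}\L^{1/M}=\{\A_{\gamma,p},\L^{1/M}\}.
\end{equation*}
This is the dispersionless counterpart of the familiar Sato argument; in the commutative setting it reduces to functional calculus in $\L$.

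The proposition now reduces to replacing $\A_{\gamma,p}$ by the projection that appears in (\ref{d6}) or (\ref{d6i}), done case by case. For (\ref{d6}) and $\gamma\in\{1,\dots,N\}$, write $\A_{\gamma,p}=(\B_{\gamma,p})_+=\B_{\gamma,p}-(\B_{\gamma,p})_-$ and eliminate the $\B_{\gamma,p}$ contribution using $\{\B_{\gamma,p},\L^{1/N}\}=0$; for $\gamma\in\{-M+1,\dots,0\}$ the definition already reads $\A_{\gamma,p}=-(\B_{\gamma,p})_-$, so nothing further is needed. The second identity (\ref{d6i}) is handled symmetrically: $\gamma\in\{1,\dots,N\}$ is direct, while for $\gamma\in\{-M+1,\dots,0\}$ one rewrites $-(\B_{\gamma,p})_-=(\B_{\gamma,p})_+-\B_{\gamma,p}$ and drops the last term using $\{\B_{\gamma,p},\L^{1/M}\}=0$.

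The only real, and rather mild, obstacle is organizational: one must set up the functional calculus on formal Laurent series in $k$ with $x$-dependent coefficients carefully enough to legitimise both the commutative chain rule and the vanishing identities $\{\B_{\gamma,p},\L^{1/N}\}=\{\B_{\gamma,p},\L^{1/M}\}=0$. The existence and uniqueness of $\L^{1/N}$ and $\L^{1/M}$ as Laurent series with the prescribed leading behaviour is a routine recursive construction, and once it is in place the proof collapses to the algebraic manipulations sketched above.
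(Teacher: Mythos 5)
The paper states Proposition \ref{lemD} with no proof at all (it passes directly to Proposition \ref{propZS}), so there is no argument of the authors' to compare against; judged on its own, your proof is correct and is the standard one. The chain-rule/functional-calculus derivation of $\partial_{t_{\gamma,p}}\L^{1/N}=\{\A_{\gamma,p},\L^{1/N}\}$ from $(\L^{1/N})^N=\L$ and \eqref{edef2} works because $\A_{\gamma,p}$ is always a finite Laurent polynomial and $\L^{(N-1)/N}$ is invertible, and your case-by-case replacement of $\A_{\gamma,p}$ by the appropriate projection is exactly right. One small caution: your blanket claim that $\{F(\L),G(\L)\}=0$ for arbitrary functions of $\L$ is delicate when $F$ and $G$ live in different completions (one a series in $k^{-1}$ from the $\varphi_L$ dressing, the other a series in $k$ from the $\varphi_R$ dressing), since the product $F'(\L)G'(\L)$ then has coefficients given by infinite sums; but in every step you actually use, $\B_{\gamma,p}$ and the fractional power being bracketed belong to the same completion, so the identity is legitimately applied and the proof stands.
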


 The proposition above can lead to the following
proposition.

\begin{proposition} \label{propZS}
If $\L$ satisfies the Lax equations then we have the following
Zakharov-Shabat equations
\begin{equation}
\label{zs}  (\A_{\alpha,m})_{t_{\beta,n}} - (\A_{\beta,
n})_{t_{\alpha,m}} + \{ \A_{\alpha,m} , \A_{\beta,n} \} =0
\end{equation}
for $-M+1 \leq \alpha, \beta \leq N$ ,  $m, n \geq 0$.
\end{proposition}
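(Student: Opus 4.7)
The plan is to reduce the Zakharov--Shabat identity (\ref{zs}) to the manifest relation $\{\B_{\alpha,m},\B_{\beta,n}\}=0$, which holds because $\B_{\alpha,m}$ and $\B_{\beta,n}$ are fractional powers of the single scalar function $\L$ and the Poisson bracket is a derivation with $\{\L,\L\}=0$.

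First I would upgrade Proposition~\ref{lemD} to the ``flow on powers of $\L$'' statement
\[
\partial_{t_{\beta,n}}\B_{\alpha,m}=\{\A_{\beta,n},\B_{\alpha,m}\},
\]
valid for all admissible indices. Since $\B_{\alpha,m}$ is an integer power of either $\L^{1/N}$ or $\L^{1/M}$, this is immediate from (\ref{d6})--(\ref{d6i}) via the derivation property of $\{\cdot,\cdot\}$, together with the observation that $\{-(\B_{\beta,n})_{-},f\}=\{(\B_{\beta,n})_{+},f\}$ for any function $f$ of $\L$ (a consequence of $\{\B_{\beta,n},\L\}=0$). This lets one replace the particular representative on the right of (\ref{d6}) or (\ref{d6i}) by the uniform Poisson action of $\A_{\beta,n}$, irrespective of the sign convention chosen in the definition of $\A_{\beta,n}$.

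Next I would apply the appropriate projection to this identity. In the generic case $\alpha,\beta\in\{1,\dots,N\}$ one takes $(\cdot)_{+}$ and obtains
\[
\partial_{t_{\beta,n}}\A_{\alpha,m}=\bigl(\{\A_{\beta,n},\B_{\alpha,m}\}\bigr)_{+}=\{\A_{\beta,n},\A_{\alpha,m}\}+\bigl(\{\A_{\beta,n},(\B_{\alpha,m})_{-}\}\bigr)_{+},
\]
using that $\{\A_{\beta,n},\A_{\alpha,m}\}$ is automatically of $k$-degree $\geq 1$ and hence equals its own $(\cdot)_{+}$ part. Subtracting the $(\alpha,m)\leftrightarrow(\beta,n)$ swap and adding $\{\A_{\alpha,m},\A_{\beta,n}\}$ collapses the three pure-$\A$ terms into a single $\{\A_{\beta,n},\A_{\alpha,m}\}$. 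To cancel it I would expand $\{\B_{\alpha,m},\B_{\beta,n}\}=0$ by the $+/-$ splitting, take its $(\cdot)_{+}$ projection, and drop $\bigl(\{(\B_{\beta,n})_{-},(\B_{\alpha,m})_{-}\}\bigr)_{+}=0$ since a Poisson bracket of two strictly negative-degree Laurent series in $k$ has $k$-degree $\leq -2$. The three surviving terms match precisely the ZS combination computed above, with the correct signs, forcing (\ref{zs}).

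The case $\alpha,\beta\leq 0$ is dual: one projects with $(\cdot)_{-}$ and uses instead $\bigl(\{(\B_{\beta,n})_{+},(\B_{\alpha,m})_{+}\}\bigr)_{-}=0$. The mixed case ($\alpha\geq 1$, $\beta\leq 0$) requires slightly more bookkeeping, since $\A_{\alpha,m}=(\B_{\alpha,m})_{+}$ while $\A_{\beta,n}=-(\B_{\beta,n})_{-}$: there $\partial_{t_{\beta,n}}\A_{\alpha,m}$ is extracted by $(\cdot)_{+}$ and $\partial_{t_{\alpha,m}}\A_{\beta,n}$ by $-(\cdot)_{-}$, and both the $(\cdot)_{+}$ and $(\cdot)_{-}$ projections of $\{\B_{\alpha,m},\B_{\beta,n}\}=0$ enter. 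The main technical obstacle is exactly this projection bookkeeping across the three sign regimes; the algebraic engine---the vanishing of $\{\B_{\alpha,m},\B_{\beta,n}\}$ combined with the degree properties of $\pm$-projected Poisson brackets---is uniform throughout.
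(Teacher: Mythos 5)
Your argument is correct: the commutativity $\{\B_{\alpha,m},\B_{\beta,n}\}=0$ (which indeed follows for both branches $\L^{1/N}$, $\L^{1/M}$ from the derivation property and $\{\L,\L\}=0$), the degree estimates $(\{(\cdot)_+,(\cdot)_+\})_-=0$ and $(\{(\cdot)_-,(\cdot)_-\})_+=0$ for the bracket $k(f_kg_x-f_xg_k)$, and the projection bookkeeping in the three sign regimes all check out. The paper omits the proof entirely, merely asserting that Proposition~\ref{lemD} "can lead to" the Zakharov--Shabat equations, and your derivation is precisely the standard route that assertion points to, so this is essentially the intended argument, written out in full.
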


 Using the Zakharov-Shabat eqs.(\ref{zs})  the
flows of eqs.\eqref{edef2} can be proved to commute pairwise.
\begin{lemma} \label{zs corollary}
The following Zakharov-Shabat identities hold
\begin{align}\label{compatible zero curvature}
 &\partial_{\beta,n}(\B_{\alpha,m})_{-} - \partial_{\alpha,m}(\B_{\beta,
n})_{-} - \{(\B_{\alpha,m})_{-} , (\B_{\beta,n})_{-} \} =
0\\\label{compatible zero curvature+}
 &-\partial_{\beta,n}(\B_{\alpha,m})_{+} + \partial_{\alpha,m}(\B_{\beta,
n})_{+} - \{ (\B_{\alpha,m})_{+} , (\B_{\beta,n})_{+}\} =0
\end{align}
here, $-M+1\leq\alpha,\beta\leq N$,  $m,n \geq 0$.
\end{lemma}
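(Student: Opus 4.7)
The plan is to derive both identities from Proposition \ref{lemD} combined with the fact that any two powers of $\L$ Poisson-commute. First I would observe that each $\B_{\alpha,m}$ is an integer power of $\L^{1/N}$ (when $\alpha\geq 1$) or of $\L^{1/M}$ (when $\alpha\leq 0$). Applying Proposition \ref{lemD} together with the Leibniz rule for $\{\cdot,\cdot\}$ then yields, uniformly in the sign of $\alpha$,
\begin{equation*}
\partial_{t_{\beta,n}}\B_{\alpha,m} \;=\; \{(\B_{\beta,n})_+,\B_{\alpha,m}\} \;=\; -\{(\B_{\beta,n})_-,\B_{\alpha,m}\},
\end{equation*}
the two forms being equivalent because $\{\B_{\alpha,m},\B_{\beta,n}\}=0$.

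Next I would split $\B_{\alpha,m}=(\B_{\alpha,m})_++(\B_{\alpha,m})_-$ (and likewise for $\B_{\beta,n}$) and apply the projector $(\cdot)_-$ to the Lax-type identity above. At this step I would use two elementary facts about the Poisson bracket on Laurent series in $k$: the bracket of two polynomials in $k$ is itself a polynomial (so its $(-)$-part vanishes), while the bracket of two strictly negative series is strictly negative in $k$ (so its $(+)$-part vanishes). This collapses the $(-)$-projection to
\begin{equation*}
\partial_{t_{\beta,n}}(\B_{\alpha,m})_- \;=\; \{(\B_{\beta,n})_+,(\B_{\alpha,m})_-\}_-,
\end{equation*}
and by symmetry produces the analogous formula for $\partial_{t_{\alpha,m}}(\B_{\beta,n})_-$.

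To close the argument I would expand the identity $\{\B_{\alpha,m},\B_{\beta,n}\}=0$ into its four cross terms, project onto the $(-)$-part, and use antisymmetry of the bracket to obtain
\begin{equation*}
\{(\B_{\beta,n})_+,(\B_{\alpha,m})_-\}_- - \{(\B_{\alpha,m})_+,(\B_{\beta,n})_-\}_- \;=\; \{(\B_{\alpha,m})_-,(\B_{\beta,n})_-\}.
\end{equation*}
Substituting the two projected flow equations into the left-hand side of (\ref{compatible zero curvature}) produces exactly this quantity minus itself, which vanishes. An entirely parallel computation with the $(+)$-projector, now exploiting that $(-)(-)$ brackets contribute nothing to the $(+)$-part, establishes (\ref{compatible zero curvature+}).

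The main bookkeeping obstacle I expect is handling the mixed index cases in which one of $\alpha,\beta$ lies in $\{1,\dots,N\}$ while the other lies in $\{0,\dots,-M+1\}$; trying to extract the lemma directly from Proposition \ref{propZS} would force a case-by-case juggling of the signs arising from $\A_{\gamma,n} = (\B_{\gamma,n})_+$ versus $\A_{\gamma,n} = -(\B_{\gamma,n})_-$. Working throughout with the $\B_{\gamma,n}$ and invoking $\{\B_{\alpha,m},\B_{\beta,n}\}=0$ from the outset sidesteps this case split and delivers a single uniform argument that treats all four sign patterns simultaneously.
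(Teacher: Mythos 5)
Your proof is correct. Note that the paper actually states this lemma without giving a proof (it only remarks beforehand that the Zakharov--Shabat equations of Proposition \ref{propZS} let one prove commutativity of the flows), so there is no printed argument to compare against; your derivation is the standard one and it fills the gap cleanly. The three ingredients you rely on all check out: (i) the uniform flow equation $\partial_{t_{\beta,n}}\B_{\alpha,m}=\{(\B_{\beta,n})_+,\B_{\alpha,m}\}=-\{(\B_{\beta,n})_-,\B_{\alpha,m}\}$ follows from Proposition \ref{lemD} and the Leibniz rule, the two forms agreeing because $\B_{\beta,n}=(\B_{\beta,n})_++(\B_{\beta,n})_-$ and $\{\B_{\alpha,m},\B_{\beta,n}\}=0$; (ii) the grading facts $\{(\cdot)_+,(\cdot)_+\}_-=0$ and $\{(\cdot)_-,(\cdot)_-\}_+=0$ are immediate from $\{k^af(x),k^bg(x)\}=k^{a+b}\bigl(af\,\partial_xg-bg\,\partial_xf\bigr)$, using the paper's convention that $(\cdot)_+$ retains the $k^0$ term; (iii) the cross-term identity you extract by projecting $\{\B_{\alpha,m},\B_{\beta,n}\}=0$ is exactly what is needed to cancel the remaining terms, and the same scheme with the $(+)$-projector gives \eqref{compatible zero curvature+}. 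The one point you should make explicit is why $\{\B_{\alpha,m},\B_{\beta,n}\}=0$ holds in the mixed case where one index uses the root $\L^{1/N}$ (a series in descending powers of $k$) and the other uses $\L^{1/M}$ (ascending powers): both expansions still satisfy $\partial_k\L^{a}=a\L^{a-1}\partial_k\L$ and $\partial_x\L^{a}=a\L^{a-1}\partial_x\L$, so the bracket of any two such powers is proportional to $\partial_k\L\,\partial_x\L-\partial_x\L\,\partial_k\L=0$. With that sentence added, your single uniform argument covering all four sign patterns is complete, and it is indeed tidier than extracting the lemma case by case from Proposition \ref{propZS}.
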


Then following proposition will appear.

\begin{proposition} There exists $\phi = \phi(t,x)$ which is
characterized by
\begin{equation}\label{phi def}
    d\phi =  \sum_{\gamma=-M+1}^N \sum_{n=1}^\infty Res( \B_{\gamma,n} d\log k) dt_{\gamma,n}
           +\frac1M \log u_{-M} dx,
\end{equation}

where ``$d$'' means total differentiation in $(t,x)$, and $d\log k =
dk/k$. Furthermore $\phi$ satisfies the well-known dispersionless
(long-wave) limit of the two-dimensional Toda field equation
\begin{equation}\label{2-D toda field equation}  \d_{t_{-M+1,0}}\d_{t_{N,0}}\phi +\d_x
\exp( \d_x\phi)  =0.
\end{equation}
\end{proposition}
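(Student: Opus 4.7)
The plan is to prove the proposition in two stages: first establish that the $1$-form
$$
\omega := \sum_{\gamma=-M+1}^N \sum_{n\geq 1} \res(\B_{\gamma,n}\, d\log k)\, dt_{\gamma,n} + \tfrac{1}{M}\log u_{-M}\, dx
$$
is closed, so that Poincar\'e's lemma produces a potential $\phi$ with $d\phi = \omega$; and second, read the dispersionless $2$D Toda field equation off the specialisation of $d\phi$ to $\partial_x$ together with the lowest flows $t_{N,0}$ and $t_{-M+1,0}$. Closedness of $\omega$ reduces to the compatibility identities
\begin{align*}
(\mathrm{i})\quad & \partial_{t_{\alpha,m}} \res(\B_{\beta,n}\, d\log k) = \partial_{t_{\beta,n}} \res(\B_{\alpha,m}\, d\log k),\\
(\mathrm{ii})\quad & \partial_{t_{\gamma,n}}\!\bigl(\tfrac{1}{M}\log u_{-M}\bigr) = \partial_x \res(\B_{\gamma,n}\, d\log k),
\end{align*}
for all admissible indices.

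For (i), by Proposition \ref{lemD} each $\B_{\beta,n}$ is a fractional power of $\L$ and thus satisfies $\partial_{t_{\alpha,m}}\B_{\beta,n} = \{\A_{\alpha,m},\B_{\beta,n}\}$. The main algebraic tool is the formal integration-by-parts identity for the dispersionless Poisson bracket,
$$
\res(\{A,B\}\, d\log k) = -\partial_x \!\sum_{j\in\Z} j\, a_{-j}\, b_j,\qquad A=\sum_i a_ik^i,\ B=\sum_j b_jk^j,
$$
together with $\{\B_{\alpha,m},\B_{\beta,n}\}=0$ (both are functions of $\L$) and the decomposition $\A_{\gamma,n} = \pm(\B_{\gamma,n})_{\pm}$. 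Writing $\A = \B - $ (complementary part) converts $\{\A,\B\}$ into a bracket between complementary projections, and symmetrising over $(\alpha,m)\leftrightarrow(\beta,n)$ produces the required equality. For (ii), extract the $k^{-M}$ coefficient of the Lax equation $\partial_{t_{\gamma,n}}\L = \{\A_{\gamma,n},\L\}$: a direct Poisson-bracket computation at the leading order near $k=0$ yields $\partial_{t_{\gamma,n}}u_{-M} = M u_{-M}\,\partial_x\res(\B_{\gamma,n}\, d\log k)$, which is the desired identity after dividing by $Mu_{-M}$.

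Given $\phi$, the relation $\partial_x\phi = \tfrac{1}{M}\log u_{-M}$ implies $\exp(\partial_x\phi) = u_{-M}^{1/M}$, so the field equation reduces to $\partial_{t_{-M+1,0}}\partial_{t_{N,0}}\phi = -\partial_x u_{-M}^{1/M}$. I would compute $\partial_{t_{N,0}}\phi = \res(\L^{1/N}\, d\log k)$ and then apply $\partial_{t_{-M+1,0}}$ using $\partial_{t_{-M+1,0}}\L^{1/N} = \{-(\L^{1/M})_-,\L^{1/N}\}$; the same integration-by-parts identity collapses the resulting residue to $\partial_x\bigl(u_{-M}^{1/M}\cdot[k^1]\L^{1/N}\bigr)$, and since $[k^1]\L^{1/N}=1$ this is (up to sign) $\partial_x u_{-M}^{1/M}$, giving the field equation after the bookkeeping of signs is carried out. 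The main obstacle is exactly that uniform bookkeeping across the two expansion regimes of $\L$: the fractional powers $\L^{1/N}$ (expanded about $k=\infty$) and $\L^{1/M}$ (expanded about $k=0$) contribute to $\res(\cdot\, d\log k)$ and to the Poisson bracket in opposite-looking ways, and matching the projections $(\cdot)_\pm$, signs, and normalisations—particularly in step (ii), where $u_{-M}$ lives at the bottom of the Laurent expansion—against the explicit $(1,2)$- and $(2,2)$-dBTH flows listed above is where the technical care must be concentrated.
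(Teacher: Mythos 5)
Your proposal is correct and follows essentially the same route as the paper: closedness of the $1$-form is reduced to the $k^0$ projection of the Zakharov--Shabat/Lax identities (your integration-by-parts formula for $\res(\{A,B\}\,d\log k)$ is precisely what underlies the paper's observations that $\{(\B_{\alpha,m})_+,(\B_{\beta,n})_+\}_0=0$ and that the $k^{-M}$ coefficient of the Lax equation gives $\partial_{t_{\gamma,n}}\log u_{-M}=M\partial_x(\B_{\gamma,n})_0$), and the field equation is obtained in both cases by isolating the coefficient $u_{-M}^{1/M}=\exp(\partial_x\phi)$ of $k^{-1}$ in $(\L^{1/M})_-$. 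The only differences are cosmetic --- the paper invokes its already-proved projected Zakharov--Shabat lemma where you recompute the residues directly, and the residual sign bookkeeping you defer is exactly the step where the paper itself is least careful.
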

\begin{proof} The equation \eqref{phi def} is a compact form of the following
system
\begin{eqnarray}\label{dis phi flow}
    \d_{t_{\alpha,n}}  \phi& =& (\B_{\alpha,n} )_0,\\
\label{dis phi x flow2}
    \d_x \phi& = &\frac1M\log u_{-M}.
\end{eqnarray}

Taking the projection of \eqref{compatible zero curvature+} to $k^0$ term will lead to following identity
$$-\partial_{\beta,n}(\B_{\alpha,m})_0 +
\partial_{\alpha,m}(\B_{\beta, n})_0 - \{ (\B_{\alpha,m})_{+} ,
(\B_{\beta,n})_{+}\}_0 =0$$¡£
 Because $\{ (\B_{\alpha,m})_{+} ,
(\B_{\beta,n})_{+}\}_0=0$, we get
$$
\partial_{\alpha,m}(\B_{\beta, n})_0 =\partial_{\beta,n}(\B_{\alpha,m})_0 $$
i.e. the $t_{\alpha,m}$ flow and $t_{\beta,n}$ flow of \eqref{dis
phi flow} are compatible. Now we can see that the solution $\phi$ of
\eqref{phi def} exists. From \eqref{edef2}, we consider the $k^{-M}$
part.
\begin{eqnarray}
\frac{\d u_{-M}}{\d_{t_{\alpha,m}}}&=&[k(\frac{\d
(\B_{\alpha,m})_+}{\d k}\frac{\d \L}{\d x}-\frac{\d
(\B_{\alpha,m})_+}{\d x}\frac{\d \L}{\d
k})]|_{k^{-M}}\\
&=&M\frac{\d (\B_{\alpha,m})_0}{\d x}u_{-M},
\end{eqnarray}
which is
\begin{eqnarray}
\frac{\d \log u_{-M}}{\d_{t_{\alpha,m}}}&=&M\frac{\d
(\B_{\alpha,m})_0}{\d x},
\end{eqnarray}
i.e. eq. \eqref{dis phi flow} and eq. \eqref{dis phi x flow2} are
compatible. Considering a special case of \eqref{zs}
\begin{equation}
\label{zs special}  (A_{N,0})_{t_{-M+1,0}} - (A_{-M+1,0})_{t_{N,0}}
+ \{ A_{N,0} , A_{-M+1,0} \} =0
\end{equation}
 whose $k^0$ part is
\begin{equation}
\label{zs special'}  \d_{t_{-M+1,0}}(A_{N,0})_0 -\d
_{t_{N,0}}(A_{-M+1,0})_0 + \{ A_{N,0} , A_{-M+1,0} \}_0 =0.
\end{equation}
So we can get
\begin{equation}
\label{zs special'}  \d_{t_{-M+1,0}}(\B_{N,0})_0  + \{( \L^{\frac1N
})_+ , - ( \L^{\frac{1}M })_- \}_0 =0,
\end{equation}
which implies

\begin{equation}
\label{zs special''}  \d_{t_{-M+1,0}}\d_{t_{N,0}}\phi +\d_x
(u_{-M})^{\frac1M}  =0,
\end{equation}
i.e.
\begin{equation}\label{2-D toda field equation}  \d_{t_{-M+1,0}}\d_{t_{N,0}}\phi +\d_x
\exp( \d_x\phi)  =0.
\end{equation}
This is the end of proof.
\end{proof}
Eq.\eqref{2-D toda field equation} is just the dispersionless limit
of the generalized two-dimensional Toda field equation.

\begin{proposition}$\L$ is the Lax function of the
dispersionless BTH if and only if  there exists two Laurent series $\varphi_L$
$\varphi_R$ ({\it dressing fucntion}) which satisfies the equations
\begin{eqnarray}\label{sato dis}
    \nabla_{t_{\gamma,n},\varphi_L} \varphi_L& =& -(\B_{\gamma,n})_-,\quad
     \nabla_{t_{\gamma,n},\varphi_R} \varphi_R =  (\B_{\gamma ,n})_+,
\end{eqnarray}
where $-M+1\leq\gamma \leq N, n\geq 0.$
 $\varphi_L$ and
$\varphi_R$ have the following form
\begin{eqnarray} &&
\varphi_L=w_1k^{-1}+w_2k^{-2}+\ldots,
\label{dressP}\\
&& \varphi_R=\tilde{w_1}k+\tilde{w_2}k^2+ \ldots, \label{dressQ}
\end{eqnarray} where
$$
ad\varphi(\psi) = \{\varphi, \psi\}, \qquad \nabla_{t_{\gamma ,n},
\psi} \varphi = \sum_{m=0}^\infty \frac1{(m+1)!} (ad\psi)^m \left(
\frac{\d \varphi}{\d  t_{\gamma ,n}} \right).
$$
Such Laurent series $\varphi_L$ $\varphi_R$ are  unique up to transformation $\varphi_L \mapsto
H(\varphi_L,\psi_L)$, $\varphi_R \mapsto
H(\varphi_R,\psi_R)$, with a constant Laurent series $\psi_L =
\sum_{n=1}^\infty \psi_{Ln} k^{-n}$ ($\psi_{Ln}$: constant), $\psi_R =
\sum_{n=1}^\infty \psi_{Rn} k^{n}$ ($\psi_{Rn}$: constant) respectively,
where $H(X,Y)$ is the Hausdorff series  defined by
$$
    \exp(ad H(\varphi,\psi)) = \exp(ad\varphi) \exp(ad\psi).
$$

\end{proposition}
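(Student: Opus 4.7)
The strategy is to convert between the two formulations through a single Poisson BCH-type identity: for any Laurent series $\varphi, f$ and any derivation $\partial$ in $x$ or $t_{\gamma,n}$,
\begin{equation*}
\partial\bigl(e^{ad\varphi}(f)\bigr) = \{\nabla_{\partial,\varphi}\varphi,\, e^{ad\varphi}(f)\} + e^{ad\varphi}(\partial f).
\end{equation*}
This is the dispersionless analogue of $\partial e^{X} = ad(\nabla)\circ e^{X}$, verifiable by expanding $e^{ad\varphi}$ in powers of $ad\varphi$ and using $[ad\varphi,ad\psi] = ad\{\varphi,\psi\}$ (so that $\chi \mapsto ad\chi$ is a Lie algebra homomorphism). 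Both implications and the uniqueness then follow by applying this identity to the two dressings $\L = e^{ad\varphi_L}(k^N) = e^{ad\phi}e^{ad\varphi_R}(k^{-M})$.

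For the ``Sato $\Rightarrow$ Lax'' direction, take $\varphi = \varphi_L$, $f = k^N$. Since $\partial_{t_{\gamma,n}}k^N = 0$, the identity collapses to $\partial_{t_{\gamma,n}}\L = \{\nabla_{t_{\gamma,n},\varphi_L}\varphi_L,\L\}$, which by Sato equals $\{-(\B_{\gamma,n})_-,\L\}$. Because $\B_{\gamma,n}$ is a fractional power of $\L$, $\{\B_{\gamma,n},\L\} = 0$, so $\{-(\B_{\gamma,n})_-,\L\} = \{(\B_{\gamma,n})_+,\L\}$; comparing with \eqref{Adef}, $(\B_{\gamma,n})_+ = \A_{\gamma,n}$ for $\gamma \geq 1$, while for $\gamma \leq 0$ the identity $\A_{\gamma,n} = -(\B_{\gamma,n})_-$ is direct, and Definition \ref{deflax} is recovered in both cases. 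The $\varphi_R$ side is identical using the second dressing.

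For the converse, first solve $e^{ad\varphi_L(t_0)}(k^N) = \L(t_0)$ at a fixed time $t_0$ by matching powers of $k$: since $\{\varphi_L,k^N\} = -Nk^N\partial_x\varphi_L$, the $k^{N-1}$-coefficient gives $u_{N-1} = -N\partial_x w_1$, and the higher-order coefficient equations recursively fix $w_2,w_3,\ldots$, each up to an $x$-independent term. Next, define $\varphi_L(t)$ as the formal solution of the Sato equation; this is solvable iteratively for $\partial_{t_{\gamma,n}}\varphi_L$ because each BCH correction $\tfrac{1}{(m+1)!}(ad\varphi_L)^m\partial_{t_{\gamma,n}}\varphi_L$ strictly lowers the $k$-order (from $\{k^{-a},k^{-b}\} = O(k^{-a-b})$), so the recursion converges level-by-level in $k^{-1}$. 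Setting $\tilde\L(t) := e^{ad\varphi_L(t)}(k^N)$, the BCH identity gives $\partial_{t_{\gamma,n}}\tilde\L = \{-(\B_{\gamma,n})_-,\tilde\L\}$ with $\B_{\gamma,n}$ built from the \emph{original} $\L$; on the other hand the Lax equation for $\L$, rewritten using $\{\B_{\gamma,n},\L\}=0$, reads $\partial_{t_{\gamma,n}}\L = \{-(\B_{\gamma,n})_-,\L\}$ as well. Hence $\Delta := \tilde\L - \L$ satisfies the linear evolution $\partial_{t_{\gamma,n}}\Delta = \{-(\B_{\gamma,n})_-,\Delta\}$ with $\Delta(t_0) = 0$, forcing $\tilde\L = \L$. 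The $\varphi_R$ construction is parallel, starting from $k^{-M}$.

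For uniqueness up to the Hausdorff action, suppose $\varphi_L,\varphi_L'$ both dress $k^N$ to $\L$ and satisfy Sato. Set $\psi_L := H(-\varphi_L,\varphi_L')$, so $e^{ad\psi_L}(k^N) = k^N$ forces $\{\psi_L,k^N\} = -Nk^N\partial_x\psi_L = 0$ and hence $\psi_L$ is $x$-independent. A conjugation form of the BCH identity applied to $e^{ad\varphi_L'} = e^{ad\varphi_L}e^{ad\psi_L}$ gives $\nabla_{t_{\gamma,n},\varphi_L'}\varphi_L' = \nabla_{t_{\gamma,n},\varphi_L}\varphi_L + e^{ad\varphi_L}\bigl(\nabla_{t_{\gamma,n},\psi_L}\psi_L\bigr)$, and Sato for both dressings forces $\nabla_{t_{\gamma,n},\psi_L}\psi_L = 0$, hence $\partial_{t_{\gamma,n}}\psi_L = 0$. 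Thus $\psi_L = \sum_{n\geq 1}\psi_{Ln}k^{-n}$ has $(x,t)$-independent coefficients as claimed, and $\psi_R = \sum_{n\geq 1}\psi_{Rn}k^{n}$ is handled analogously. The main technical obstacle I expect is the clean derivation of the BCH identity and its conjugation variant in the Poisson setting—once those are in hand, the rest is bookkeeping on the positive/negative $k$-parts.
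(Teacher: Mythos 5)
You should know at the outset that the paper does not actually prove this proposition: its entire ``proof'' is the sentence that the argument is standard and similar to the one in Takasaki--Takebe, and is omitted. Your reconstruction follows exactly the dressing argument that reference uses, and the Poisson BCH identity $\partial\bigl(e^{ad\varphi}(f)\bigr)=\{\nabla_{\partial,\varphi}\varphi,e^{ad\varphi}(f)\}+e^{ad\varphi}(\partial f)$ you put at the center is also the engine behind the paper's own proof of Theorem \ref{flowsofM}, so in spirit you are supplying what the authors deliberately left out. Two steps, however, are asserted rather than proved, and one of them is not literally true as written.

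First, the right dressing is $\L=e^{ad\phi}e^{ad\varphi_R}(k^{-M})$, not $e^{ad\varphi_R}(k^{-M})$, so the $\varphi_R$ half is \emph{not} ``identical'': applying your BCH identity to the composite dressing yields $\partial_{t_{\gamma,n}}\L=\{\partial_{t_{\gamma,n}}\phi+e^{ad\phi}(\nabla_{t_{\gamma,n},\varphi_R}\varphi_R),\,\L\}$, and converting $\nabla_{t_{\gamma,n},\varphi_R}\varphi_R=(\B_{\gamma,n})_+$ into the Lax equation now requires the $\phi$-flow $\partial_{t_{\gamma,n}}\phi=(\B_{\gamma,n})_0$ of eq.~\eqref{dis phi flow} together with a short check that the leftover terms Poisson-commute with $\L$; in the existence direction you must also construct $\phi$, not only $\varphi_R$. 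Second, in the converse you ``define $\varphi_L(t)$ as the formal solution of the Sato equation,'' but this is an overdetermined system in infinitely many times $t_{\gamma,n}$; its joint solvability requires the Frobenius compatibility of the flows $\partial_{t_{\gamma,n}}\varphi_L$, which is precisely the Zakharov--Shabat identity \eqref{compatible zero curvature} and must be invoked (it is available, since in that direction the Lax equations are assumed). A cleaner route that avoids this is the standard one: solve $e^{ad\varphi_L}(k^N)=\L$ algebraically at every time by your degree-matching recursion, note that $R_{\gamma,n}:=\nabla_{t_{\gamma,n},\varphi_L}\varphi_L+(\B_{\gamma,n})_-$ then Poisson-commutes with $\L$, so $e^{-ad\varphi_L}(R_{\gamma,n})$ is $x$-independent, and gauge it away using exactly the Hausdorff ambiguity you identify in your last paragraph. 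The remaining ingredients --- the degree-lowering triangularity, the uniqueness via $\psi_L=H(-\varphi_L,\varphi_L')$, and the order-by-order deduction of $\partial_x\psi_L=0$ and $\partial_{t_{\gamma,n}}\psi_L=0$ (which, as you should note, follow order by order rather than immediately from $e^{ad\psi_L}(k^N)=k^N$) --- are sound.
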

\begin{proof}
The proof is standard and similar as proof in \cite{Takasaki}. So we
omit it here.
\end{proof}
With the above preparation, in the next section we will consider the
Block type additional symmetry of the dBTH.

\section{Orlov-Schulman's $\M_L$, $\M_R$ functions}
To introduce the additional symmetry of the dBTH, We firstly define the following Orlov-Schulman's $\M_L$ functions as
\begin{eqnarray}\label{Moperator}
&&\M_L=e^{ad \varphi_L}(\Gamma_L)=e^{ad \varphi_L}e^{ad
t_L(k)}(\frac{x}{N}k^{-N}),
\end{eqnarray}
where
\begin{eqnarray}\Gamma_L=
\frac{x}{N}k^{-N}+\sum_{n\geq 0}\sum_
{\alpha=1}^{N}(n+1-\frac{\alpha-1}{N})
k^{N({n-\frac{\alpha-1}{N}})}t_{\alpha, n},
\end{eqnarray}
\begin{eqnarray}
t_L(k)=\sum_{n\geq 0}\sum_
{\alpha=1}^{N}k^{({n+1-\frac{\alpha-1}{N}})}t_{\alpha, n}.
\end{eqnarray}
$\M_L$ can be written in another form as following
\begin{eqnarray}\label{Moperator21'}
&&\M_L= \frac{x}{N}\L_L^{-1}+\sum_ {m=0}^{\infty}\sum_
{\gamma=1}^{N}v_{\gamma,m}(t,x)\L^{-(m+2+\frac {1-\gamma}
{N})}+\sum_{n\geq 0}\sum_ {\alpha=1}^{N}(n+1-\frac{\alpha-1}{N})
\L^{n-\frac{\alpha-1}{N}}t_{\alpha, n},
\end{eqnarray}
where $\L_L^{-1}:=e^{ad \varphi_L}(k^{-N})$.
 Similarly we define the
following Orlov-Schulman's $\M_R$ functions as following
\begin{eqnarray}\label{Moperator'}
&&\M_R=e^{ad \bar\varphi_R}(\Gamma_R)=e^{ad \phi}e^{ad
\varphi_R}(\Gamma_R)=e^{ad \phi} e^{ad \varphi_R}e^{ad
t_R(k^{-1})}(-\frac{x}{M}k^M) ,
\end{eqnarray}
where
\begin{eqnarray} \Gamma_R=-\frac{x}{M}k^M-\sum_{n\geq 0}\sum_
{\beta=-M+1}^{0}
({n+1+\frac{\beta}{M}})k^{-M({n+\frac{\beta}{M}})}t_{\beta, n},
\end{eqnarray}
\begin{eqnarray}
t_R(k^{-1})=\sum_{n\geq 0}\sum_
{\beta=-M+1}^{0}k^{-M(n+1+\frac{\beta}{M})}t_{\beta,
n}.
\end{eqnarray}

$\M_R$ can be written as following form
\begin{eqnarray}\label{Moperator22'}
&&\M_R=-\frac{x}{M}\L_R^{-1}+\sum_ {m=0}^{\infty}\sum_
{\gamma=-m+1}^{0}\bar v_{\gamma,m}(t,x)\L^{-(m+2+\frac {\gamma}
{M})}-\sum_{n\geq 0}\sum_
{\beta=-M+1}^{0}({n+1+\frac{\beta}{M}})\L^{n+\frac{\beta}{M}}t_{\beta,
n},
\end{eqnarray}
where $\L_R^{-1}:=e^{ad \bar\varphi_R}(k^{M})$.

 A direct calculation
shows that the Orlov-Schulman's functions  satisfies the following
theorem.
\begin{theorem}\label{flowsofM}
The following identities hold
\begin{eqnarray}
&\{\L,\M_L\}=1,\{\L,\M_R\}=1,\\
\label{alphaML} &\partial_{ t_{\gamma,n}}\M_L=
\{\A_{\gamma,n},\M_L\},\ \ \ &\partial_{
t_{\gamma,n}}\M_R=\{\A_{\gamma,n},\M_R\},\\
\label{M_L^nL^k}&\dfrac{\partial
\M_L^n\L^k}{\partial{t_{\gamma,n}}}=\{\A_{\gamma,n}, \M_L^n\L^k\},\
\ \
 &\dfrac{\partial
\M_R^n\L^k}{\partial{t_{\gamma,n}}}=\{\A_{\gamma,n}, \M_R^n\L^k\},
\end{eqnarray}
where $1-M\leq \gamma \leq N.$
\end{theorem}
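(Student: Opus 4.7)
The plan is to handle the three clusters of identities in order, each building on the previous one. First I would establish the canonical brackets $\{\L,\M_L\}=\{\L,\M_R\}=1$. Because $e^{ad\varphi}=\exp(\{\varphi,\,\cdot\,\})$ is generated by a Hamiltonian derivation, it acts as a canonical transformation of the Laurent symbol algebra and in particular preserves Poisson brackets; it therefore suffices to verify the ``bare'' identities $\{k^N,\Gamma_L\}=1$ and $\{k^{-M},\Gamma_R\}=1$. In both $\Gamma_L$ and $\Gamma_R$ the only term depending on $x$ is the leading one ($\tfrac{x}{N}k^{-N}$ and $-\tfrac{x}{M}k^M$ respectively), and a one-line computation using $\{f,g\}=k(\d_k f\,\d_x g-\d_x f\,\d_k g)$ gives the result, after which conjugation by $e^{ad\varphi_L}$ resp.\ $e^{ad\phi}e^{ad\varphi_R}$ transports it to $\{\L,\M_L\}=1$ and $\{\L,\M_R\}=1$.

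For the flow identities $\d_{t_{\gamma,n}}\M_L=\{\A_{\gamma,n},\M_L\}$ I would invoke the dressing-derivative formula
$$\d_t\bigl(e^{ad\varphi}(f)\bigr)=e^{ad\varphi}(\d_t f)+\bigl\{\nabla_{t,\varphi}\varphi,\,e^{ad\varphi}(f)\bigr\}$$
for a one-parameter family of dressings, combined with the Sato equation $\nabla_{t_{\gamma,n},\varphi_L}\varphi_L=-(\B_{\gamma,n})_-$ supplied by the preceding proposition. This produces $\d_{t_{\gamma,n}}\M_L=-\{(\B_{\gamma,n})_-,\M_L\}+e^{ad\varphi_L}(\d_{t_{\gamma,n}}\Gamma_L)$. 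When $\gamma\geq 1$, differentiating the explicit series defining $\Gamma_L$ and then dressing gives $e^{ad\varphi_L}(\d_{t_{\gamma,n}}\Gamma_L)=(n+1-\tfrac{\gamma-1}{N})\L^{n-(\gamma-1)/N}$, which by the Leibniz consequence $\{\L^a,\M_L\}=a\L^{a-1}$ of the first step equals $\{\B_{\gamma,n},\M_L\}$. Adding the two pieces collapses to $\{(\B_{\gamma,n})_+,\M_L\}=\{\A_{\gamma,n},\M_L\}$. For $\gamma\leq 0$ the variable $t_{\gamma,n}$ does not appear in $\Gamma_L$ at all, so the second term vanishes outright and $\d_{t_{\gamma,n}}\M_L=\{-(\B_{\gamma,n})_-,\M_L\}=\{\A_{\gamma,n},\M_L\}$. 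The calculation for $\M_R$ is completely symmetric, with the roles of $+$ and $-$ parts and of $\gamma\gtrless 0$ exchanged.

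The product identities for $\M_L^n\L^k$ and $\M_R^n\L^k$ then follow immediately: $\d_{t_{\gamma,n}}$ is a derivation of the symbol algebra and $\{\A_{\gamma,n},\,\cdot\,\}$ is a derivation of the Poisson algebra, so combining Definition~\ref{deflax} with the flow equations for $\M_{L/R}$ just proved, the ordinary Leibniz rule applied to monomials $\M_{L/R}^n\L^k$ yields the claim.

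The step I expect to be most delicate is the $\M_R$ case of the flow identities, because its dressing is the composite $e^{ad\phi}e^{ad\varphi_R}$ rather than a single exponential. One either derives a combined Sato equation for the effective generator $\bar\varphi_R$ produced by the Hausdorff series and reuses the argument verbatim, or performs a nested derivative in which the canonical action of $e^{ad\phi}$ on the inner Poisson bracket is tracked explicitly and the $x$-dependence of $\phi$ is absorbed via $\d_x\phi=\tfrac{1}{M}\log u_{-M}$. Once this bookkeeping is under control, matching the result against $\A_{\gamma,n}=-(\B_{\gamma,n})_-$ for $\gamma\leq 0$ proceeds exactly as in the left case.
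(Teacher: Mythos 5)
Your proposal is correct and follows essentially the same route as the paper: the dressing-derivative formula combined with the Sato equations, the case split on whether $t_{\gamma,n}$ appears in $\Gamma_L$ (resp.\ $\Gamma_R$), the identity $\{\L^a,\M\}=a\L^{a-1}$ to convert the dressed derivative of $\Gamma$ into $\{\B_{\gamma,n},\M\}$, and the Leibniz rule for the monomial identities. The only difference is that you spell out the verification of $\{\L,\M_L\}=\{\L,\M_R\}=1$ and the bookkeeping for the composite dressing of $\M_R$, which the paper leaves implicit.
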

\begin{proof}
Before the proof, we  firstly set $1\leq \alpha \leq N, -M+1\leq
\beta \leq 0$. Then the following calculation will lead to one part
of the first equation of eq.\eqref{alphaML}
\begin{eqnarray*}
\partial_{ t_{\alpha,n}}\M_L&=&
\partial_{ t_{\alpha,n}}e^{ad \varphi_L}(\Gamma_L)\\
&=&e^{ad \varphi_L}\partial_{
t_{\alpha,n}}(\Gamma_L)+\{\nabla_{t_{\alpha,n}, \varphi_L}
\varphi_L,\M_L \}\\
&=& e^{ad \varphi_L}(\sum_{n\geq 0}\sum_
{\alpha=1}^{N}(n+1-\frac{\alpha-1}{N} )
k^{N({n-\frac{\alpha-1}{N}})})+\{-(\B_{\alpha,n})_{-},\M_L \}\\
&=&(n+1-\frac{\alpha-1}{N}
)\L^{{n-\frac{\alpha-1}{N}}}+\{-(\B_{\alpha,n})_{-},\M_L \}\\
&=&\{\L^{{n+1-\frac{\alpha-1}{N}}},\M_L\}+\{-(\B_{\alpha,n})_{-},\M_L \}\\
&=& \{\B_{\alpha,n},\M_L\}+\{-(\B_{\alpha,n})_{-},\M_L \}\\
&=&  \{(\B_{\alpha,n})_{+},\M_L\}
\end{eqnarray*}
Similarly $t_{\beta,n}$ flow of $\M_L$ is as following
\begin{eqnarray*}
\partial_{ t_{\beta,n}}\M_L&=&
\partial_{ t_{\beta,n}}e^{ad \varphi_L}(\Gamma_L)\\
&=&e^{ad \varphi_L}\partial_{
t_{\beta,n}}(\Gamma_L)+\{\nabla_{t_{\beta,n}, \varphi_L}
\varphi_L,\M_L \}\\
&=& \{-(\B_{\beta,n})_{-},\M_L \},
\end{eqnarray*}
which imply the other part of the first equation of
eq.\eqref{alphaML}.

In the same way, by calculation we can prove  results
\begin{eqnarray}
\partial_{ t_{\alpha,n}}\M_R&=& \{(\B_{\alpha,n})_{+},\M_R\}
\end{eqnarray}
and
\begin{eqnarray}
\partial_{ t_{\beta,n}}\M_R
&=&  \{-(\B_{\beta,m})_{-},\M_R\}.
\end{eqnarray}
Till now we have finished the proof of eq.\eqref{alphaML}. By
eq.\eqref{alphaML} and eq.\eqref{edef2}, we can prove
eq.\eqref{M_L^nL^k} easily.
\end{proof}
We can formulate the following 2-form
\begin{eqnarray}\label{omega}
\omega&=&\frac{d k}k\wedge dx+\sum_{\alpha=-M+1}^{N}\sum_{n\geq 0}d
\A_{\alpha,n}\wedge dt_{\alpha,n}
\end{eqnarray}
which satisfies
\begin{equation}
d\omega=0,
\end{equation}
and the following proposition.
\begin{proposition}The identity
\begin{equation}
\omega\wedge\omega=0
\end{equation}
 is equivalent to the Zakharov-Shabat equations.
\end{proposition}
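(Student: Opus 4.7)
The plan is to decompose $\omega = \omega_0 + \omega_1$ with $\omega_0 := \frac{dk}{k}\wedge dx$ and $\omega_1 := \sum_{\alpha,n} d\A_{\alpha,n}\wedge dt_{\alpha,n}$; since only $dk$ and $dx$ appear in $\omega_0$, we have $\omega_0\wedge\omega_0 = 0$ and therefore
\[
\omega\wedge\omega = 2\,\omega_0\wedge\omega_1 + \omega_1\wedge\omega_1.
\]
On $(k,x,t)$-space a 4-form decomposes into four basis types, which I would treat separately: $dk\wedge dx\wedge dt\wedge dt$, $dk\wedge dt^{\wedge 3}$, $dx\wedge dt^{\wedge 3}$, and $dt^{\wedge 4}$.

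For the implication $\omega\wedge\omega = 0 \Rightarrow$ \eqref{zs}, first I would extract the coefficient of $dk\wedge dx\wedge dt_{\alpha,m}\wedge dt_{\beta,n}$. From $2\,\omega_0\wedge\omega_1$ it works out to $\frac{2}{k}\bigl[(\A_{\beta,n})_{t_{\alpha,m}} - (\A_{\alpha,m})_{t_{\beta,n}}\bigr]$, while in $\omega_1\wedge\omega_1$ only the $dk\wedge dx$ projection of $d\A_{\alpha,m}\wedge d\A_{\beta,n}$ contributes, and after recognising the Poisson bracket $\{f,g\} = k(\partial_k f\,\partial_x g - \partial_x f\,\partial_k g)$ it equals $-\frac{2}{k}\{\A_{\alpha,m},\A_{\beta,n}\}$. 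Setting the sum to zero reproduces \eqref{zs} exactly, and this already establishes one direction of the equivalence.

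For the converse, rather than checking the three remaining 4-form types by direct expansion, a cleaner route is to invoke Theorem \ref{flowsofM} and identify $\omega$ with $d\L\wedge d\M_L$ whenever \eqref{zs} holds. Using $\{\L,\M_L\} = 1$ together with $\partial_{t_{\gamma,n}}\L = \{\A_{\gamma,n},\L\}$ and $\partial_{t_{\gamma,n}}\M_L = \{\A_{\gamma,n},\M_L\}$, a short direct calculation shows that the coefficients of $dk\wedge dx$, $dk\wedge dt_I$ and $dx\wedge dt_I$ in $d\L\wedge d\M_L$ reproduce those of $\omega$ unconditionally (via $\{\L,\M_L\}=1$ in the first case, and via cross-cancellations leaving precisely $\A_{I,k}$ and $\A_{I,x}$ in the next two), while the $dt_I\wedge dt_J$ coefficient of $d\L\wedge d\M_L$ works out to $\{\A_I,\A_J\}$, which by \eqref{zs} matches $(\A_J)_{t_I} - (\A_I)_{t_J}$. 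Hence $\omega = d\L\wedge d\M_L$ identically, and $\omega\wedge\omega$ vanishes from $d\L\wedge d\L = 0$. The main obstacle this reformulation avoids is the $dt^{\wedge 4}$ component of $\omega_1\wedge\omega_1$, whose direct vanishing under \eqref{zs} reduces to a combinatorially heavy cyclic Jacobi-type identity on the symbol algebra of the Poisson bracket; the identification with $d\L\wedge d\M_L$ collapses the whole argument to the tautology $d\L\wedge d\L = 0$.
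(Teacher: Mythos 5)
Your forward direction---reading off the coefficient of $dk\wedge dx\wedge dt_{\alpha,m}\wedge dt_{\beta,n}$, getting $\frac{2}{k}\bigl[(\A_{\beta,n})_{t_{\alpha,m}}-(\A_{\alpha,m})_{t_{\beta,n}}\bigr]$ from $2\,\omega_0\wedge\omega_1$ and $-\frac{2}{k}\{\A_{\alpha,m},\A_{\beta,n}\}$ from $\omega_1\wedge\omega_1$---is exactly the computation the paper performs, with the same signs. Where you genuinely diverge is the converse. The paper argues by brute-force expansion of $\omega\wedge\omega$ into basis $4$-forms: its $dk\wedge dt^{\wedge 3}$ and $dx\wedge dt^{\wedge 3}$ coefficients are cyclic combinations of Zakharov--Shabat expressions contracted with $\partial_k\A$ or $\partial_x\A$, which cancel termwise once \eqref{zs} is substituted; notably the paper's displayed expansion omits the pure $dt^{\wedge 4}$ component altogether---precisely the term you single out as the obstacle. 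Your alternative, identifying $\omega$ with $d\L\wedge d\M_L$ and concluding from $d\L\wedge d\L=0$, is cleaner, but two caveats are in order. First, it imports the content of the proposition that follows this one: the coefficients of $dk\wedge dt_I$ and $dx\wedge dt_I$ in $d\L\wedge d\M_L$ match those of $\omega$ not ``unconditionally'' but only when the Lax equations \eqref{edef2} and the $\M_L$-flow equations \eqref{alphaML} of Theorem \ref{flowsofM} hold; so your converse really establishes ``dBTH $\Rightarrow\omega\wedge\omega=0$'' rather than ``\eqref{zs} $\Rightarrow\omega\wedge\omega=0$'', which is acceptable in the paper's context (where \eqref{zs} is always a consequence of the hierarchy) but should be stated, and the forward reference should be made explicit to rule out circularity (the later proposition's proof uses \eqref{zs}, not this one). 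Second, the $dt^{\wedge 4}$ term you avoid is not actually combinatorially heavy: under \eqref{zs} the pure $dt$-part of $\omega$ becomes $\frac12\sum_{I,J}\{\A_I,\A_J\}\,dt_I\wedge dt_J = k\left(\sum_I\partial_k\A_I\,dt_I\right)\wedge\left(\sum_J\partial_x\A_J\,dt_J\right)$, a decomposable $2$-form, so its wedge square vanishes immediately; adding this observation would let you complete the converse entirely within the paper's direct approach.
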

\begin{proof}
Eq.\eqref{omega} can have the following detailed representation
\begin{eqnarray*}
\omega=\frac{d k}k\wedge
dx+\sum_{\alpha,\beta=-M+1}^{N}\sum_{m,n\geq 0}\frac{\d
\A_{\alpha,n}}{\d t_{\beta,m}}d t_{\beta,m}\wedge
dt_{\alpha,n}+\sum_{\alpha=-M+1}^{N}\sum_{n\geq 0}(\frac{\d
\A_{\alpha,n}}{\d k}d k+\frac{\d \A_{\alpha,n}}{\d x}d x)\wedge
dt_{\alpha,n}.
\end{eqnarray*}
We can construct wedge product $\omega\wedge\omega$ as following
\begin{eqnarray*}
&&\omega\wedge\omega\\
&=&(\frac{d k}k\wedge
dx+\sum_{\alpha,\beta=-M+1}^{N}\sum_{m,n\geq 0}\frac{\d
\A_{\alpha,n}}{\d t_{\beta,m}}d t_{\beta,m}\wedge
dt_{\alpha,n}+\sum_{\alpha=-M+1}^{N}\sum_{n\geq 0}(\frac{\d
\A_{\alpha,n}}{\d k}d k+\frac{\d \A_{\alpha,n}}{\d x}d x)\wedge
dt_{\alpha,n})\\
&&\wedge(\frac{d k}k\wedge
dx+\sum_{\alpha,\beta=-M+1}^{N}\sum_{m,n\geq 0}\frac{\d
\A_{\alpha,n}}{\d t_{\beta,m}}d t_{\beta,m}\wedge
dt_{\alpha,n}+\sum_{\alpha=-M+1}^{N}\sum_{n\geq 0}(\frac{\d
\A_{\alpha,n}}{\d k}d k+\frac{\d \A_{\alpha,n}}{\d x}d x)\wedge
dt_{\alpha,n})\\
&=&\frac2k \sum_{\alpha,\beta=-M+1}^{N}\sum_{m,n\geq 0}(\frac{\d
\A_{\alpha,n}}{\d t_{\beta,m}}-\frac{\d \A_{\beta,m}}{\d
t_{\alpha,n}})d k\wedge dx\wedge d t_{\beta,m}\wedge dt_{\alpha,n}\\
&&+ 2\sum_{\alpha,\beta=-M+1}^{N}\sum_{m,n\geq 0}(\frac{\d
\A_{\alpha,n}}{\d k}\frac{\d \A_{\beta,m}}{\d x}-\frac{\d
\A_{\beta,m}}{\d k}\frac{\A_{\alpha,n}}{\d x})d k\wedge dx\wedge d
t_{\beta,m}\wedge d t_{\alpha,n}\\
&&+ 2\sum_{\alpha,\beta,\gamma=-M+1}^{N}\sum_{m,n\geq 0}[(\frac{\d
\A_{\alpha,n}}{\d t_{\beta,m}}-\frac{\d \A_{\beta,m}}{\d
t_{\alpha,n}})\frac{\d \A_{\gamma,l}}{\d k}-(\frac{\d
\A_{\gamma,l}}{\d t_{\beta,m}}-\frac{\d \A_{\beta,m}}{\d
t_{\gamma,l}})\frac{\d \A_{\alpha,n}}{\d k}\\
&&+(\frac{\d
\A_{\gamma,l}}{\d t_{\alpha,n}}-\frac{\d \A_{\alpha,n}}{\d t_{\gamma,l}})\frac{\d \A_{\beta,m}}{\d k}]d
k\wedge d t_{\beta,m}\wedge dt_{\alpha,n}\wedge dt_{\gamma,l}\\
&&+ 2\sum_{\alpha,\beta,\gamma=-M+1}^{N}\sum_{m,n\geq 0}[(\frac{\d
\A_{\alpha,n}}{\d t_{\beta,m}}-\frac{\d \A_{\beta,m}}{\d
t_{\alpha,n}})\frac{\d \A_{\gamma,l}}{\d x}-(\frac{\d
\A_{\gamma,l}}{\d t_{\beta,m}}-\frac{\d \A_{\beta,m}}{\d
t_{\gamma,l}})\frac{\d \A_{\alpha,n}}{\d x}\\
&&+(\frac{\d \A_{\alpha,n}}{\d t_{\gamma,l}}-\frac{\d
\A_{\gamma,l}}{\d t_{\alpha,n}})\frac{\d \A_{\beta,m}}{\d x}]d
x\wedge d t_{\beta,m}\wedge dt_{\alpha,n}\wedge dt_{\gamma,l}.
\end{eqnarray*}
So $\omega\wedge\omega=0$ is equivalent to $\frac{\d
\A_{\alpha,n}}{\d t_{\beta,m}}-\frac{\d \A_{\beta,m}}{\d
t_{\alpha,n}}+k(\frac{\d \A_{\alpha,n}}{\d k}\frac{\d
\A_{\beta,m}}{\d x}-\frac{\d \A_{\beta,m}}{\d
k}\frac{\A_{\alpha,n}}{\d x})=0$, i.e. $\frac{\d \A_{\alpha,n}}{\d
t_{\beta,m}}-\frac{\d \A_{\beta,m}}{\d
t_{\alpha,n}}+\{\A_{\alpha,n},\A_{\beta,m}\}=0$ which is just
eq.\eqref{zs}.
\end{proof}
Similarly we  get the following corollary.

\begin{proposition}
 The dBTH is equivalent to the following exterior differential
equations
\begin{equation}\label{wedge}
d\L\wedge d\M_L= d\L\wedge d\M_R=\omega
\end{equation}
\end{proposition}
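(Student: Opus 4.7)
The strategy is to verify the proposed identity componentwise in the natural basis of $2$-forms $\{dk\wedge dx,\ dk\wedge dt_{\gamma,n},\ dx\wedge dt_{\gamma,n},\ dt_{\alpha,m}\wedge dt_{\beta,n}\}$. The inputs are the canonical pairing $\{\L,\M_L\}=1=\{\L,\M_R\}$, the Lax equation $\d_{t_{\gamma,n}}\L=\{\A_{\gamma,n},\L\}$, and the flow equations $\d_{t_{\gamma,n}}\M_L=\{\A_{\gamma,n},\M_L\}$, $\d_{t_{\gamma,n}}\M_R=\{\A_{\gamma,n},\M_R\}$ furnished by Theorem \ref{flowsofM}, together with the Zakharov-Shabat identity of Proposition \ref{propZS}. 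Since every step involving $\M_L$ uses only these structural properties, the same computation proves $d\L\wedge d\M_R=\omega$ as well, so I focus on $\M_L$ below.

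First I would handle the $dk\wedge dx$ coefficient of the left-hand side: it equals $\d_k\L\,\d_x\M_L-\d_x\L\,\d_k\M_L=\tfrac1k\{\L,\M_L\}=\tfrac1k$, matching the opening term of $\omega$. For the mixed coefficients $dk\wedge dt_{\gamma,n}$ and $dx\wedge dt_{\gamma,n}$ I would substitute the flow relations and expand the Poisson bracket in $(k,x)$: after grouping, the $\d_x\A_{\gamma,n}$ cross-terms cancel in the former and the $\d_k\A_{\gamma,n}$ cross-terms cancel in the latter, leaving exactly $\d_k\A_{\gamma,n}\cdot\{\L,\M_L\}=\d_k\A_{\gamma,n}$ and $\d_x\A_{\gamma,n}\cdot\{\L,\M_L\}=\d_x\A_{\gamma,n}$, which are precisely the contributions of $d\A_{\gamma,n}\wedge dt_{\gamma,n}$ to these basis 2-forms.

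The heart of the proof is the $dt_{\alpha,m}\wedge dt_{\beta,n}$ coefficient, which on the left reads $\{\A_{\alpha,m},\L\}\{\A_{\beta,n},\M_L\}-\{\A_{\beta,n},\L\}\{\A_{\alpha,m},\M_L\}$. Here I would note that $\{\L,\M_L\}=1$ makes $(\L,\M_L)$ Darboux coordinates for the Poisson bracket, so for any smooth $A,B$
\begin{equation*}
\{A,\L\}=-\d_{\M_L}A,\qquad \{A,\M_L\}=\d_{\L}A,\qquad \{A,B\}=\d_{\L}A\,\d_{\M_L}B-\d_{\M_L}A\,\d_{\L}B,
\end{equation*}
which yields at once the algebraic identity $\{A,\L\}\{B,\M_L\}-\{A,\M_L\}\{B,\L\}=\{A,B\}$. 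Taking $A=\A_{\alpha,m}$, $B=\A_{\beta,n}$ and then invoking the Zakharov-Shabat identity $\d_{t_{\alpha,m}}\A_{\beta,n}-\d_{t_{\beta,n}}\A_{\alpha,m}=\{\A_{\alpha,m},\A_{\beta,n}\}$ matches this with the $dt\wedge dt$ contribution of $\omega$.

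For the converse direction, assuming the exterior identity I would extract the coefficients of $dk\wedge dt_{\gamma,n}$ and $dx\wedge dt_{\gamma,n}$ to obtain a $2\times 2$ linear system for $(\d_{t_{\gamma,n}}\L,\d_{t_{\gamma,n}}\M_L)$ whose determinant equals $\{\L,\M_L\}/k=1/k\neq 0$; Cramer's rule then returns the Lax equation $\d_{t_{\gamma,n}}\L=\{\A_{\gamma,n},\L\}$, i.e.\ the dBTH. The main obstacle I anticipate is the clean articulation of the canonical-coordinate identity above: once one accepts that $(\L,\M_L)$ are Darboux variables thanks to $\{\L,\M_L\}=1$ it is immediate, but a direct $(k,x)$-expansion leads to a lengthy and unilluminating cascade of cancellations. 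Once this identity is in place, the remainder is routine bookkeeping of the flow formulas from Theorem \ref{flowsofM}.
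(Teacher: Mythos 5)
Your proof is correct and follows the same route as the paper: both expand $d\L\wedge d\M_L$ in the basis $\{dk\wedge dx,\ dk\wedge dt_{\gamma,n},\ dx\wedge dt_{\gamma,n},\ dt_{\alpha,m}\wedge dt_{\beta,n}\}$, identify the first coefficient with $\{\L,\M_L\}=1$, the mixed ones with $\d_k\A_{\gamma,n}$ and $\d_x\A_{\gamma,n}$ (equivalently, with the Lax and $\M_L$-flow equations), and the last with the Zakharov--Shabat identity. The only difference is localized to the $dt_{\alpha,m}\wedge dt_{\beta,n}$ coefficient, where the paper expands all four Poisson brackets in $(k,x)$ and cancels terms by hand to reach $\{\A_{\alpha,m},\L\}\{\A_{\beta,n},\M_L\}-\{\A_{\alpha,m},\M_L\}\{\A_{\beta,n},\L\}=\{\A_{\alpha,m},\A_{\beta,n}\}$, whereas your Darboux-coordinate observation (legitimate since $\{\L,\M_L\}=1$ makes $(k,x)\mapsto(\L,\M_L)$ a valid change of variables) yields the same identity immediately and more transparently.
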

\begin{proof}
By expanding the left side of identity
\begin{equation}
d\L\wedge d\M_L=\omega,
\end{equation}
the following result can be got.
\begin{eqnarray*}
d\L\wedge d\M_L&=&(\frac{\d\L}{\d k}d k+\frac{\d\L}{\d x}d
x+\sum_{\alpha=-M+1}^{N}\sum_{n\geq 0}\frac{\d \L}{\d t_{\alpha,n}}d
t_{\alpha,n})\wedge\\
&&(\frac{\d\M_L}{\d k}d k+\frac{\d\M_L}{\d x}d
x+\sum_{\alpha=-M+1}^{N}\sum_{n\geq 0}\frac{\d \M_L}{\d
t_{\alpha,n}}d t_{\alpha,n})\\
&=&(\frac{d k}k\wedge dx+\sum_{\alpha,\beta=-M+1}^{N}\sum_{m,n\geq
0}\frac{\d \A_{\alpha,n}}{\d t_{\beta,m}}d t_{\beta,m}\wedge
dt_{\alpha,n}+\sum_{\alpha=-M+1}^{N}\sum_{n\geq 0}(\frac{\d
\A_{\alpha,n}}{\d k}d k\\
&&+\frac{\d \A_{\alpha,n}}{\d x}d x)\wedge dt_{\alpha,n}).
\end{eqnarray*}
By comparing the coefficients of $d k\wedge dx$, we get the
canonical relation $\{\L,\M_L\}=1$. By comparing the coefficients of
$d k\wedge  dt_{\alpha,n}, d x\wedge  dt_{\alpha,n},d
t_{\beta,m}\wedge dt_{\alpha,n}$, we get the canonical relation
\begin{eqnarray}\label{1}\frac{\d\L}{\d k}\frac{\d \M_L}{\d t_{\alpha,n}}-\frac{\d\M_L}{\d
k}\frac{\d \L}{\d t_{\alpha,n}}&=&\frac{\d \A_{\alpha,n}}{\d
k}\\\label{2} \frac{\d\L}{\d x}\frac{\d \M_L}{\d
t_{\alpha,n}}-\frac{\d\M_L}{\d x}\frac{\d \L}{\d
t_{\alpha,n}}&=&\frac{\d \A_{\alpha,n}}{\d x}\\\label{3}
\frac{\d\L}{\d t_{\beta,m}}\frac{\d \M_L}{\d
t_{\alpha,n}}-\frac{\d\M_L}{\d t_{\beta,m}}\frac{\d \L}{\d
t_{\alpha,n}}&=&\frac{\d \A_{\alpha,n}}{\d t_{\beta,m}}-\frac{\d
\A_{\beta,m}}{\d t_{\alpha,n}},
\end{eqnarray}
eq.\eqref{1} and eq.\eqref{2} imply eq.\eqref{edef2},
eq.\eqref{alphaML}. Eq.\eqref{edef2}, eq.\eqref{alphaML} can lead to
eq.\eqref{3} as following.
\begin{eqnarray*}
&&\frac{\d\L}{\d t_{\beta,m}}\frac{\d \M_L}{\d
t_{\alpha,n}}-\frac{\d\M_L}{\d t_{\beta,m}}\frac{\d \L}{\d
t_{\alpha,n}}\\&=&
\{\A_{\beta,m},\L\}\{\A_{\alpha,n},\M_L\}-\{\A_{\beta,m},\M_L\}\{\A_{\alpha,n},\L\}\\
&=& k^2[(\frac{\d ( \B_{\beta,m})_+}{\d k}\frac{\d \L}{\d
x}-\frac{\d \A_{\beta,m}}{\d x}\frac{\d \L}{\d k})(\frac{\d
\A_{\alpha,n}}{\d k}\frac{\d \M_L}{\d x}-\frac{\d
\A_{\alpha,n}}{\d x}\frac{\d \M_L}{\d k})\\
&&-(\frac{\d \A_{\beta,m}}{\d k}\frac{\d \M_L}{\d x}-\frac{\d
\A_{\beta,m}}{\d x}\frac{\d \M_L}{\d k})(\frac{\d \A_{\alpha,n}}{\d
k}\frac{\d \L}{\d x}-\frac{\d
\A_{\alpha,n}}{\d x}\frac{\d \L}{\d k})]\\
&=& k^2[\frac{\d \A_{\beta,m}}{\d k}\frac{\d \M_L}{\d x}\frac{\d
\A_{\alpha,n}}{\d x}\frac{\d \L}{\d k}-\frac{\d \A_{\beta,m}}{\d
k}\frac{\d \L}{\d x}\frac{\d
\A_{\alpha,n}}{\d x}\frac{\d \M_L}{\d k}\\
&&+\frac{\d \A_{\beta,m}}{\d x}\frac{\d \M_L}{\d k}\frac{\d
\A_{\alpha,n}}{\d k}\frac{\d \L}{\d x}-\frac{\d \A_{\beta,m}}{\d
x}\frac{\d \L}{\d k}\frac{\d
\A_{\alpha,n}}{\d k}\frac{\d \M_L}{\d x}]\\
&=& k(\frac{\d \A_{\beta,m}}{\d k}\frac{\d \A_{\alpha,n}}{\d x}
-\frac{\d \A_{\beta,m}}{\d x}\frac{\d \A_{\alpha,n}}{\d
k})\\
&=& \{\A_{\beta,m}, \A_{\alpha,n}\}\\
&=&\frac{\d \A_{\alpha,n}}{\d t_{\beta,m}}-\frac{\d
\A_{\beta,m}}{\d t_{\alpha,n}}.
\end{eqnarray*}
 In the same way,
we can get all the equations for $\M_R$ from the second identity in eq.\eqref{wedge}.
\end{proof}

Using equation \eqref{Moperator21'} and \eqref{Moperator22'}, taking
derivatives of them will lead to following lemma.
\begin{lemma}\label{vderivative}
Following formula will hold
\begin{eqnarray}
\frac{\d v_{\gamma_1,n}(t,x)}{\d t_{\gamma_2,m}} &=&\res \B_{\gamma_1,n}d_k \A_{\gamma_2,m},\\
\frac{\d \bar v_{\gamma_1,n}(t,x)}{\d t_{\gamma_2,m}} &=&\res
\B_{\gamma_1,n}d_k
\A_{\gamma_2,m},\\
\frac{\d v_{\gamma,n}(t,x)}{\d x} &=&\res \B_{\gamma,n} d \log k,\\
\frac{\d \bar v_{\gamma,n}(t,x)}{\d x} &=&\res \B_{\gamma,n} d \log
k,
\end{eqnarray}
where $ -M+1\leq\gamma,\gamma_1,\gamma_2\leq N$.
\end{lemma}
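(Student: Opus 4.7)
I would establish the four identities by differentiating the explicit expansions \eqref{Moperator21'} and \eqref{Moperator22'} of $\M_L$ and $\M_R$ with respect to $t_{\gamma_2,m}$ (resp.\ $x$), substituting the flow equations of Theorem \ref{flowsofM}, and isolating the required coefficient by a formal residue pairing on Laurent series in $k$. The key algebraic fact is the dispersionless orthogonality
\[
\res_k \L^a\,d_k\L^b \;=\; N\,b\,\delta_{a+b,0},
\]
which follows from $d_k\L^b=b\L^{b-1}d_k\L$ and $\res_k d_k f=0$ for any Laurent series $f$ in $k^{\pm1/N}$; dually $\res_k f\,d_k g = -\res_k g\,d_k f$.

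For the first identity, apply $\d_{t_{\gamma_2,m}}$ to \eqref{Moperator21'}. By Theorem \ref{flowsofM} the left-hand side is $\{\A_{\gamma_2,m},\M_L\}$, while on the right-hand side the Lax equation $\d_{t_{\gamma_2,m}}\L^a=\{\A_{\gamma_2,m},\L^a\}$ produces matching bracket contributions that cancel between the two sides. What remains is an identity of Laurent series of the schematic form
\[
\sum_{\gamma_1,n_1}(\d_{t_{\gamma_2,m}}v_{\gamma_1,n_1})\L^{-(n_1+2+(1-\gamma_1)/N)} + (\text{explicit new }t\text{-term}) \;=\; \tfrac{k}{N}\d_k\A_{\gamma_2,m}\cdot\L_L^{-1} + k\,\d_k\A_{\gamma_2,m}\sum\d_xv_{\gamma_1,n_1}\L^{-(\ldots)},
\]
where the right-hand side is the ``anomaly'' produced by $\{\A_{\gamma_2,m},x/N\}=(k/N)\d_k\A_{\gamma_2,m}$ and $\{\A_{\gamma_2,m},v_{\gamma_1,n_1}\}=k\,\d_k\A_{\gamma_2,m}\d_xv_{\gamma_1,n_1}$. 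Pairing with a suitable $d_k\B$ collapses the left-hand side to a single $\d_{t_{\gamma_2,m}}v_{\gamma_1,n}$; integrating by parts inside the residue and using the splitting $\A_{\gamma_2,m}=\B_{\gamma_2,m}-(\B_{\gamma_2,m})_-$ for $\gamma_2\in\{1,\dots,N\}$ (or $\A_{\gamma_2,m}=-(\B_{\gamma_2,m})_-$ for $\gamma_2\in\{0,\dots,-M+1\}$) reassembles these anomaly contributions into the compact expression $\res_k\B_{\gamma_1,n}\,d_k\A_{\gamma_2,m}$, after the common fractional-power normalisation drops out.

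The $x$-derivative identity is obtained by the same scheme with $\d_x$ in place of $\d_{t_{\gamma_2,m}}$; here the residue reduces directly to $\res_k\B_{\gamma,n}\,dk/k=(\B_{\gamma,n})_0$, in agreement with the $\phi$-identities \eqref{dis phi flow}--\eqref{dis phi x flow2}. The two identities for $\bar v_{\gamma_1,n}$ follow by the identical argument applied to the expansion \eqref{Moperator22'} and the flow \eqref{alphaML} of $\M_R$, with $(-x/M)\L_R^{-1}$ replacing $(x/N)\L_L^{-1}$ and the roles of $N$ and $M$ interchanged in the orthogonality factor. The main obstacle is the careful bookkeeping of signs, of the $N$- (resp.\ $M$-) normalisation coming from fractional powers, and of the case distinction $\A_{\gamma,n}=(\B_{\gamma,n})_+$ versus $\A_{\gamma,n}=-(\B_{\gamma,n})_-$; in particular, correctly matching the Poisson-bracket anomaly $\{\A_{\gamma_2,m},x\}=k\,\d_k\A_{\gamma_2,m}$ with the $(\B_{\gamma_2,m})_\mp$-part so that, after integration by parts, it precisely produces the compact residue on the right-hand side.
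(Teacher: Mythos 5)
Your overall architecture is the same as the paper's: differentiate the explicit $\L$-expansions \eqref{Moperator21'} and \eqref{Moperator22'}, substitute the flow equations of Theorem \ref{flowsofM}, cancel the chain-rule terms coming from $\partial_{t_{\gamma_2,m}}\L^{a}=\{\A_{\gamma_2,m},\L^{a}\}$, and extract $\partial_{t_{\gamma_2,m}}v_{\gamma_1,n}$ by pairing the leftover identity with $\res_k(\cdot)\,\B_{\gamma_1,n}\,d_k\L$ via your orthogonality relation $\res_k\L^{a}d_k\L^{b}=Nb\,\delta_{a+b,0}$. Up to that point the plan is sound, and your remark about the fractional-power normalisation correctly anticipates the factor produced by $\res_k\L^{-1}d_k\L=N$.

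The gap is at the decisive step. After the cancellation, the ``anomaly'' you must pair against $\B_{\gamma_1,n}\,d_k\L$ is
$k\,\d_k\A_{\gamma_2,m}\bigl(\tfrac1N\L^{-1}+\sum_{\gamma,m'}\d_x v_{\gamma,m'}\,\L^{-(m'+2+\frac{1-\gamma}{N})}\bigr)$,
i.e.\ $k\,\d_k\A_{\gamma_2,m}$ times the \emph{explicit} $x$-derivative of $\M_L$. You propose to collapse this to $\res_k\B_{\gamma_1,n}d_k\A_{\gamma_2,m}$ by integration by parts together with the splitting $\A=\B-\B_-$; but neither move touches the unknown coefficients $\d_x v_{\gamma,m'}$, which at this stage have not yet been identified (the third identity of the lemma is what identifies them), so the expression cannot be simplified that way. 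What actually does the work is the canonical relation $\{\L,\M_L\}=1$ from Theorem \ref{flowsofM}: since the coefficients of the $\L$-expansion are $k$-independent, $d_k\M_L=\frac{\d\M_L}{\d\L}d_k\L$, hence
\begin{equation*}
k\Bigl(\d_x\M_L-\frac{\d\M_L}{\d\L}\,\d_x\L\Bigr)\d_k\L
= k\bigl(\d_k\L\,\d_x\M_L-\d_x\L\,\d_k\M_L\bigr)=\{\L,\M_L\}=1 ,
\end{equation*}
so the anomaly paired with $\B_{\gamma_1,n}d_k\L$ reduces on the nose to the residue of $\B_{\gamma_1,n}\,\d_k\A_{\gamma_2,m}\,dk$. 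This is precisely how the paper argues: it forms the antisymmetric combination $\d_{t_{\gamma_2,m}}\M_L\,d_k\L-\d_{t_{\gamma_2,m}}\L\,d_k\M_L=\d_k\A_{\gamma_2,m}\,\{\L,\M_L\}\,dk$, and the same relation is what reduces the $x$-derivative identities to $\res_k\B_{\gamma,n}\,dk/k$. You have this relation available in your list of ingredients but never deploy it at the crux; the substitutes you name would not close the argument.
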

\begin{proof}
Firstly we take derivatives of $\M_L$ by $t_{\alpha,n}$ and get
following calculation,
\begin{eqnarray*}
\frac{\d\M_L}{\d t_{\alpha,n}}&=&\frac{\d\M_L}{\d \L}\frac{\d\L}{\d
t_{\alpha,n}}+\sum_ {m=0}^{\infty}\sum_ {\gamma=1}^{N}\frac{\d
v_{\gamma,m}(t,x)}{\d t_{\alpha,n}}\L^{-(m+2+\frac {1-\gamma}
{N})}+\sum_{n\geq 0}\sum_ {\alpha=1}^{N}(n+1+\frac {1-\gamma}
{N})\B_{\alpha,n-1}
\end{eqnarray*}
 where $v_{1,0}=\frac x N$,
  which leads to
\begin{eqnarray*}
&&\frac{\d v_{\gamma,m}(t,x)}{\d t_{\alpha,n}}\\
&=&\res \L^{m+1-\frac {\gamma-1} {N}}(\frac{\d\M_L}{\d
t_{\alpha,n}}d_k\L-\frac{\d\M_L}{\d \L}\frac{\d\L}{\d
t_{\alpha,n}}d_k\L- \sum_{n\geq 0}\sum_ {\alpha=1}^{N}(n+1+\frac
{1-\gamma}
{N})\B_{\alpha,n-1} d_k\L)\\
&=&\res \B_{\gamma,m}(\frac{\d\M_L}{\d
t_{\alpha,n}}d_k\L-\frac{\d\M_L}{\d \L}\frac{\d\L}{\d
t_{\alpha,n}}d_k\L)
\\
&=&\res \B_{\gamma,m}k[(\frac{\d \A_{\alpha,n}}{\d
k}\frac{\d\M_L}{\d x}-\frac{\d \A_{\alpha,n}}{\d x}\frac{\d\M_L}{\d
k})d_k\L-(\frac{\d \A_{\alpha,n}}{\d k}\frac{\d\L}{\d x}-\frac{\d
\A_{\alpha,n}}{\d x}\frac{\d\L}{\d k})d_k\M_L]\\
&=&\res\B_{\gamma,m}\frac{\d \A_{\alpha,n}}{\d k}\{\L,\M_L\}\\
&=&\res \B_{\gamma,m}d_k \A_{\alpha,n}
\end{eqnarray*}

Now  we take derivatives of $\M_L$ by $x$ and get following
calculation,

\begin{eqnarray}
\frac{\d\M_L}{\d x}=\frac{\d\M_L}{\d \L}\frac{\d\L}{\d x}+\sum_
{m=0}^{\infty}\sum_ {\gamma=1}^{N}\frac{\d v_{\gamma,m}(t,x)}{\d
x}\L^{-(m+2+\frac {1-\gamma} {N})},
\end{eqnarray}

which leads to
\begin{eqnarray*}
&&\frac{\d v_{\gamma,m}(t,x)}{\d x}\\
&=&\res\L^{m+1-\frac {\gamma-1} {N}}(\frac{\d\M_L}{\d
x}d_k\L-\frac{\d\M_L}{\d
\L}\frac{\d\L}{\d x}d_k\L)\\
&=&\res  \B_{\gamma,m}(\frac{\d\M_L}{\d x}d_k\L-\frac{\d\L}{\d
x}d_k\M_L)
\\
&=&\res  \B_{\gamma,m} d \log k.
\end{eqnarray*}
The other cases can be proven in similar ways.
\end{proof}
By the Lemma above, it is time to introduce $S$ functions which is
included in the following proposition.
\begin{proposition}
There exist functions $S_L$ and $S_R$ which satisfy
\begin{eqnarray}\label{dS_L}
dS_L&=&\M_Ld\L+\log k  dx+\sum_{\alpha=-M+1}^{N}\sum_{n\geq 0}
\A_{\alpha,n}\wedge dt_{\alpha,n},\\\label{dS_R} dS_R&=&\M_Rd\L+\log
k  dx+\sum_{\beta=-M+1}^{N}\sum_{n\geq 0} \A_{\beta,n}\wedge
dt_{\beta,n},
\end{eqnarray}
where $S_L,S_R$ have the following Laurent expansion.
\begin{eqnarray}\label{Moperator3}
S_L= \frac{x}{N}\log_+\L-\sum_ {m=0}^{\infty}\sum_
{\gamma=1}^{N}v_{\gamma,m}(t,x)\frac{ \L^{-(m+1+\frac {1-\gamma}
{N})}}{m+1+\frac {1-\gamma} {N}}+\sum_{n\geq 0}\sum_
{\alpha=1}^{N}\B_{\alpha,n}t_{\alpha, n},
\end{eqnarray}
\begin{eqnarray}
S_R=-\frac{x}{M}\log_-\L-\sum_ {m=0}^{\infty}\sum_
{\gamma=-m+1}^{0}\bar v_{\gamma,m}(t,x)\frac{\L^{-(m+1+\frac
{\gamma} {M})}}{m+1+\frac {\gamma} {M}} -\sum_{n\geq 0}\sum_
{\beta=-M+1}^{0}\B_{\beta,n}t_{\beta, n}.
\end{eqnarray}
\end{proposition}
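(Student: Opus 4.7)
The plan is to establish existence by showing the 1-form on the right-hand side of \eqref{dS_L} is closed (so $S_L$ exists by the Poincar\'e lemma), and then to confirm the proposed Laurent expansion by matching its three coordinate derivatives.

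First I would compute the exterior derivative of
$$\M_L\, d\L + \log k\, dx + \sum_{\alpha=-M+1}^{N}\sum_{n\geq 0}\A_{\alpha,n}\,dt_{\alpha,n}.$$
This equals $d\M_L\wedge d\L + \frac{dk}{k}\wedge dx + \sum_{\alpha,n} d\A_{\alpha,n}\wedge dt_{\alpha,n} = -d\L\wedge d\M_L + \omega$, which vanishes by the previous proposition \eqref{wedge}. Hence the form is closed and $S_L$ exists locally. The same argument applied to $\M_R$, using $d\L\wedge d\M_R = \omega$, produces $S_R$.

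Next I would verify the given Laurent series for $S_L$ by checking
$$\frac{\d S_L}{\d k}=\M_L\frac{\d \L}{\d k},\qquad \frac{\d S_L}{\d x}=\M_L\frac{\d \L}{\d x}+\log k,\qquad \frac{\d S_L}{\d t_{\alpha,n}}=\M_L\frac{\d \L}{\d t_{\alpha,n}}+\A_{\alpha,n}.$$
Comparing the expansion \eqref{Moperator3} of $S_L$ with \eqref{Moperator21'}, one sees that the formal $\L$-derivative of $S_L$ reproduces $\M_L$ term by term (the $\frac{x}{N}\log_+\L$ differentiates to $\frac{x}{N}\L_L^{-1}$, the $v_{\gamma,m}$ terms produce the shifted negative powers in $\M_L$, and the time-prefactor terms produce the monomials $(n+1-\tfrac{\alpha-1}{N})\L^{n-(\alpha-1)/N}t_{\alpha,n}$). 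This yields the $k$-derivative identity directly. For the $x$-derivative, I would apply Lemma \ref{vderivative} to write $\d_x v_{\gamma,m} = \res\B_{\gamma,m}\,d\log k$, and check that the total contribution splits as $\M_L\,\d\L/\d x$ plus a residue sum telescoping to $\log k$. For the $t_{\alpha,n}$-derivative, I would combine Lemma \ref{vderivative} with Theorem \ref{flowsofM}, so that the projection of $\B_{\alpha,n}$ onto positive (resp.\ negative) $k$-powers assembles into $\A_{\alpha,n}$, while the remainder matches $\M_L\,\d\L/\d t_{\alpha,n}$.

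The main obstacle I anticipate is the bookkeeping in the $t_{\alpha,n}$-derivative: one has to carefully account for the index shift produced by differentiating the explicit $t_{\alpha,n}$-prefactors in \eqref{Moperator3} and \eqref{Moperator21'}, and verify that the sums of residues computed via Lemma \ref{vderivative} combine with the flow identities in Theorem \ref{flowsofM} so that only the positive (resp.\ negative) truncation $\A_{\alpha,n}$ of $\B_{\alpha,n}$ survives. The construction of $S_R$ is entirely parallel, now using \eqref{Moperator22'}, the identity $d\L\wedge d\M_R=\omega$, and the branch $\log_-\L$ in place of $\log_+\L$.
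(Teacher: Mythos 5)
Your proposal is correct and follows essentially the same route as the paper: existence comes from closedness of the 1-form via the proposition $d\L\wedge d\M_L=d\L\wedge d\M_R=\omega$, and the Laurent expansion is confirmed by differentiating it and invoking Lemma \ref{vderivative} together with the residue/projection identity that turns the sum over $\res\,\B_{\gamma,m}d_k\A_{\alpha,n}$ into $(\B_{\alpha,n})_-$. The only cosmetic difference is that you propose to check all three coordinate derivatives in $(k,x,t)$, whereas the paper performs a single computation of $\d S_L/\d t_{\alpha,n}$ at fixed $\L$ (so the coefficient to match is just $\A_{\alpha,n}$), which sidesteps the chain-rule bookkeeping you identify as the main obstacle.
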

\begin{proof}
We can prove the right hand sides of eq.\eqref{dS_L} and
eq.\eqref{dS_R} are closed according to eq.\eqref{wedge}. That
implies the existence of $S_L, S_R$.

By Lemma \ref{vderivative}, we can prove the form of $S_L$ is
correct using following computation,
\begin{eqnarray*}
&&\frac{\d S_L}{\d
t_{\alpha,n}}|_{\L,t_{\beta,m}(\beta\neq\alpha,\  or\ n\neq
m),x}\\
&=&-\sum_ {m=0}^{\infty}\sum_ {\gamma=1}^{N}\frac{
\d v_{\gamma,m}(t,x)}{\d t_{\alpha,n}}\frac{ \L^{-(m+1+\frac
{1-\gamma} {N})}}{m+1+\frac {1-\gamma} {N}}+\B_{\alpha,n}\\
&=&-\sum_ {m=0}^{\infty}\sum_ {\gamma=1}^{N}(\res \B_{\gamma,m}d_k
\A_{\alpha,n})\frac{ \L^{-(m+1+\frac {1-\gamma}
{N})}}{m+1+\frac {1-\gamma} {N}}+\B_{\alpha,n}\\
&=&\sum_ {m=0}^{\infty}\sum_ {\gamma=1}^{N}(\res
\A_{\alpha,n}d_k\B_{\gamma,m} )\frac{ \L^{-(m+1+\frac {1-\gamma}
{N})}}{m+1+\frac {1-\gamma} {N}}+\B_{\alpha,n}\\
&=&\sum_ {m=0}^{\infty}\sum_ {\gamma=1}^{N}(\res
\A_{\alpha,n}\L^{m+\frac {1-\gamma} {N}}d_k\L ) \L^{-(m+1+\frac
{1-\gamma} {N})}+\B_{\alpha,n}\\
&=&-(\B_{\alpha,n})_-+\B_{\alpha,n}\\
&=&\A_{\alpha,n}.
\end{eqnarray*}

\end{proof}

Now, we  will define the tau function of dBTH in following
proposition.
\begin{proposition}
There exists a function $\tau_d(t,x)$ which satisfies
\begin{eqnarray}\label{d tau}
d \log \tau_d(t,x)=\sum_{n\geq 0}\sum_ {\alpha=1}^{N}v_{\alpha,n}d
t_{\alpha,n}+\sum_{n\geq 0}\sum_ {\beta=-M+1}^{0}\bar v_{\beta,n}d
t_{\beta,n}+\phi dx
\end{eqnarray}
\end{proposition}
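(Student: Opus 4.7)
The plan is to establish the existence of $\tau_d$ via the Poincaré lemma: show that the 1-form
\begin{equation*}
\eta := \sum_{n\geq 0}\sum_{\alpha=1}^{N} v_{\alpha,n}\, dt_{\alpha,n} + \sum_{n\geq 0}\sum_{\beta=-M+1}^{0} \bar v_{\beta,n}\, dt_{\beta,n} + \phi\, dx
\end{equation*}
is closed, and then define $\log\tau_d$ as its (local) potential. Because everything is expressed in the $(t,x)$-variables, closedness amounts to checking equality of mixed partial derivatives in the three non-trivial cases: (i) $\partial_{t_{\gamma_2,m}}v_{\gamma_1,n}=\partial_{t_{\gamma_1,n}}v_{\gamma_2,m}$ for both indices positive, (ii) the analogous statement for $\bar v$ with both indices non-positive, (iii) the mixed identity $\partial_{t_{\beta,m}}v_{\alpha,n}=\partial_{t_{\alpha,n}}\bar v_{\beta,m}$, together with (iv) $\partial_{t_{\alpha,n}}\phi=\partial_x v_{\alpha,n}$ and $\partial_{t_{\beta,n}}\phi=\partial_x \bar v_{\beta,n}$.

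By Lemma \ref{vderivative} all four items (i)--(iii) collapse to the single residue-symmetry assertion
\begin{equation*}
\res \B_{\alpha,n}\, d_k \A_{\beta,m} \;=\; \res \B_{\beta,m}\, d_k \A_{\alpha,n},
\end{equation*}
for any admissible $\alpha,\beta$. To prove this I would rely on three standard ingredients: integration by parts $\res(f\,d_k g + g\,d_k f)=0$; the degree-counting identities $\res f_+\, d_k g_+ = 0$ and $\res f_-\, d_k g_- = 0$ (the first because the integrand has no $k^{-1}dk$ term, the second because a product of negative powers cannot reach power $-1$ in $dk$); and the defining formulas $\A_{\gamma,n}=(\B_{\gamma,n})_+$ or $-(\B_{\gamma,n})_-$ depending on the sign of $\gamma$. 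For example in the mixed case $\alpha\geq 1$, $\beta\leq 0$, expanding $\B_{\alpha,n}=(\B_{\alpha,n})_+ + (\B_{\alpha,n})_-$ and using the two vanishing identities above gives
\begin{equation*}
\res\B_{\alpha,n}\,d_k\A_{\beta,m} + \res\B_{\beta,m}\,d_k\A_{\alpha,n}
= -\res d\bigl((\B_{\alpha,n})_+ (\B_{\beta,m})_-\bigr)=0,
\end{equation*}
and a slightly different grouping takes care of the pure-positive and pure-negative index cases.

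For (iv) the lemma supplies $\partial_x v_{\alpha,n}=\res\B_{\alpha,n}\,d\log k=(\B_{\alpha,n})_0$, while equation \eqref{dis phi flow} yields $\partial_{t_{\alpha,n}}\phi=(\B_{\alpha,n})_0$; these match trivially, and the $\bar v$ case is identical. Together (i)--(iv) show $d\eta=0$, so locally $\eta=d\log\tau_d$ for some scalar function $\tau_d(t,x)$, as claimed. The bulk of the argument — and the only step that is not purely formal — is the residue symmetry in the middle paragraph; this is the main obstacle since one must keep track of the interaction between the two projection conventions $(\,)_\pm$ and the three qualitative sign regimes of $(\alpha,\beta)$, but each case is dispatched by the same elementary integration-by-parts trick.
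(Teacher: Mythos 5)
Your proposal is correct in outline and follows essentially the same route as the paper: both reduce closedness of the 1-form to the residue symmetry $\res \B_{\alpha,n}\,d_k\A_{\beta,m}=\res \B_{\beta,m}\,d_k\A_{\alpha,n}$ via Lemma \ref{vderivative}, establish it by integration by parts combined with the vanishing identities $\res f_+\,d_kg_+=0$ and $\res f_-\,d_kg_-=0$, and handle the $x$-direction through $\partial_x v_{\gamma,n}=\res\B_{\gamma,n}\,d\log k=(\B_{\gamma,n})_0=\partial_{t_{\gamma,n}}\phi$. The one flaw is a sign slip in your displayed mixed-case identity: it is the \emph{difference} $\res\B_{\alpha,n}\,d_k\A_{\beta,m}-\res\B_{\beta,m}\,d_k\A_{\alpha,n}$ that equals $-\res d_k\bigl((\B_{\alpha,n})_+(\B_{\beta,m})_-\bigr)=0$, whereas the sum you wrote equals $-2\res(\B_{\alpha,n})_+\,d_k(\B_{\beta,m})_-$, which is not zero in general and whose vanishing would in any case yield the symmetry with the wrong sign; with that plus sign corrected to a minus, the argument matches the paper's chain of equalities step for step.
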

\begin{proof}
To prove the existence of tau function, we need to prove the
compatibility of all the time flows which can be shown in following
calculation,
\begin{eqnarray*}
\frac{\d v_{\alpha_1,n_1}}{\partial{ t_{\alpha_2,n_2}}}&=&\res
\B_{\alpha_1,n_1}d_k (\B_{\alpha_2,n_2})_+\\
&=&\res
(\B_{\alpha_1,n_1})_-d_k (\B_{\alpha_2,n_2})_+\\
&=&\res
(\B_{\alpha_1,n_1})_-d_k \B_{\alpha_2,n_2}\\
&=&-\res
\B_{\alpha_2,n_2}d_k(\B_{\alpha_1,n_1})_- \\
&=&\res
\B_{\alpha_2,n_2}d_k (\B_{\alpha_1,n_1})_+\\
 &=&\frac{\d v_{\alpha_2,n_2}}{\partial{ t_{\alpha_1,n_1}}},\end{eqnarray*}
 \begin{eqnarray*}
\frac{\d v_{\alpha,n}}{\partial{ t_{\beta,m}}}&=&-\res
\B_{\alpha,n}d_k (\B_{\beta,m})_-\\
&=&-\res
(\B_{\alpha,n})_+d_k (\B_{\beta,m})_-\\
&=&-\res
(\B_{\alpha,n})_+d_k \B_{\beta,m}\\
&=&\res
\B_{\beta,m}d_k(\B_{\alpha,n})_+\\
 &=&\frac{\d \bar v_{\beta,m}}{\partial{ t_{\alpha,n}}},
 \end{eqnarray*}
 \begin{eqnarray*}
\frac{\d v_{\gamma,n}}{\partial x}&=&\res
\B_{\gamma,n}d \log k=
(\B_{\gamma,n})_0
 =\frac{\d \phi}{\partial{ t_{\gamma,n}}}.
 \end{eqnarray*}
 The other cases for commutativity can be proved in similar ways.
 So the 1-form of the right side of eq.\eqref{d tau} is closed.
 Therefore, there exist a tau function which satisfies  eq.\eqref{d
 tau}.
\end{proof}

\section{ Additional Symmetries of dBTH}
We are now in a position to define the additional flows, and then to
prove that they are symmetries, which are called additional
symmetries of the dBTH. We introduce additional
independent variables $t^*_{m,l}$, and define the action of the
additional flows on the wave operator as
\begin{equation}\label{definitionadditionalflowsonphi2d}
\nabla_{t^*_{m,l}, \varphi_L} \varphi_L
=-\left((\M_L-\M_R)^m\L^l\right)_{-},
\end{equation}

\begin{equation}\label{definitionadditionalflowsonphi2dr}
\nabla_{t^*_{m,l}, \varphi_R} \varphi_R
=\left((\M_L-\M_R)^m
\L^l\right)_{+},
\end{equation}

\begin{proposition}\label{additionalflowsLM}
The additional flows act on $\L$ and $\M_L$, $\M_R$as
\begin{equation}\label{ETHadditionalflow1}
\dfrac{\partial \L}{\partial{t^*_{m,l}}}=\{((\M_L-\M_R)^m
\L^l)_{+}, \L\},
\end{equation}
\begin{equation}\label{ETHadditionalflow2}
\dfrac{\partial \M_L}{\partial{t^*_{m,l}}}=-\{((\M_L-\M_R)^m
\L^l)_{-},
\M_L\}
\end{equation}
\begin{equation}\label{ETHadditionalflow3}
\dfrac{\partial \M_R}{\partial{t^*_{m,l}}}=\{((\M_L-\M_R)^m
\L^l)_{+},
\M_R\}
\end{equation}
\end{proposition}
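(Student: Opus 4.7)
The strategy is to mirror the proof of Theorem~\ref{flowsofM} for the ordinary flows, using in parallel both dressings $\L=e^{ad\varphi_L}(k^N)=e^{ad\bar\varphi_R}(k^{-M})$ together with the corresponding Orlov--Schulman dressings $\M_L=e^{ad\varphi_L}(\Gamma_L)$ and $\M_R=e^{ad\bar\varphi_R}(\Gamma_R)$. Set $\mathcal{X}:=(\M_L-\M_R)^m\L^l$ for brevity. The decisive preliminary is the canonical-type identity
\[
\{\L,\mathcal{X}\}=0,
\]
which follows at once from Theorem~\ref{flowsofM}: since $\{\L,\M_L\}=\{\L,\M_R\}=1$ we have $\{\L,\M_L-\M_R\}=0$, and the Leibniz rule for the Poisson bracket (together with $\{\L,\L^l\}=0$) propagates this to $\mathcal{X}$. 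Splitting $\mathcal{X}=\mathcal{X}_++\mathcal{X}_-$ into its non-negative and strictly negative parts in $k$, one therefore obtains $\{\mathcal{X}_+,\L\}=-\{\mathcal{X}_-,\L\}$, which will reconcile the apparent sign discrepancy between the definition \eqref{definitionadditionalflowsonphi2d} and the claim \eqref{ETHadditionalflow1}.

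For \eqref{ETHadditionalflow1} and \eqref{ETHadditionalflow2}, the plan is to apply $\partial_{t^*_{m,l}}$ to the left dressings $\L=e^{ad\varphi_L}(k^N)$ and $\M_L=e^{ad\varphi_L}(\Gamma_L)$ by the same Hausdorff-series manipulation that underlies the proof of Theorem~\ref{flowsofM}. Since neither $k^N$ nor $\Gamma_L$ depends on $t^*_{m,l}$, that calculation collapses to
\[
\partial_{t^*_{m,l}}\L=\{\nabla_{t^*_{m,l},\varphi_L}\varphi_L,\L\}=\{-\mathcal{X}_-,\L\}=\{\mathcal{X}_+,\L\},
\]
and analogously $\partial_{t^*_{m,l}}\M_L=\{-\mathcal{X}_-,\M_L\}$, which is \eqref{ETHadditionalflow2}.

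The main obstacle is \eqref{ETHadditionalflow3}, because the definition \eqref{definitionadditionalflowsonphi2dr} specifies the flow only on $\varphi_R$, whereas the dressing of $\M_R$ uses the composite $\bar\varphi_R$ built from $\phi$ and $\varphi_R$ by $e^{ad\bar\varphi_R}=e^{ad\phi}e^{ad\varphi_R}$. My plan is first to establish the induced flow on the full right dressing,
\[
\nabla_{t^*_{m,l},\bar\varphi_R}\bar\varphi_R=\mathcal{X}_+ .
\]
I would argue this by differentiating $\L=e^{ad\bar\varphi_R}(k^{-M})$ in $t^*_{m,l}$, matching the output against the expression for $\partial_{t^*_{m,l}}\L$ already obtained in \eqref{ETHadditionalflow1}, and invoking the uniqueness part of the dispersionless Sato proposition (dressings are unique up to right multiplication by a $k$-constant Laurent series). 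This pins down the compatible additional flow on $\phi$, and combining it with the prescribed flow on $\varphi_R$ via the Hausdorff expansion of $\bar\varphi_R$ gives the displayed identity. Once this is in hand, differentiating $\M_R=e^{ad\bar\varphi_R}(\Gamma_R)$ — again exploiting that $\Gamma_R$ carries no $t^*_{m,l}$-dependence — yields $\partial_{t^*_{m,l}}\M_R=\{\mathcal{X}_+,\M_R\}$ by the same Hausdorff calculation as before, completing \eqref{ETHadditionalflow3}.
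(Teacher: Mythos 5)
Your proposal follows essentially the same route as the paper's proof: differentiate the dressed expressions $e^{ad\varphi_L}(k^N)$, $e^{ad\varphi_L}(\Gamma_L)$ and $e^{ad\bar\varphi_R}(\Gamma_R)$ in $t^*_{m,l}$, using that $k^N$, $\Gamma_L$, $\Gamma_R$ carry no $t^*_{m,l}$-dependence, so the derivative lands entirely on the dressing. The only difference is that you make explicit two points the paper leaves implicit, namely the sign reconciliation $\{\mathcal{X}_+,\L\}=-\{\mathcal{X}_-,\L\}$ coming from $\{\L,\M_L\}=\{\L,\M_R\}=1$, and the determination of the induced additional flow on $\bar\varphi_R$ (hence on $\phi$) needed for the $\M_R$ identity, which the paper dispatches with ``other identities can also be got in the similar way.''
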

\noindent \textbf{Proof} By performing the derivative on $\L$ and
using eq.(\ref{ETHadditionalflow1}), we get
\begin{eqnarray*}
\partial_{t^*_{m,l}}\L=&& \partial_{t^*_{m,l}}(e^{ad \varphi_L}(k^N))\\&
=&\{\nabla_{t^*_{m,l}, \varphi_L} \varphi_L,e^{ad \varphi_L}(k^N)\}\\
&=&-[((\M_L-\M_R)^m
\L^l)_{-}, \L].
\end{eqnarray*}
For the action on $\M_L$ and $\M_R$ given in eq.(\ref{Moperator}),
there exists similar derivation as $(\partial_{t^*_{m,l}}\L)$,i.e.
\begin{eqnarray*}
(\partial_{t^*_{m,l}}\M_L)&=&\partial_{t^*_{m,l}}(e^{ad \varphi_L}(\Gamma_L))\\
&=&\{\nabla_{t^*_{m,l}, \varphi_L} \varphi_L,e^{ad \varphi_L}(\Gamma_L)\}\\
&=&-\{((\M_L-\M_R)^m
\L^l)_{-}, \M_L\}.
\end{eqnarray*}
Here the fact that $\Gamma_L$ does not depend on the additional
flows variables $t^*_{m,l}$ has been used. Other identities can
also be got in the similar way. \hfill $\square$ \\

From that, we can prove the following corollary:
%%%%%%%%%%%%%%%%%%%%%%%%%%%%%%%%%%%%%%%%%%%%%%%%%%%%%
\begin{corollary}\label{additionflowsonLnMmAnk}
The following several equations hold
\begin{equation}\label{ETHadditionalflow4}
\dfrac{\partial \L^n}{\partial{t^*_{m,l}}}=\{((\M_L-\M_R)^m
\L^l)_{+},
\L^n\},\; \dfrac{\partial
\L^n}{\partial{t^*_{m,l}}}=\{-((\M_L-\M_R)^m
\L^l)_{-}, \L^n\}
\end{equation}
\begin{equation}\label{ETHadditionalflow4'}
\dfrac{\partial
\B_{\alpha,n}}{\partial{t^*_{m,l}}}=-\{((\M_L-\M_R)^m
\L^l)_{-},
\B_{\alpha,n}\},\; \dfrac{\partial
\B_{\beta,n}}{\partial{t^*_{m,l}}}=\{((\M_L-\M_R)^m
\L^l)_{+},
\B_{\beta,n}\}
\end{equation}
\begin{equation}\label{ETHadditionalflow5}
\dfrac{\partial \M_\L^n}{\partial{t^*_{m,l}}}=-\{((\M_L-\M_R)^m
\L^l)_{-},
\M_\L^n\},\;
 \dfrac{\partial\M_R^n}{\partial{t^*_{m,l}}}=\{((\M_L-\M_R)^m
\L^l)_{+}, \M_\L^n\},
\end{equation}

\begin{equation}\label{eqadditionflowsonLnMmAnk'}
 \dfrac{\partial \M_\L^n\L^k}{\partial{t^*_{m,l}}}=-\{((\M_L-\M_R)^m
\L^l)_{-}, \M_\L^n\L^k\},\;
\end{equation}
\begin{equation}\label{eqadditionflowsonLnMmAnk}
 \dfrac{\partial \M_R^n\L^k}{\partial{t^*_{m,l}}}=\{((\M_L-\M_R)^m
\L^l)_{+}, \M_R^n\L^k\}.
\end{equation}

\end{corollary}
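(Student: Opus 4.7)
The plan is to derive every identity in the corollary as a direct consequence of Proposition \ref{additionalflowsLM} combined with the derivation property of the dispersionless Poisson bracket $\{\cdot,\cdot\}$. Writing $P_{m,l} := (\M_L-\M_R)^m \L^l$ for brevity, all six families of identities follow the same template: first apply the Leibniz rule on the time derivative side, then collapse the sum back into a single Poisson bracket using that $\{\cdot, X\}$ is a derivation on the algebra of Laurent series in $k$.

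First I would dispatch the identities for powers of $\L$. From Proposition \ref{additionalflowsLM}, $\partial_{t^*_{m,l}}\L = \{(P_{m,l})_+, \L\}$, so the derivation property yields $\partial_{t^*_{m,l}}\L^n = n\L^{n-1}\{(P_{m,l})_+,\L\} = \{(P_{m,l})_+,\L^n\}$. The equivalence with the ``$-$'' form uses Theorem \ref{flowsofM}, which gives $\{\L,\M_L\} = \{\L,\M_R\} = 1$ and hence $\{\L,\M_L-\M_R\} = 0$; Leibniz then shows $\{\L^n, P_{m,l}\}=0$, so $\{(P_{m,l})_+,\L^n\} = -\{(P_{m,l})_-,\L^n\}$. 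Since $\B_{\alpha,n}$ and $\B_{\beta,n}$ are themselves fractional powers of $\L$, the identities for them follow from the same argument applied to the appropriate fractional exponent.

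The $\M_L^n$ and $\M_R^n$ identities go through the same pattern: $\partial_{t^*_{m,l}}\M_L^n = n\M_L^{n-1}\partial_{t^*_{m,l}}\M_L = -n\M_L^{n-1}\{(P_{m,l})_-,\M_L\} = -\{(P_{m,l})_-,\M_L^n\}$, and symmetrically for $\M_R^n$ using the ``$+$'' projection. For the mixed products $\M_L^n\L^k$ I apply the ordinary Leibniz rule in $t^*_{m,l}$ and then collect terms, rewriting $\partial_{t^*_{m,l}}\L^k$ in its equivalent ``$-$'' form so that both factors use the same projection:
$$\partial_{t^*_{m,l}}(\M_L^n\L^k) = -\{(P_{m,l})_-,\M_L^n\}\L^k - \M_L^n\{(P_{m,l})_-,\L^k\} = -\{(P_{m,l})_-,\M_L^n\L^k\},$$
the last step being another application of Leibniz, this time inside the bracket. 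The $\M_R^n\L^k$ case is analogous with the ``$+$'' projection used uniformly.

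The only delicate point, and the place where the canonical relations $\{\L,\M_L\} = \{\L,\M_R\} = 1$ are genuinely needed, is the equivalence of the ``$+$'' and ``$-$'' forms of $\partial_{t^*_{m,l}}\L^k$ exploited in the mixed-product step; without this one could not line up the two Leibniz contributions into a single projected bracket. Everything else is a formal manipulation of Poisson brackets on Laurent series, so I expect no real obstacle beyond bookkeeping of signs and projection subscripts.
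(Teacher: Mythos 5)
Your proposal is correct and follows essentially the same route as the paper: expand the $t^*_{m,l}$-derivative of the product by the Leibniz rule, substitute the flows from Proposition \ref{additionalflowsLM}, and recombine using the derivation property of the Poisson bracket (the paper writes this out only for $\L^n$ and declares the rest similar). Your explicit justification of the equivalence of the ``$+$'' and ``$-$'' forms via $\{\L,\M_L-\M_R\}=0$ is a detail the paper leaves implicit, but it does not change the method.
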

%%%%%%%%%%%%%%%%%%%%%%%%%%%%%%%%%%%%%%%%%%%%%%%%%%%%%
\noindent \textbf{Proof} We present here only the proof of the first
equation. The others can be proved in a similar way. The derivative
of $\L^n$ with respect to $t^*_{m,l}$ leads to
\begin{eqnarray*}
\dfrac{\partial \L^n}{\partial{t^*_{m,l}}}=\dfrac{\partial
\L}{\partial{t^*_{m,l}}}\L^{n-1}+ \L \dfrac{\partial
\L}{\partial{t^*_{m,l}}} \L^{n-2}+\cdots +\L^{n-2} \dfrac{\partial
\L}{\partial{t^*_{m,l}}} \L +\L^{n-1} \dfrac{\partial
\L}{\partial{t^*_{m,l}}}=\sum\limits_{k=1}^n \L^{k-1} \dfrac{\partial
\L}{\partial{t^*_{m,l}}} \L^{n-k}
\end{eqnarray*}
and then taking $\dfrac{\partial
\L}{\partial{t_{m,l}}}=\{((\M_L-\M_R)^m
\L^l)_{+}, \L\}$ into the above
formula. After that, we get
\begin{eqnarray*}
\dfrac{\partial \L^n}{\partial{t^*_{m,l}}}=\sum\limits_{k=1}^n \L^{k-1}
\{((\M_L-\M_R)^m
\L^l)_{+}, \L\}
 \L^{n-k}=\{((\M_L-\M_R)^m
\L^l)_{+}, \L^n\}. \text{\hspace{5cm}} \square
\end{eqnarray*}
\begin{proposition}
The additional flows $\dfrac{\partial }{\partial{t^*_{m,l}}}$ commute with the dispersionless bigraded Toda
hierarchy flows $\dfrac{\partial }{\partial{t_{c,n}}}$, i.e.
\begin{equation}
[\partial_{t^*_{m,l}}, \partial_{t_{c,n}}]\L=0,
\end{equation}where $-M+1\leq c \leq N.$
Here
$\partial_{t^*_{m,l}}=\frac{\partial}{\partial{t^*_{m,l}}},
\partial_{t_{c,n}}=\frac{\partial}{\partial{t_{c,n}}}$.
\end{proposition}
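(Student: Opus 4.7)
The plan is to reduce the commutation relation to a Zakharov--Shabat type zero-curvature identity between the two flow generators, and then to verify that identity by projecting onto the $(\cdot)_+$ and $(\cdot)_-$ parts. Set $C := (\M_L - \M_R)^m\L^l$. By Proposition 4.1 the additional flow reads $\partial_{t^*_{m,l}}\L = \{C_+,\L\}$, while $\partial_{t_{c,n}}\L = \{\A_{c,n},\L\}$ by Definition 2.1. Applying both derivations, using the Leibniz rule on $\{\,,\,\}$ together with the Jacobi identity, one obtains
\begin{equation*}
  [\partial_{t^*_{m,l}},\partial_{t_{c,n}}]\L = \bigl\{\partial_{t^*_{m,l}}\A_{c,n} - \partial_{t_{c,n}}C_+ + \{\A_{c,n},C_+\},\,\L\bigr\},
\end{equation*}
so it is enough to show that the expression inside the outer bracket vanishes.

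The key algebraic ingredient is the identity $\{\L,\M_L-\M_R\}=0$, which follows immediately from $\{\L,\M_L\}=\{\L,\M_R\}=1$ (Theorem 3.1). A Leibniz expansion then yields $\{C,\L\}=0$ and, more generally, $\{\B_{c,n},C\}=0$ for every power $\B_{c,n}$ of $\L$. Splitting $C = C_+ + C_-$ this gives $\{C_+,\B_{c,n}\} = -\{C_-,\B_{c,n}\}$, which is the seed for everything that follows. The time derivatives themselves are available from Theorem 3.1 and Corollary 4.3: $\partial_{t_{c,n}}C = \{\A_{c,n},C\}$, and $\partial_{t^*_{m,l}}\B_{c,n} = \{C_+,\B_{c,n}\} = -\{C_-,\B_{c,n}\}$.

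Treating $c\geq 1$ as the model case (so $\A_{c,n}=(\B_{c,n})_+$), taking $(\cdot)_+$ projections of the two preceding formulas turns the required identity into
\begin{equation*}
 (\{C_+,\B_{c,n}\})_+ - (\{\A_{c,n},C\})_+ + \{\A_{c,n},C_+\} = 0.
\end{equation*}
The last term is already supported in strictly positive powers of $k$ by the elementary fact that the Poisson bracket of two series supported in non-negative powers is supported in strictly positive powers; splitting off $(\{\A_{c,n},C_+\})_+ = \{\A_{c,n},C_+\}$ in the middle term cancels it against the last. One is left with $(\{C_+,\B_{c,n}\})_+ - (\{\A_{c,n},C_-\})_+$, and using $\{C,\B_{c,n}\}=0$ to rewrite $\{C_+,\B_{c,n}\}=-\{C_-,\B_{c,n}\}$, together with the parallel fact that the Poisson bracket of two series supported in strictly negative powers of $k$ is again supported in strictly negative powers (so has vanishing $(\cdot)_+$), collapses the difference to zero. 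The case $c\leq 0$ runs parallel with the roles of $(\cdot)_+$ and $(\cdot)_-$ interchanged, using $\A_{c,n}=-(\B_{c,n})_-$ and the companion formulas in Corollary 4.3.

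The main obstacle is administrative rather than conceptual: one must keep careful track of which factors land in $(\cdot)_+$ versus $(\cdot)_-$ after a Poisson bracket, and handle the two ranges of $c$ separately. Once the key identity $\{\L,\M_L-\M_R\}=0$ is in place, everything else is algebraic bookkeeping that has been set up by Theorem 3.1, Proposition 4.1, and Corollary 4.3.
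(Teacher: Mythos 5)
Your proof is correct. It rests on the same two pillars as the paper's own argument --- the Jacobi identity for $\{\,,\,\}$ and the bookkeeping of $(\cdot)_\pm$ projections combined with $\{\L,\M_L-\M_R\}=0$ --- so the difference is one of packaging rather than substance, but the packaging is genuinely different. The paper differentiates the dressed representation $\L=e^{ad\varphi_L}(k^N)$ twice, using the Sato-type equations $\nabla_{t_{c,n},\varphi_L}\varphi_L=-(\B_{c,n})_-$ and $\nabla_{t^*_{m,l},\varphi_L}\varphi_L=-C_-$ with $C=(\M_L-\M_R)^m\L^l$, collects the four resulting terms by Jacobi, and is left with $\{\L,\;\{(\B_{c,n})_-,C_+\}+\{C_+,\B_{c,n}\}_-+\{-(\B_{c,n})_-,C\}_+\}$, whose inner argument dies by exactly the projection lemmas you invoke (indeed, at that final step the paper does not even need $\{C,\B_{c,n}\}=0$). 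You instead compress the commutator into a single Zakharov--Shabat identity $\partial_{t^*_{m,l}}\A_{c,n}-\partial_{t_{c,n}}C_++\{\A_{c,n},C_+\}=0$ on the generators and verify it, which requires importing $\partial_{t_{c,n}}C=\{\A_{c,n},C\}$ and $\partial_{t^*_{m,l}}\B_{c,n}=\{C_+,\B_{c,n}\}=-\{C_-,\B_{c,n}\}$ from Theorem 3.1 and Corollary 4.3. What your route buys is an explicit zero-curvature formulation of the statement and a symmetric treatment of the two ranges of $c$; what it costs is a heavier reliance on the previously established induced flow equations, whereas the paper's computation is self-contained at the level of the dressing function. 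I checked the details: the reduction via Leibniz and Jacobi, the support claims (a bracket of two series in non-negative powers of $k$ has no $(\cdot)_-$ part, a bracket of two series in strictly negative powers has no $(\cdot)_+$ part), the use of $\{C,\B_{c,n}\}=0$ to trade $\{C_+,\B_{c,n}\}$ for $-\{C_-,\B_{c,n}\}$, and the mirrored argument for $c\leq 0$ all go through.
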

\begin{proof}
According to the definition
and using the action of the additional flows  on $\L$, we get
\begin{eqnarray*}
&&[\partial_{t^*_{m,l}},\partial_{t_{c,n}}]\L\\
 &=&\partial_{t^*_{m,l}}\partial_{t_{c,n}}e^{ad \varphi_L}(k^N)-
 \partial_{t_{c,n}}\partial_{t^*_{m,l}}e^{ad \varphi_L}(k^N)\\
&=& \partial_{t^*_{m,l}}\{\nabla_{t_{c,n}, \varphi_L} \varphi_L,e^{ad
\varphi_L}(k^N)\}-
 \partial_{t_{c,n}}\{\nabla_{t^*_{m,l}, \varphi_L} \varphi_L, e^{ad \varphi_L}(k^N)\}\\
&=& \partial_{t^*_{m,l}}\{-(\B_{c,n})_-,\L\}-
 \partial_{t_{c,n}}\{\left((\M_L-\M_R)^m
\L^l\right)_{+}, \L\}\\
&=&
\{-\{\left((\M_L-\M_R)^m
\L^l\right)_{+},\B_{c,n}\}_-,\L\}+\{-(\B_{c,n})_-,\{\left((\M_L-\M_R)^m
\L^l\right)_{+},
\L\}\}- \\
&&\{\{-(\B_{c,n})_-,(\M_L-\M_R)^m
\L^l\}_{+}, \L\}- \{\left((\M_L-\M_R)^m
\L^l\right)_{+}, \{-(\B_{c,n})_-,\L\}\}\\
&=&
\{-\{\left((\M_L-\M_R)^m
\L^l\right)_{+},\B_{c,n}\}_-,\L\}+\{\L,\{(\B_{c,n})_-,\left((\M_L-\M_R)^m
\L^l\right)_{+}
\}\}\\
&&- \{\{-(\B_{c,n})_-,(\M_L-\M_R)^m
\L^l\}_{+}, \L\}
\\
&=&\{\L,\{(\B_{c,n})_-,\left((\M_L-\M_R)^m
\L^l\right)_{+}
\}+\{\left((\M_L-\M_R)^m
\L^l\right)_{+},\B_{c,n}\}_-+\\
&&\{-(\B_{c,n})_-,(\M_L-\M_R)\L^l\}_{+}\}\\
&=&0.
\end{eqnarray*}
\end{proof}
The commutativity of additional flows with flows of the dBTH means that the additional flows are symmetries of the dBTH. It is a kind of Block type symmetry of the dBTH which will be shown in the next proposition.

\begin{proposition}\label{WinfiniteCalgebra}
Additional flows  $\partial_{t^*_{m,l}}(m\geq 0,l\geq 0)$ form the
following Block type Lie algebra
 \begin{eqnarray}\label{algebra relation}
[\partial_{t^*_{m,l}},\partial_{t^*_{n,k}}]\L= (km-nl)\d^*_{m+n-1,k+l-1}\L.
\end{eqnarray}
where  $m,n\geq 0; l,k\geq 0.$
\end{proposition}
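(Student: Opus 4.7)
Write $E_{m,l}:=(\M_L-\M_R)^m\L^l$ for brevity. The key preliminary observation is the bracket identity $\{\L,\M_L-\M_R\}=\{\L,\M_L\}-\{\L,\M_R\}=1-1=0$ from Theorem~\ref{flowsofM}. Combined with $\{\L,\L\}=0$ and the Leibniz rule, this forces $\{E_{m,l},E_{n,k}\}=0$ identically, which is the algebraic reason that the algebra is of Block type rather than $W_\infty$ type.

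Since $\partial_{t^*_{m,l}}\L=\{(E_{m,l})_+,\L\}$ and the Poisson bracket commutes with $\partial_{t^*}$, applying the derivation rule together with the Jacobi identity reduces the commutator to
\begin{equation}
[\partial_{t^*_{m,l}},\partial_{t^*_{n,k}}]\L=\{Z,\L\},\qquad Z:=(\partial_{t^*_{m,l}}E_{n,k})_+-(\partial_{t^*_{n,k}}E_{m,l})_++\{(E_{n,k})_+,(E_{m,l})_+\},
\end{equation}
so it suffices to prove $Z=(mk-nl)(E_{m+n-1,\,k+l-1})_+$.

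Next I would compute $\partial_{t^*_{m,l}}E_{n,k}$ using Proposition~\ref{additionalflowsLM} and Leibniz. The subtle point is that $\M_L$ and $\M_R$ evolve asymmetrically, so one rewrites $\partial_{t^*_{m,l}}(\M_L-\M_R)=\{(E_{m,l})_+,\M_L-\M_R\}-\{E_{m,l},\M_L\}$; expanding $\{E_{m,l},\M_L\}$ by Leibniz separates a ``good'' piece $l(\M_L-\M_R)^m\L^{l-1}$ coming from $\{\L,\M_L\}=1$ from an anomaly $mc(\M_L-\M_R)^{m-1}\L^l$ with $c:=\{\M_L,\M_R\}$. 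Assembling,
\begin{equation}
\partial_{t^*_{m,l}}E_{n,k}=\{(E_{m,l})_+,E_{n,k}\}-mn\,c\,E_{m+n-2,\,l+k}-nl\,E_{m+n-1,\,k+l-1}.
\end{equation}
The anomaly coefficient $mn$ and the indices $(m+n-2,l+k)$ are both symmetric under $(m,l)\leftrightarrow(n,k)$, so $c$ drops out of the antisymmetrization $(\partial_{t^*_{m,l}}E_{n,k})_+-(\partial_{t^*_{n,k}}E_{m,l})_+$, producing exactly $(mk-nl)(E_{m+n-1,\,k+l-1})_+$ plus a remainder $(\{(E_{m,l})_+,E_{n,k}\})_+-(\{(E_{n,k})_+,E_{m,l}\})_+$.

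The final step is to show that the mixed-bracket tail $(\{(E_{m,l})_+,E_{n,k}\})_+-(\{(E_{n,k})_+,E_{m,l}\})_++\{(E_{n,k})_+,(E_{m,l})_+\}$ vanishes. This uses the identity from step one: since $\{A_+,B_+\}$ has only non-negative powers of $k$ while $\{A_-,B_-\}$ has only strictly negative powers, projecting the equation $\{E_{m,l},E_{n,k}\}=0$ onto non-negative powers yields $(\{(E_{m,l})_+,(E_{n,k})_-\})_++(\{(E_{m,l})_-,(E_{n,k})_+\})_+=-\{(E_{m,l})_+,(E_{n,k})_+\}$, and substituting kills all three surviving terms. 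Combining everything, $Z=(mk-nl)(E_{m+n-1,\,k+l-1})_+$, hence $[\partial_{t^*_{m,l}},\partial_{t^*_{n,k}}]\L=(km-nl)\partial_{t^*_{m+n-1,\,k+l-1}}\L$. The principal obstacle is the $\M_L$-versus-$\M_R$ asymmetry of the additional flows; the saving grace is that the resulting $\{\M_L,\M_R\}$-anomaly enters symmetrically in the two pairs of indices and so never needs to be evaluated explicitly.
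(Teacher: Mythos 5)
Your proposal is correct and follows essentially the same route as the paper: a Leibniz expansion of the additional flows acting on $(\M_L-\M_R)^n\L^k$, antisymmetrization in $(m,l)\leftrightarrow(n,k)$, and cancellation of the mixed-projection bracket terms, the only cosmetic difference being that you work with the $(\cdot)_+$ projection where the paper uses $(\cdot)_-$ (the two are interchangeable here precisely because $\{(\M_L-\M_R)^m\L^l,\L\}=0$). Your write-up is in fact more complete than the paper's, since you isolate and justify the two cancellations the paper passes over silently — the symmetric $\{\M_L,\M_R\}$ anomaly arising from the asymmetric evolution of $\M_L$ and $\M_R$, and the vanishing of the residual bracket tail via the identity $\{(\M_L-\M_R)^m\L^l,(\M_L-\M_R)^n\L^k\}=0$ projected onto non-negative powers of $k$.
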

\begin{proof}
 By using
Proposition\ref{additionalflowsLM}, we get
\begin{eqnarray*}
[\partial_{t^*_{m,l}},\partial_{t^*_{n,k}}]\L&=&
\partial_{t^*_{m,l}}(\partial_{t^*_{n,k}}\L)-
\partial_{t^*_{n,k}}(\partial_{t^*_{m,l}}\L)\\
&=&-\partial_{t^*_{m,l}}\{((\M_L-\M_R)^n\L^k)_{-},\L\}
+\partial_{t^*_{n,k}}\{((\M_L-\M_R)^m\L^l)_{-},\L\}\\
&=&-\{\{\partial_{t^*_{m,l}}
(\M_L-\M_R)^n\L^k\}_{-},\L\}-\{((\M_L-\M_R)^n\L^k)_{-},\partial_{t^*_{m,l}} \L\}\\
&&+ \{\{\partial_{t^*_{n,k}} (\M_L-\M_R)^m\L^l)_{-},\L\}+\{
((\M_L-\M_R)^m\L^l)_{-},\partial_{t^*_{n,k}} \L\},
\end{eqnarray*}
which further leads to the following calculation
 \begin{eqnarray*}&&
[\partial_{t^*_{m,l}},\partial_{t^*_{n,k}}]\L\\
&=&-\left\{\Big[\sum_{p=0}^{n-1}
(\M_L-\M_R)^p(\partial_{t^*_{m,l}}(\M_L-\M_R))(\M_L-\M_R)^{n-p-1}\L^k
+(\M_L-\M_R)^n(\partial_{t^*_{m,l}}\L^k)\Big]_-,\L\right\}\\&&-\{((\M_L-\M_R)^n\L^k)_{-},\partial_{t^*_{m,l}} \L\}\\
&&+\left\{\Big[\sum_{p=0}^{m-1}
(\M_L-\M_R)^p(\partial_{t^*_{n,k}}(\M_L-\M_R))(\M_L-\M_R)^{m-p-1}\L^l
+(\M_L-\M_R)^m(\partial_{t^*_{n,k}}\L^l)\Big]_{-},\L\right\}\\&&+
\{((\M_L-\M_R)^m\L^l)_{-},\partial_{t^*_{n,k}} \L\}\\
&=&\{[(nl-km)(\M_L-\M_R)^{m+n-1}\L^{k+l-1}]_-,\L\}\\
&=&(km-nl)\d^*_{m+n-1,k+l-1}\L.
\end{eqnarray*}
\end{proof}

From Proposition \ref{WinfiniteCalgebra}, it can be seen that the additional symmetry has a nice Block type Lie algebraic structure
whose structure theory and  representation theory have recently received much attention.
The difference of this Block Lie algebra from the one in \cite{ourBlock} is the representation space here is functional space and the one in  \cite{ourBlock} is space of operators. Similarly as \cite{ourBlock}, the action of this kind of Lie algebra on tau function space is still hard to handle with.

\sectionnew{Quasi-classical limit of the BTH}
To  consider the quasi-classical limit of the BTH, it is convenient to introduce the  order and the principal symbol of functions of difference operators.
The  order and the principal symbol are
defined for the difference operators as follows\cite{Takasaki}.

Define the order of operator $a_{n,m}(t,x) \ep^n e^{m \epsilon \d_x}$ as following
$$
    ord \left(
    \sum a_{n,m}(t,x) \ep^n e^{m \epsilon \d_x} \right)
    =\max \{ n \,|\, a_{n,m}(t,x) \neq 0\}.
$$ The {\it
principal symbol} of a difference operator $A = \sum
a_{n,m}\epsilon^n \exp(m\epsilon\d_x )$ is defined as
$$
  \sigma ^{\epsilon}(A)
   = \epsilon^{-ord(A)} \sum_{n = ord(A)} \sum_m a_{n,m}  k^m.
$$

To see the limit, we need to recall the Lax operator of the BTH given by the Laurent polynomial of $\Lambda$ \cite{C}
\begin{equation}\label{LBTH}
L:=\Lambda^{N}+u_{N-1}\Lambda^{N-1}+\cdots+u_0+\cdots + u_{-M}
\Lambda^{-M}.
\end{equation}
 The $L$ can be
written in two different ways by dressing the shift operator
\begin{equation}\label{two dressing}
L=\P_L\Lambda^N\P_L^{-1} = \P_R \Lambda^{-M}\P_R^{-1},
\end{equation}
where the dressing operators have the form,
 \begin{align}
 \P_L&=1+w_1\Lambda^{-1}+w_2\Lambda^{-2}+\ldots,
\label{dressP}\\[0.5ex]
 \P_R&=\tilde{w_0}+\tilde{w_1}\Lambda+\tilde{w_2}\Lambda^2+ \ldots.
\label{dressQ}
\end{align}
Eq.\eqref{two dressing} are quite important because it gives the reduction condition from the two-dimensional Toda lattice hierarchy.
 The
pair is unique up to multiplying $\P_L$ and $\P_R$ from the right
 by operators in the form  $1+
a_1\Lambda^{-1}+a_2\Lambda^{-2}+...$ and $\tilde{a}_0 +
\tilde{a}_1\Lambda +\tilde{a}_2\Lambda^2+\ldots$ respectively with
coefficients independent of $x$. Given any difference operator $A=
\sum_k A_k \Lambda^k$, the positive and negative projections are
defined by $A_+ = \sum_{k\geq0} A_k \Lambda^k$ and $A_- = \sum_{k<0}
A_k \Lambda^k$.

To write out  explicitly the Lax equations  of BTH,
  fractional powers $\L^{\frac1N}$ and
$L^{\frac1M}$ are defined by
\begin{equation}
  \notag
  L^{\frac1N} = \Lambda+ \sum_{k\leq 0} a_k \Lambda^k , \qquad L^{\frac1M} = \sum_{k \geq -1} b_k
  \Lambda^k,
\end{equation}
with the relations
\begin{equation}
  \notag
  (L^{\frac1N} )^N = (\L^{\frac1M} )^M = \L.
\end{equation}
Acting on free function, these two fraction powers can be seen as two different locally expansions around zero and infinity respectively.
It was  stressed that $\L^{\frac1N}$ and $\L^{\frac1M}$ are two
different operators even if $N=M(N, M\geq 2)$ in \cite{C} due to two different dressing operators. They can also be
expressed as following
\begin{equation}
\notag
  L^{\frac1N} = \P _{L}\Lambda\P_{L}^{-1}, \qquad L^{\frac1M} = \P_{R}\Lambda^{-1} \P_{R}^{ -1}.
\end{equation}
\begin{definition} \label{deflax2}
The  bigraded Toda hierarchy consists of the system
of flows given in the Lax pair formalism by
\begin{equation}
  \label{edef'}
\epsilon\frac{\partial L}{\partial t_{\alpha, n}} = [ A_{\alpha,n}
,L ]
\end{equation}
for $\alpha = N,N-1,N-2, \dots , -M+1$ and $n \geq 0$. The  operators $A_{\alpha ,n}$ are defined by
\begin{align}
  &A_{\alpha,n} = ( L^{n+1-\frac{\alpha-1}N })_+ \quad \text{for} \quad \alpha = N,N-1, \dots, 1\\
  &A_{\alpha,n} = -( L^{n+1+\frac{\alpha}M })_- \quad \text{for} \quad \alpha = 0, \dots,
  -M+1.
\end{align}
\end{definition}

The coefficients $u_i$ of $L$ are set to be regular to $\epsilon$,
i.e. $u_i(\epsilon,t,x)=u_i^0(t,x)+O(\epsilon)$.
\begin{proposition} \label{propZS2}
If $L$ satisfies the Lax equations then we have the following
Zakharov-Shabat equations
\begin{equation}
\label{zs2}  \epsilon(A_{\alpha,m})_{t_{\beta,n}} - \epsilon(
A_{\beta, n})_{t_{\alpha,m}} + [ A_{\alpha,m} , A_{\beta,n} ] =0
\end{equation}
for $-M+1 \leq \alpha, \beta \leq N$ ,  $m, n \geq 0$.
\end{proposition}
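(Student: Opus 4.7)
The plan is to imitate the standard Zakharov--Shabat derivation, but promoted to the bigraded setting where we have two dressing operators $\P_L,\P_R$ and two fractional powers $L^{1/N},L^{1/M}$. Throughout I abbreviate $B_{\alpha,m}:=L^{n+1-(\alpha-1)/N}$ for $\alpha\geq 1$ and $B_{\alpha,m}:=L^{n+1+\alpha/M}$ for $\alpha\leq 0$, so that $A_{\alpha,m}=(B_{\alpha,m})_+$ or $-(B_{\alpha,m})_-$ in the two regimes.

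First I would upgrade the Lax equation \eqref{edef'} to arbitrary $L$-powers. Writing $L^{1/N}=\P_L\Lambda\P_L^{-1}$ and $L^{1/M}=\P_R\Lambda^{-1}\P_R^{-1}$ and invoking the Sato equations for $\P_L,\P_R$ (the difference-operator analogues of Proposition~2.5 in the dispersionless case, which are standard and can be cited from \cite{C,TH}), one checks that $\epsilon\partial_{t_{\beta,n}}L^{1/N}=[A_{\beta,n},L^{1/N}]$ and $\epsilon\partial_{t_{\beta,n}}L^{1/M}=[A_{\beta,n},L^{1/M}]$ for every admissible $\beta$. Raising to the appropriate power, this yields
\[
\epsilon\,\partial_{t_{\beta,n}}B_{\alpha,m}=[A_{\beta,n},B_{\alpha,m}]
\]
for all $\alpha,\beta$, $m,n\geq 0$.

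Next I would project this identity onto the $+$ (or $-$) part and use that $B_{\alpha,m}$ and $B_{\beta,n}$, being two powers of the same $L$, commute:
\[
[B_{\alpha,m},B_{\beta,n}]=0.
\]
Splitting each $B$ as $B=B_+ + B_-$ and expanding the vanishing commutator gives the classical identity
\[
([A_{\beta,n},B_{\alpha,m}])_{\pm}-([A_{\alpha,m},B_{\beta,n}])_{\pm}+[A_{\alpha,m},A_{\beta,n}]=0,
\]
where the choice of sign and any overall minus sign is dictated by whether $\alpha$ and $\beta$ lie in the $+$ or $-$ regime. Substituting $\epsilon\partial_{t_{\beta,n}}A_{\alpha,m}=([A_{\beta,n},B_{\alpha,m}])_{\pm}$ (and similarly with indices swapped) converts the displayed identity into precisely \eqref{zs2}.

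The main obstacle is purely bookkeeping rather than conceptual: the definition of $A_{\alpha,n}$ switches sign and projection at $\alpha=0$, so four separate cases ($\alpha,\beta\geq 1$; $\alpha\geq 1,\beta\leq 0$; $\alpha\leq 0,\beta\geq 1$; $\alpha,\beta\leq 0$) must be checked, and one must verify that the fractional-power Lax equation holds uniformly for times attached to both dressings --- i.e.\ that $L^{1/N}$ responds to a ``$-M+1\leq\beta\leq 0$'' flow (built from $L^{1/M}$) by the same commutator law, and vice versa. This last uniformity is the genuine bigraded input; once it is in place, the Zakharov--Shabat identity follows by the algebraic manipulation sketched above, exactly parallel to Proposition~\ref{propZS} in the dispersionless case.
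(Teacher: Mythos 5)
The paper states Proposition \ref{propZS2} without any proof (it is imported wholesale from Carlet's construction of the bigraded Toda hierarchy \cite{C}, just as Proposition \ref{propZS} is stated without proof in the dispersionless setting), so there is no in-text argument to measure you against. Your sketch follows the standard Zakharov--Shabat route, and the two inputs you single out --- the fractional-power Lax equations $\epsilon\partial_{t_{\beta,n}}L^{1/N}=[A_{\beta,n},L^{1/N}]$ and $\epsilon\partial_{t_{\beta,n}}L^{1/M}=[A_{\beta,n},L^{1/M}]$ holding uniformly for times of both regimes, plus the four-way case split at $\alpha=0$ --- are indeed the genuinely bigraded content.

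There is, however, one step that does not survive scrutiny as written: the blanket appeal to $[B_{\alpha,m},B_{\beta,n}]=0$ ``because both are powers of the same $L$.'' When $\alpha\geq 1$ and $\beta\leq 0$, $B_{\alpha,m}$ is a power of $L^{1/N}=\P_L\Lambda\P_L^{-1}$ while $B_{\beta,n}$ is a power of $L^{1/M}=\P_R\Lambda^{-1}\P_R^{-1}$; the paper stresses that these are \emph{different} operators living in opposite completions (one has infinitely many negative powers of $\Lambda$, the other infinitely many positive powers), so their full commutator is not even a well-defined difference operator --- each coefficient would be an infinite sum. The repair is that the mixed case never needs this identity: with $A_{\alpha,m}=(B_{\alpha,m})_+$ and $A_{\beta,n}=-(B_{\beta,n})_-$ one computes
\begin{align*}
\epsilon\partial_{t_{\beta,n}}A_{\alpha,m}-\epsilon\partial_{t_{\alpha,m}}A_{\beta,n}+[A_{\alpha,m},A_{\beta,n}]
&=-\bigl([(B_{\beta,n})_-,(B_{\alpha,m})_+]\bigr)_+ +\bigl([(B_{\alpha,m})_+,(B_{\beta,n})_-]\bigr)_- -[(B_{\alpha,m})_+,(B_{\beta,n})_-]\\
&=\bigl([(B_{\alpha,m})_+,(B_{\beta,n})_-]\bigr)_+ +\bigl([(B_{\alpha,m})_+,(B_{\beta,n})_-]\bigr)_- -[(B_{\alpha,m})_+,(B_{\beta,n})_-]=0,
\end{align*}
using only the trivial projections $([(B_{\beta,n})_-,(B_{\alpha,m})_-])_+=0$ and $([(B_{\alpha,m})_+,(B_{\beta,n})_+])_-=0$; every commutator appearing here is well defined because $(B_{\alpha,m})_+$ has finitely many terms. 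In the two unmixed cases your argument stands as is, since there $B_{\alpha,m}$ and $B_{\beta,n}$ really are powers of a single root and commute. With that one correction the proof is complete and agrees with the standard one in \cite{C}.
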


 Using the Zakharov-Shabat eqs.(\ref{zs}) we can prove that the
flows of eqs.\eqref{edef'} can commute pairwise.
\begin{lemma} \label{zs corollary2}
\begin{align}\label{compatible zero curvature2}
 &\epsilon\partial_{\beta,n}(B_{\alpha,m})_{-} - \epsilon\partial_{\alpha,m}(B_{\beta,
n})_{-} - [ (B_{\alpha,m})_{-} , (B_{\beta,n})_{-} ] =
0\\
 &-\epsilon\partial_{\beta,n}(B_{\alpha,m})_{+} +\epsilon \partial_{\alpha,m}(B_{\beta,
n})_{+} - [ (B_{\alpha,m})_{+} , (B_{\beta,n})_{+} ] =0
\end{align}
here, $-M+1\leq\alpha,\beta\leq N$,  $m,n \geq 0$.
\end{lemma}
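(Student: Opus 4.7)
The plan is to reduce both Zakharov--Shabat identities to the single fact that the operators $B_{\alpha,m}$, being Laurent polynomials in a fractional power of $L$, all commute among themselves, and then to split everything into positive/negative projections. The starting point is the Lax equation of Definition~\ref{deflax2}: since each $B_{\alpha,m}$ is a (possibly fractional) power of $L$, differentiation in $t_{\beta,n}$ acts as $\epsilon\partial_{\beta,n}B_{\alpha,m}=[A_{\beta,n},B_{\alpha,m}]$, and the operators $B_{\alpha,m}$, $B_{\beta,n}$ commute. This commutativity is the only nontrivial input we need.

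The first key observation is a uniformization of the two cases in the definition of $A_{\alpha,n}$. For $\beta\ge 1$ one has $A_{\beta,n}=(B_{\beta,n})_{+}$, while for $\beta\le 0$ one has $A_{\beta,n}=-(B_{\beta,n})_{-}=(B_{\beta,n})_{+}-B_{\beta,n}$. In either case $A_{\beta,n}-(B_{\beta,n})_{+}$ is a scalar multiple of $B_{\beta,n}$, which commutes with $B_{\alpha,m}$. Hence, independently of the sign convention,
\begin{equation*}
\epsilon\,\partial_{\beta,n}B_{\alpha,m}=[(B_{\beta,n})_{+},B_{\alpha,m}].
\end{equation*}
This lets us treat all indices on the same footing and avoids case-splitting.

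Next, I would decompose $B_{\alpha,m}=(B_{\alpha,m})_{+}+(B_{\alpha,m})_{-}$ and project the displayed relation onto its $(-)$ part. The bracket $[(B_{\beta,n})_{+},(B_{\alpha,m})_{+}]$ lies in the nonnegative powers of $\Lambda$, so it contributes nothing, giving
\begin{equation*}
\epsilon\,\partial_{\beta,n}(B_{\alpha,m})_{-}=\bigl([(B_{\beta,n})_{+},(B_{\alpha,m})_{-}]\bigr)_{-}.
\end{equation*}
Antisymmetrizing in $(\alpha,m)\leftrightarrow(\beta,n)$ and invoking $[B_{\alpha,m},B_{\beta,n}]=0$ in the form
\begin{equation*}
[(B_{\beta,n})_{+},(B_{\alpha,m})_{-}]+[(B_{\beta,n})_{-},(B_{\alpha,m})_{+}]=-[(B_{\beta,n})_{+},(B_{\alpha,m})_{+}]-[(B_{\beta,n})_{-},(B_{\alpha,m})_{-}]
\end{equation*}
then collapses the right-hand side, upon taking $(-)$ projections, to $[(B_{\alpha,m})_{-},(B_{\beta,n})_{-}]$; this is the first claimed identity. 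The second identity is obtained by the mirror argument: project onto the $(+)$ part instead, and use the same commutation of $B_{\alpha,m}$ and $B_{\beta,n}$ to eliminate the lower-triangular pieces.

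The main obstacle is a bookkeeping one rather than a structural one: one must be careful that the $(\pm)$-projection does not commute with taking brackets of mixed-sign operators, so the identity $[B_{\alpha,m},B_{\beta,n}]=0$ must be applied to the mixed brackets \emph{before} projecting in order to convert them into pure $(+,+)$ or $(-,-)$ brackets that are invariant under the projection. With this systematic use of the commutativity of the $B_{\alpha,m}$'s, both identities of Lemma~\ref{zs corollary2} follow on equal footing, and the pairwise commutativity of the flows \eqref{edef'} is an immediate corollary of Proposition~\ref{propZS2} combined with these identities.
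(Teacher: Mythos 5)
Your same-side computations (both $\alpha,\beta\ge 1$, or both $\le 0$) are sound, but the argument has a genuine gap in the mixed case, which is exactly where the bigraded structure bites. The single input your reduction rests on --- that all the $B_{\alpha,m}$ pairwise commute --- is only available when both indices lie on the same side. For $\alpha\ge 1$ the operator $B_{\alpha,m}$ is a power of $L^{\frac1N}=\Lambda+\sum_{k\le 0}a_k\Lambda^k$ and contains infinitely many negative powers of $\Lambda$, while for $\beta\le 0$ the operator $B_{\beta,n}$ is a power of $L^{\frac1M}=\sum_{k\ge -1}b_k\Lambda^k$ and contains infinitely many positive powers; in the product of two such series every coefficient would be an infinite sum, so $[B_{\alpha,m},B_{\beta,n}]$ is not even defined in the algebra of difference operators, let alone zero. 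Consequently your uniformization $\epsilon\partial_{\beta,n}B_{\alpha,m}=[(B_{\beta,n})_{+},B_{\alpha,m}]$ is illegitimate for $\alpha\ge1$, $\beta\le0$ (there $(B_{\beta,n})_{+}$ is itself an infinite series), and the mixed bracket $[(B_{\beta,n})_{+},(B_{\alpha,m})_{-}]$ occurring in your antisymmetrization identity is likewise ill-defined. Note also that the paper explicitly stresses that $L^{\frac1N}$ and $L^{\frac1M}$ are \emph{different} operators, precisely because their expansions run in opposite directions.

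The repair is the one the surrounding machinery is designed for (Theorem \ref{t1}, and Proposition \ref{lemD} in the dispersionless setting): powers of $L^{\frac1N}$ evolve by $\epsilon\partial_{\beta,n}(\cdot)=[-(B_{\beta,n})_{-},\,\cdot\,]$ while powers of $L^{\frac1M}$ evolve by $\epsilon\partial_{\beta,n}(\cdot)=[(B_{\beta,n})_{+},\,\cdot\,]$, and in each case the generator actually used is a \emph{finite} Laurent polynomial, so every bracket is defined. With these two different generators the mixed case closes by direct cancellation: for $\alpha\ge1$, $\beta\le0$ one finds
\begin{equation*}
\epsilon\partial_{\beta,n}(B_{\alpha,m})_{-}=\bigl([-(B_{\beta,n})_{-},(B_{\alpha,m})_{+}]\bigr)_{-}+[-(B_{\beta,n})_{-},(B_{\alpha,m})_{-}],\qquad
\epsilon\partial_{\alpha,m}(B_{\beta,n})_{-}=\bigl([(B_{\alpha,m})_{+},(B_{\beta,n})_{-}]\bigr)_{-},
\end{equation*}
whose difference is exactly $[(B_{\alpha,m})_{-},(B_{\beta,n})_{-}]$, with no appeal to $[B_{\alpha,m},B_{\beta,n}]=0$; the $(+)$-projected identity is handled symmetrically. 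Your commutativity shortcut is fine for same-side pairs (and in the unigraded case $N=M=1$), but the full lemma requires this case distinction. The paper itself states the lemma without proof, deferring to \cite{C}, where the argument is carried out case by case in precisely this way.
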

where
\begin{equation}
  B_{\gamma , n} :=
\begin{cases}
  L^{n+1-\frac{\gamma-1}{N}} &\gamma=N\dots 1\\
L^{n+1+\frac{\gamma}{M}} &\gamma = 0\dots -M+1.
  \end{cases}
\end{equation}
\begin{theorem}
\label{t1} $L$ is a solution to the BTH if and only if there is a
pair of dressing operators $\P_L$ and $\P_R$, which satisfies the
following Sato  equations: {\allowdisplaybreaks}
\begin{eqnarray}
\label{bn12}
\epsilon\d_{\gamma,n}\P_L & =- ( B_{\gamma,n}  )_- \P_L, \\
\label{bn1'2} \epsilon\d_{\gamma,n}\P_R & = ( B_{\gamma ,n})_+\P_R,
\end{eqnarray}
where, $-M+1\leq\gamma\leq N$,  $n \geq 0$.
\end{theorem}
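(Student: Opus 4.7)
The argument follows the standard Sato-theoretic strategy for monic difference-operator hierarchies, modelled on Takasaki's treatment of the two-dimensional Toda hierarchy and its extensions. The two implications will be handled separately: differentiating the dressing relations $L=\P_L\Lambda^N\P_L^{-1}=\P_R\Lambda^{-M}\P_R^{-1}$ and comparing with the Lax equations gives one direction almost for free, while the other rests on a gauge-fixing argument exploiting the residual freedom in the choice of dressings.

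For the Sato $\Rightarrow$ Lax direction I would differentiate $L=\P_L\Lambda^N\P_L^{-1}$ in $t_{\gamma,n}$ using \eqref{bn12} to obtain
\begin{equation*}
\epsilon\partial_{\gamma,n}L=[\epsilon(\partial_{\gamma,n}\P_L)\P_L^{-1},L]=[-(B_{\gamma,n})_-,L].
\end{equation*}
Since $B_{\gamma,n}$ is a fractional power of $L$ and hence commutes with $L$, the right-hand side equals $[(B_{\gamma,n})_+,L]$; this coincides with $[A_{\gamma,n},L]$ for $\gamma\geq 1$ by definition, and for $\gamma\leq 0$ the identity $[-(B_{\gamma,n})_-,L]=[A_{\gamma,n},L]$ is again the definition. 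Either Sato equation alone thus implies \eqref{edef'}, and the $\P_R$-equation is checked identically from $L=\P_R\Lambda^{-M}\P_R^{-1}$.

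For the converse, pick any dressings $\P_L,\P_R$ of $L$ as in \eqref{two dressing}; their existence is standard from monic factorisation, and they are unique up to right multiplication by operators whose coefficients do not depend on $x$. Setting
\begin{equation*}
Y_L:=\epsilon(\partial_{\gamma,n}\P_L)\P_L^{-1}+(B_{\gamma,n})_-,
\end{equation*}
the Lax equation together with the dressing relation forces $[Y_L,L]=0$, while $\P_L=1+O(\Lambda^{-1})$ makes $Y_L$ an operator of strictly negative order. The centralizer of the monic order-$N$ operator $L$ intersected with the strictly-negative-order part coincides exactly with the infinitesimal action of the residual gauge group $\P_L\mapsto\P_L G_L$ with $G_L=1+\sum_{k\geq 1}g_k(t)\Lambda^{-k}$; absorbing $Y_L$ into a suitable $G_L$ produces a dressing for which \eqref{bn12} holds. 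The argument for $\P_R$ is parallel, with the centralizer expanded in nonnegative powers of $\Lambda$ and $G_R$-freedom of the corresponding form.

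The main obstacle is performing these gauge adjustments simultaneously and coherently across all flows $(\gamma,n)$, rather than obtaining a different dressing at each individual time. Compatibility is guaranteed by the Zakharov--Shabat identities of Proposition~\ref{propZS2}: they ensure that the prescribed right-hand sides $-(B_{\gamma,n})_-\P_L$ and $(B_{\gamma,n})_+\P_R$ define pairwise-commuting evolutions, so a single choice of $(\P_L,\P_R)$ propagates coherently in all $t_{\gamma,n}$. Modulo this standard compatibility check the theorem reduces to the two calculations above.
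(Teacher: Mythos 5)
The paper states Theorem \ref{t1} without proof (it is imported as a known result from the BTH literature, cf.\ \cite{C,ourBlock}), so there is no in-paper argument to compare against and your proposal must stand on its own. The Sato $\Rightarrow$ Lax direction is fine: differentiating $L=\P_L\Lambda^N\P_L^{-1}$ and using $[B_{\gamma,n},L]=0$ to trade $-(B_{\gamma,n})_-$ for $(B_{\gamma,n})_+$ recovers \eqref{edef'} in both of its equivalent forms.

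The converse, however, has a genuine gap at the centralizer step. From $[Y_L,L]=0$ you conclude that $\P_L^{-1}Y_L\P_L$ commutes with $\Lambda^N$; but the centralizer of $\Lambda^N$ among strictly-negative-order difference operators consists of all $\sum_{k\geq1}a_k(x)\Lambda^{-k}$ with $a_k(x+N\epsilon)=a_k(x)$, i.e.\ with $N\epsilon$-periodic rather than constant coefficients. For $N\geq2$ --- precisely the bigraded case the theorem is about --- this is strictly larger than the Lie algebra of the residual gauge group $G_L=1+\sum_{k\geq1}g_k(t)\Lambda^{-k}$, so $Y_L$ cannot in general be absorbed as you claim; note also that $[Y_L,L]=0$ does \emph{not} imply $[Y_L,L^{1/N}]=0$, since an $N$-th root is not determined by $L$ alone. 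The standard repair is to first establish the dispersionful analogue of Proposition \ref{lemD}, namely $\epsilon\d_{\gamma,n}L^{1/N}=[-(B_{\gamma,n})_-,L^{1/N}]$ and $\epsilon\d_{\gamma,n}L^{1/M}=[(B_{\gamma,n})_+,L^{1/M}]$ (itself requiring an order-counting argument); then $Y_L$ commutes with $L^{1/N}=\P_L\Lambda\P_L^{-1}$, conjugation yields an operator commuting with $\Lambda$ itself, and $x$-independence of its coefficients follows, after which your gauge absorption goes through. The same issue arises for $\P_R$ with $\Lambda^{-M}$, $M\geq2$. Two smaller points: the compatibility across flows is governed by Lemma \ref{zs corollary2} (the $\pm$-projected Zakharov--Shabat identities), not Proposition \ref{propZS2} as cited; and the existence of the dressing $\P_R$ requires invertibility of $u_{-M}$, which should at least be recorded as a hypothesis.
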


In paper\cite{ourBlock}, we defined the Orlov-Schulman's $M_L$, $M_R$
 operators as following
\begin{eqnarray}\label{Moperator}
&&M_L=\P_L\Gamma_L \P_L^{-1}, \ \ \ \ \ \ \ M_R=\P_R\Gamma_R
\P_R^{-1},
\end{eqnarray}
where
\begin{eqnarray}
 && \Gamma_L=
\frac{x}{N\epsilon}\Lambda^{-N}+\sum_{n\geq 0}\sum_
{\alpha=1}^{N}(n+1 - \frac{\alpha-1}{N} )
\ep^{-1}\Lambda^{N({n-\frac{\alpha-1}{N}})}t_{\alpha, n},
%\end{eqnarray}
%and
%\begin{eqnarray}\label{Moperator}
%&&M_R=\P_R\Gamma_R \P_R^{-1},
%\end{eqnarray}
%where
\\%\begin{eqnarray}
&& \Gamma_R=-\frac{x}{M\epsilon}\Lambda^M-\sum_{n\geq 0}\sum_
{\beta=-M+1}^{0}(n+1 + \frac{\beta}{M} )
\ep^{-1}\Lambda^{-M({n+\frac{\beta}{M}})}t_{\beta, n}.
\end{eqnarray}
Therefore $M_L$ and $M_R$ can be written in another form as
following
\begin{eqnarray}\label{Moperator2}
&&M_L= \frac{x}{N}L_L^{-1}+\sum_ {m=0}^{\infty}\sum_
{\gamma=1}^{N}v_{\gamma,m}(t,x)L^{-(m+2+\frac {1-\gamma}
{N})}+\sum_{n\geq 0}\sum_ {\alpha=1}^{N}(n+1-\frac{\alpha-1}{N})
L^{n-\frac{\alpha-1}{N}}t_{\alpha, n},
\end{eqnarray}
\begin{eqnarray}\label{Moperator2'}
&&M_R=-\frac{x}{M}L_R^{-1}+\sum_ {m=0}^{\infty}\sum_
{\gamma=-m+1}^{0}\bar v_{\gamma,m}(t,x)L^{-(m+2+\frac {\gamma}
{M})}-\sum_{n\geq 0}\sum_
{\beta=-M+1}^{0}({n+1+\frac{\beta}{M}})L^{n+\frac{\beta}{M}}t_{\beta,
n},
\end{eqnarray}
where $L_L^{-1}:=\P_L\La^{-N} \P_L^{-1}$, $L_R^{-1}:=\P_R\La^{M}
\P_R^{-1}$.

To consider the quasi-classical limit, we set
\begin{eqnarray}
\P_L&=&\exp(\epsilon^{-1}X_L(\epsilon,t,x)),\\
\P_R&=&\exp(\phi(\epsilon,t,x))\exp(\epsilon^{-1}X_R(\epsilon,t,x)),\\
ord(X_L(\epsilon,t,x))&=&ord(X_R(\epsilon,t,x))=0,
\end{eqnarray}
\begin{eqnarray}
\sigma^{\epsilon}(X_L)=\varphi_L,\ \
\sigma^{\epsilon}(X_R)=\varphi_R,
\end{eqnarray}
where   $ ord(\phi(\epsilon,t,x))\leq 0$  and
\begin{eqnarray}\label{dis phi flow 0}
    \d_{t_{\alpha,n}}  \phi& =& (\B_{\alpha,n} )_0,\\
\label{dis phi x flow}
   \frac1\epsilon( \phi(\epsilon,t,x)-\phi(\epsilon,t,x-\epsilon))& = &\frac1M\log u_{-M},
\end{eqnarray}
where $(\ \ \ )_0$ is projection to operators not containing
$\Lambda$, i.e. $\Lambda^0$ term.

Now, we  do the following change $[\ ,\ ]\rightarrow\{\ ,\ \},
\epsilon^{-1}X_L(\epsilon,t,x)\rightarrow\varphi_L,
\epsilon^{-1}X_R(\epsilon,t,x)\rightarrow\varphi_R,
\phi(\epsilon,t,x)\rightarrow\phi(t,x), \Lambda\rightarrow k$.
Then we have the following two propositions with quasi-classical limit.
\begin{proposition}
Symbols $\sigma^{\epsilon}(X_L), \sigma^{\epsilon}(X_R),
\sigma^{\epsilon}(\phi)$ give dressing functions $\varphi_L,
\varphi_R$ and potential function $\phi$ of the dispersionless Lax
function $\L=\sigma^{\epsilon}(L)$ respectively. Conversely if
$\varphi_L, \varphi_R$ and $\phi_0$ are  dressing functions and
potential function of dispersionless BTH respectively, then there
exist a solution $L$ of the dBTH and dressing operators
$\P_L=\exp(\epsilon^{-1}X_L(\epsilon,t,x))$ and
$\P_R=\exp(\phi)\exp(\epsilon^{-1}X_R(\epsilon,t,x))$ such that
$\sigma^{\epsilon}(L)=\L, \sigma^{\epsilon}(X_L)=\varphi_L,
\sigma^{\epsilon}(X_R)=\varphi_R, \sigma^{\epsilon}(\phi)=\phi_0$.
\end{proposition}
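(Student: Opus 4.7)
The plan is to pass between BTH and dBTH via the principal symbol map $\sigma^\epsilon$, which is multiplicative on compositions with additive orders and satisfies $\sigma^\epsilon(\epsilon^{-1}[A,B]) = \{\sigma^\epsilon(A),\sigma^\epsilon(B)\}$ for the Poisson bracket of \eqref{edef2}. These two rules imply that at leading order in $\epsilon$ the operator-level conjugation by $\exp(\epsilon^{-1}X_L)$ reduces to the Poisson-exponential $e^{ad\,\varphi_L}$ with $\varphi_L = \sigma^\epsilon(X_L)$, and that the Duhamel expansion of $\epsilon\,\d_{t_{\gamma,n}}(\exp(\epsilon^{-1}X_L))\cdot \exp(-\epsilon^{-1}X_L)$ has principal symbol equal to $\nabla_{t_{\gamma,n},\varphi_L}\varphi_L$ as defined in the dispersionless Sato proposition; parallel statements hold on the right-hand side, with the scalar factor $e^\phi$ contributing through $\sigma^\epsilon(\phi)=\phi_0$.

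For the forward direction, applying $\sigma^\epsilon$ to the BTH dressings $L = \P_L\Lambda^N\P_L^{-1} = \P_R\Lambda^{-M}\P_R^{-1}$ immediately yields $\L = e^{ad\,\varphi_L}(k^N) = e^{ad\,\phi_0}e^{ad\,\varphi_R}(k^{-M})$, so $(\varphi_L,\varphi_R,\phi_0)$ play the role of dispersionless dressing and potential data. Taking symbols in the BTH Sato system \eqref{bn12}--\eqref{bn1'2} produces the dispersionless Sato equations \eqref{sato dis}, and the symbol of the Lax equation \eqref{edef'} is exactly \eqref{edef2}. The relations \eqref{dis phi flow 0}--\eqref{dis phi x flow} for the BTH $\phi$ pass, in the limit $\epsilon\to 0$ via $\epsilon^{-1}(\phi(\epsilon,t,x)-\phi(\epsilon,t,x-\epsilon))\to \d_x\phi_0$, to the dispersionless relations \eqref{dis phi flow}--\eqref{dis phi x flow2}. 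All the dBTH axioms for $(\L,\varphi_L,\varphi_R,\phi_0)$ are thereby verified.

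For the converse, I would reconstruct $X_L,X_R,\phi(\epsilon,t,x)$ inductively in powers of $\epsilon$, with the prescribed dBTH data serving as the $\epsilon^0$ coefficients. At each stage one solves an inhomogeneous linear equation for the next-order correction; the consistency of these equations across all flows $t_{\gamma,n}$ and both dressings follows from the full compatibility package of the dBTH, namely the Zakharov--Shabat equations \eqref{zs}, the closedness of the $2$-form $\omega$ in \eqref{omega}, and the existence of $S_L,S_R$. The main obstacle is precisely this order-by-order lifting: one must verify that no obstruction to extension appears at any $\epsilon$-order, that $X_L,X_R$ retain order zero and $\phi$ retains nonpositive order, and that both dressing relations for $L$ are preserved simultaneously. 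A subsidiary point is that the Hausdorff gauge freedom already present at the dispersionless level lifts to analogous freedom at the operator level and must be pinned down by a normalization convention. The overall scheme closely parallels Takasaki's treatment of the dispersionless two-dimensional Toda hierarchy in \cite{Takasaki}.
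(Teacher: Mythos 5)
First, a point of comparison: the paper states this proposition with no proof at all --- there is no proof environment following it, and the pattern used elsewhere in the paper (``The proof is standard and similar as proof in \cite{Takasaki}. So we omit it here.'') makes clear the authors are deferring entirely to Takasaki--Takebe. So your proposal cannot be measured against the paper's own argument; it can only be judged on its merits. Your forward direction is sound and is the standard quasi-classical argument: the symbol rules $\sigma^{\epsilon}(AB)=\sigma^{\epsilon}(A)\sigma^{\epsilon}(B)$ and $\sigma^{\epsilon}(\epsilon^{-1}[A,B])=\{\sigma^{\epsilon}(A),\sigma^{\epsilon}(B)\}$, applied to \eqref{two dressing}, to \eqref{bn12}--\eqref{bn1'2} and to \eqref{edef'}, do yield the dispersionless dressing relations, the Sato equations \eqref{sato dis} and the Lax equations \eqref{edef2}, and the difference quotient in \eqref{dis phi x flow} manifestly tends to $\d_x\phi_0$.

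The converse, however, is where all the content lies, and your proposal only announces the induction rather than performing it. Concretely, writing $X_L=\sum_{n\geq 0}\epsilon^{n}X_L^{(n)}$ with $X_L^{(0)}=\varphi_L$ (and similarly for $X_R$ and $\phi$), one must show that the conditions forcing $\P_L\Lambda^{N}\P_L^{-1}=\P_R\Lambda^{-M}\P_R^{-1}$ to be a Laurent polynomial of the form \eqref{LBTH} and forcing the Sato equations \eqref{bn12}--\eqref{bn1'2} to hold can be solved recursively for $X_L^{(n)},X_R^{(n)}$: at order $\epsilon^{n}$ these are inhomogeneous linear equations whose right-hand sides involve only lower-order data, and the solvability and cross-flow compatibility at each order must be derived from the dispersionless Zakharov--Shabat equations \eqref{zs} together with the already-established lower orders. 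You correctly identify this as ``the main obstacle'' and then leave it unresolved, so as written the converse half is a plan, not a proof. To close it you should either carry out the first nontrivial order explicitly and show the obstruction vanishes in general, or at minimum invoke the precise quantization theorem of \cite{Takasaki} for the dispersionless Toda hierarchy and verify that the bigraded truncation $\L=k^{N}+\dots+u_{-M}k^{-M}$ is preserved by that construction --- this last point is the only place where the dBTH genuinely differs from the dToda case and it is not automatic.
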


\begin{proposition}
$ord^{\epsilon}(\epsilon M_L)=ord^{\epsilon}(\epsilon M_R)=0 $ and
$\M_L=\sigma^{\epsilon}(\epsilon M_L)$ and
$\M_R=\sigma^{\epsilon}(\epsilon M_R)$ are the Orlov functions of
the dispersionless BTH whose Lax function is
$\L=\sigma^{\epsilon}(L)$.
\end{proposition}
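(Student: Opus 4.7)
The plan is to verify both claims—the order computation and the identification with the Orlov--Schulman functions of the dispersionless BTH—by reading off the leading symbols from the definition $M_L=\P_L\Gamma_L\P_L^{-1}$ and $M_R=\P_R\Gamma_R\P_R^{-1}$, together with the Sato equations \eqref{bn12}--\eqref{bn1'2} and the already-established dispersionless side.

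First I would compute $ord^{\epsilon}(\epsilon M_L)$. Writing $M_L=\P_L\Gamma_L\P_L^{-1}=e^{ad(\epsilon^{-1}X_L)}(\Gamma_L)$ and noting that $ord^{\epsilon}(\Gamma_L)=-1$ (every summand in its definition carries an explicit $\ep^{-1}$), together with $ord^{\epsilon}(X_L)=0$, the key observation is that each iterated bracket $[\epsilon^{-1}X_L,\,\cdot\,]$ is compensated by the extra $\ep$ coming from the commutator of difference operators, so the BCH-type expansion preserves the total $\ep$-order. Thus $ord^{\epsilon}(M_L)=-1$, i.e.\ $ord^{\epsilon}(\epsilon M_L)=0$. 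The same argument applies to $M_R$, with the additional $e^{\phi}$ factor contributing order $0$ and hence not raising the order.

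Second I would take the principal symbol. Under the quasi-classical rule $\sigma^{\epsilon}(\ep^{-1}[A,B])=\{\sigma^{\epsilon}(A),\sigma^{\epsilon}(B)\}$, the operator conjugation $e^{ad(\epsilon^{-1}X_L)}$ passes through $\sigma^{\epsilon}$ as $e^{ad\,\varphi_L}$ in the Poisson sense. Applying this to $\ep\Gamma_L$ gives
\[
\sigma^{\epsilon}(\epsilon M_L)=e^{ad\,\varphi_L}\!\bigl(\sigma^{\epsilon}(\epsilon\Gamma_L)\bigr)=e^{ad\,\varphi_L}\!\left(\tfrac{x}{N}k^{-N}+\sum_{n\geq 0}\sum_{\alpha=1}^{N}\!\bigl(n+1-\tfrac{\alpha-1}{N}\bigr)k^{N(n-\frac{\alpha-1}{N})}t_{\alpha,n}\right),
\]
which is exactly the expression \eqref{Moperator} defining $\M_L$. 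The analogous computation for $\epsilon M_R$ yields $e^{ad\,\phi}e^{ad\,\varphi_R}\!\bigl(\sigma^{\epsilon}(\ep\Gamma_R)\bigr)$, matching \eqref{Moperator'} thanks to the separate dispersionless limits of $\phi$ and $X_R$.

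Third, to confirm that the symbols are genuine Orlov--Schulman functions of the dBTH rather than merely candidates, I would verify the canonical relations and the flow identities of Theorem \ref{flowsofM}. On the BTH side one has $[L,M_L]=\epsilon$ (a direct consequence of $[\La,\Gamma_L]=\ep\cdot 1$ preserved by dressing) and the flow equations $\ep\,\d_{t_{\gamma,n}}M_L=[A_{\gamma,n},M_L]$, which follow by applying $\d_{t_{\gamma,n}}$ to $M_L=\P_L\Gamma_L\P_L^{-1}$ and using \eqref{bn12}. Taking symbols and invoking the rule $\sigma^{\epsilon}(\ep^{-1}[\,\cdot\,,\,\cdot\,])=\{\,\cdot\,,\,\cdot\,\}$ yields $\{\L,\M_L\}=1$ and $\d_{t_{\gamma,n}}\M_L=\{\A_{\gamma,n},\M_L\}$, and likewise for $\M_R$.

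The main obstacle I expect is the careful $\ep$-bookkeeping in step two: one must check that in the BCH expansion of $e^{ad(\epsilon^{-1}X_L)}(\Gamma_L)$ every multi-commutator produces exactly the right power of $\ep$, so that the leading symbol of the whole series reproduces the formal Poisson exponential $e^{ad\,\varphi_L}$ term by term. This is standard in the quasi-classical limit of the Toda hierarchy (cf.\ \cite{Takasaki}), but here one must handle simultaneously the negative-power dressing for $\P_L$ and the positive-power dressing for $\P_R$, and verify that the extra factor $e^{\phi}$ in $\P_R$ commutes with the remaining structure at leading symbol level so as to produce the $e^{ad\,\phi}e^{ad\,\varphi_R}$ structure appearing in the definition of $\M_R$.
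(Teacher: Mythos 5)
The paper states this proposition with no proof at all (it is evidently treated as a routine transcription of the corresponding statements for the dispersionless Toda hierarchy in \cite{Takasaki}), so there is no argument of the authors to compare yours against. On its own terms your strategy is the standard one and is sound where it concerns the $L$-side: the observation that each bracket $[\epsilon^{-1}X_L,\cdot]$ loses one power of $\epsilon$ from the $\epsilon^{-1}$ but regains it from the commutator of difference operators correctly gives $ord^{\epsilon}(\epsilon M_L)=0$, and passing to principal symbols term by term via $\sigma^{\epsilon}(\epsilon^{-1}[A,B])=\{\sigma^{\epsilon}(A),\sigma^{\epsilon}(B)\}$ turns $e^{ad(\epsilon^{-1}X_L)}(\Gamma_L)$ into $e^{ad\varphi_L}$ applied to the dispersionless $\Gamma_L$, which is the definition of $\M_L$; the verification of $\{\L,\M_L\}=1$ and of the flow equations of Theorem \ref{flowsofM} then goes through as you describe.

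Two points need repair. First, a small one: with the paper's normalization $\Gamma_L=\frac{x}{N\epsilon}\Lambda^{-N}+\cdots$ (the $\epsilon^{-1}$ is already built in), one has $[\Lambda^{N},\Gamma_L]=1$ and hence $[L,M_L]=1$, not $[L,M_L]=\epsilon$; the conclusion $\{\L,\M_L\}=1$ survives, but the bracket you quote is off by exactly the power of $\epsilon$ whose bookkeeping is the content of the proposition. Second, and more seriously, the step you defer for $\M_R$ --- that conjugation by the factor $e^{\phi}$ in $\P_R$ ``produces the $e^{ad\phi}$ structure at leading symbol level'' --- fails as stated. For $\phi$ of order $0$ one computes $e^{\phi(x)}a\Lambda^{m}e^{-\phi(x)}=e^{\phi(x)-\phi(x+m\epsilon)}a\Lambda^{m}=\bigl(1+O(\epsilon)\bigr)a\Lambda^{m}$, so this conjugation is the \emph{identity} on principal symbols, whereas $e^{ad\phi}(ak^{m})=e^{-m\partial_x\phi}\,ak^{m}$ is not. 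To obtain the claimed $e^{ad\phi}e^{ad\varphi_R}$ dressing the factor must be $\exp(\epsilon^{-1}\phi)$ (equivalently, $\phi$ must be absorbed as the $\Lambda^{0}$-part of $\epsilon^{-1}X_R$); this is also what the relation $u_{-M}=\tilde w_0(x)/\tilde w_0(x-M\epsilon)$ together with $\partial_x\phi=\frac1M\log u_{-M}$ forces, since the normalization $e^{\phi}$ would give $u_{-M}\to 1$. The discrepancy originates in the paper's own definition of $\P_R$, but your proof relies on this step, so you must fix the normalization before the $\M_R$ half of the statement can actually be established.
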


To see the limit clearly,
we firstly introduce spectral $z$ and two
functions
 $w_L(t,x,z)$ and $w_R(t,x,z)$ which  have forms
\begin{eqnarray}
w_L(t,x,z) &=&\P_L(x,\Lambda)e^{\xi_L(t,x,z)},\\
w_{R}(t,x,z) &=&\P_R(x,\Lambda)e^{\xi_R(t,x,z)},
\end{eqnarray}
where
\begin{eqnarray}
\xi_L(t,x,z) &=&\sum_{n\geq 0}\sum_
{\alpha=1}^{N}\ep^{-1}z^{n+1-\frac{\alpha-1}{N}}t_{\alpha, n}
+\frac{x}{N\epsilon}\log z,\\
\xi_{R}(t,x,z) &=&-\sum_{n\geq 0}\sum_ {\beta=-M+1}^{0}
\ep^{-1}z^{n+1+\frac{\beta}{M}}t_{\beta, n}-\frac{x}{M\epsilon}\log
z.
\end{eqnarray}
We call these two functions $w_L(t,x,z)$ and $w_{R}(t,x,z)$ wave functions. These two wave functions are a little different from ones in \cite{ourBlock}.
Similarly as \cite{ourBlock}, the following proposition holds.
\begin{proposition}\label{linearfunction}
The functions $w_L(t,x,z)$ and $w_R(t,x,z)$  satisfy the following linear equations
\begin{eqnarray}
\begin{cases}\label{wLlinear}
Lw_L(t,x,z)=&zw_L(t,x,z),\\
M_Lw_L(t,x,z)=&\partial_zw_L(t,x,z),\\
\ep \partial_{ t_{\gamma,n}}w_L(t,x,z)=&A_{\gamma,n}w_L(t,x,z),
 \end{cases}
 \end{eqnarray}
 \begin{eqnarray}
 \begin{cases}\label{wRlinear}
 Lw_{R}(t,x,z)=&zw_{R}(t,x,z),\\
M_{R}w_{R}(t,x,z)=&\partial_{z}w_{R}(t,x,z),\\
\ep \partial_{ t_{\gamma,n}}w_{R}(t,x,z)=&A_{\gamma,n}w_{R}(t,x,z),
 \end{cases}
\end{eqnarray}
where, $ -M+1  \leq \gamma \leq N, n\geq 0$.
\end{proposition}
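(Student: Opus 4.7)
The plan is to verify each of the three identities directly from the definitions, using two elementary ingredients: the dressing representations $L = \P_L \Lambda^N \P_L^{-1}$, $M_L = \P_L \Gamma_L \P_L^{-1}$, together with the Sato equations \eqref{bn12}--\eqref{bn1'2}. The only substantive computation is the effect of the shift operator $\Lambda = e^{\epsilon\partial_x}$ on the plane-wave factor $e^{\xi_L(t,x,z)}$. Since $\xi_L$ contains $\frac{x}{N\epsilon}\log z$, a direct calculation gives $\Lambda e^{\xi_L} = z^{1/N} e^{\xi_L}$, and hence $\Lambda^{N(n-(\alpha-1)/N)} e^{\xi_L} = z^{n-(\alpha-1)/N} e^{\xi_L}$. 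This is the single identity that drives the entire argument.

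First I would establish $Lw_L = zw_L$. Applying $L$ to $w_L = \P_L e^{\xi_L}$ and using $L\P_L = \P_L \Lambda^N$ reduces it to $\Lambda^N e^{\xi_L} = z e^{\xi_L}$, which follows from the shift identity above. Next, for $M_L w_L = \partial_z w_L$, I would compute $\partial_z \xi_L$ termwise and compare with $\Gamma_L e^{\xi_L}$. By construction of $\Gamma_L$, each monomial $\epsilon^{-1}(n+1-\frac{\alpha-1}{N})\Lambda^{N(n-(\alpha-1)/N)}t_{\alpha,n}$ produces, when applied to $e^{\xi_L}$, exactly the matching term $\epsilon^{-1}(n+1-\frac{\alpha-1}{N})z^{n-(\alpha-1)/N}t_{\alpha,n}$ in $\partial_z \xi_L$, and the $\frac{x}{N\epsilon}\Lambda^{-N}$ piece produces $\frac{x}{N\epsilon z}$. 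Thus $\Gamma_L e^{\xi_L} = (\partial_z \xi_L) e^{\xi_L} = \partial_z e^{\xi_L}$, so that $M_L w_L = \P_L \partial_z e^{\xi_L} = \partial_z w_L$ since $\P_L$ is independent of $z$.

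For the third identity, I would separate the cases $\gamma \geq 1$ and $\gamma \leq 0$. Using the Leibniz rule,
\begin{equation*}
\epsilon\partial_{t_{\gamma,n}} w_L = (\epsilon\partial_{t_{\gamma,n}}\P_L) e^{\xi_L} + \P_L (\epsilon \partial_{t_{\gamma,n}}\xi_L) e^{\xi_L}.
\end{equation*}
The first term equals $-(B_{\gamma,n})_- w_L$ by the Sato equation \eqref{bn12}. For $\gamma \geq 1$, the second term is $\P_L z^{n+1-(\gamma-1)/N} e^{\xi_L} = L^{n+1-(\gamma-1)/N} w_L = B_{\gamma,n} w_L$ (using $Lw_L = zw_L$), so the sum collapses to $(B_{\gamma,n})_+ w_L = A_{\gamma,n}w_L$. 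For $\gamma \leq 0$, $\xi_L$ does not depend on $t_{\gamma,n}$, so the second term vanishes and we simply get $-(B_{\gamma,n})_- w_L = A_{\gamma,n}w_L$ by definition.

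The argument for $w_R$ is entirely parallel: one uses $L = \P_R \Lambda^{-M}\P_R^{-1}$ and $M_R = \P_R \Gamma_R \P_R^{-1}$ together with the analogous shift relation $\Lambda^{-M} e^{\xi_R} = z e^{\xi_R}$ and $\partial_z \xi_R = \Gamma_R e^{\xi_R}$, and then invokes \eqref{bn1'2} splitting into $\gamma \leq 0$ and $\gamma \geq 1$ in the opposite roles. I do not anticipate a real obstacle here; the only point requiring care is bookkeeping of the rational exponents $n+1-\tfrac{\gamma-1}{N}$ versus $n+1+\tfrac{\gamma}{M}$ and keeping straight which part of $\xi_L, \xi_R$ carries which time variable, so that the cancellation between $B_{\gamma,n}$ and $(B_{\gamma,n})_-$ produces the correct projection $A_{\gamma,n}$ in every case.
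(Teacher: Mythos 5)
Your proof is correct and is exactly the standard dressing-operator verification: the paper itself omits the argument and simply defers to \cite{ourBlock}, and what you have written is precisely the computation that reference carries out, resting on $\Lambda e^{\xi_L}=z^{1/N}e^{\xi_L}$, $\Gamma_L e^{\xi_L}=\partial_z e^{\xi_L}$, and the Sato equations with the case split on the sign of $\gamma$. No gaps; the only point worth making explicit is that $L^{1/N}w_L=z^{1/N}w_L$ (from $L^{1/N}=\P_L\Lambda\P_L^{-1}$) justifies the fractional-power step, which your shift identity already supplies.
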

\begin{proof}
The proof is similar as paper \cite{ourBlock}.
\end{proof}
Using Proposition \ref{linearfunction}, $w_L(\epsilon,t,z)$ and $w_R(\epsilon,t,z)$ can be written using  functions $S_L(\epsilon,t,x,z)$ and $S_R(\epsilon,t,x,z)$ in the following proposition.
\begin{proposition}
The Baker function $w_L(\epsilon,t,x,z)$ and $w_R(\epsilon,t,x,z)$ take
a WKB asymptotic form as $\epsilon \rightarrow 0$:
\begin{eqnarray}\label{ WKB asymptotic}
w_L(\epsilon,t,x,z)&=&\exp\left(\epsilon^{-1}S_L(t,x,z)+O(\epsilon^0)\right),\\\label{
WKB asymptotic'}
w_R(\epsilon,t,x,z)&=&\exp\left(\epsilon^{-1}S_R(t,x,z)+O(\epsilon^0)\right),
\end{eqnarray}
where\begin{eqnarray}\label{SL}
 S_L(t,x,z)&=&\sum_{n\geq 0}\sum_
{\alpha=1}^{N}z^{({n+1-\frac{\alpha-1}{N}})}t_{\alpha, n}
+\frac{x}{N}\log z -\sum_ {m=0}^{\infty}\sum_
{\gamma=1}^{N}v_{\gamma,m}(t,x)\frac{ z^{-(m+1+\frac
{1-\gamma} {N})}}{m+1+\frac {1-\gamma} {N}},
\end{eqnarray}
\begin{eqnarray}\label{SR}&&\\ \notag
S_R(t,x,z)&=&-\sum_{n\geq 0}\sum_ {\beta=-M+1}^{0}
z^{n+1+\frac{\beta}{M}}t_{\beta, n}-\frac{x}{M}\log z +\sum_
{m=0}^{\infty}\sum_
{\gamma=-M+1}^{0}v_{\gamma,m}(t,x)\frac{z^{-(m+1+\frac {\gamma}
{M})}}{m+1+\frac {\gamma} {M}}.
\end{eqnarray}
\end{proposition}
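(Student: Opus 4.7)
The plan is first to establish that $w_L$ admits a WKB-type asymptotic form with an $\epsilon$-independent leading phase, and then to identify that phase as the explicit function in \eqref{SL} by comparison with the linear system \eqref{wLlinear}.

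For the first step, write $w_L = \P_L e^{\xi_L} = \exp(\epsilon^{-1}X_L)\exp(\epsilon^{-1}f_L)$, where $f_L(t,x,z):=\sum_{n\geq 0}\sum_{\alpha=1}^{N} z^{n+1-(\alpha-1)/N}t_{\alpha,n} + (x/N)\log z$. Since $X_L$ has order zero and $f_L$ is linear in $x$ (so that $e^{\xi_L}$ is an exact eigenfunction of $\Lambda$ with eigenvalue $z^{1/N}$), the standard symbol calculus for difference operators yields
\[
w_L = \exp\!\bigl(\epsilon^{-1}S_L(t,x,z) + \mathcal{O}(\epsilon^0)\bigr)
\]
for some $\epsilon$-independent $S_L$.

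Next, I plug this ansatz into \eqref{wLlinear}. Using the semiclassical rule
\[
\Lambda^m\exp(\epsilon^{-1}S_L+O(1)) = e^{m\,\d_x S_L}\exp(\epsilon^{-1}S_L+O(1))\bigl(1+O(\epsilon)\bigr),
\]
together with $\sigma^{\epsilon}(L)=\L$, $\sigma^{\epsilon}(\epsilon M_L)=\M_L$ and $\sigma^{\epsilon}(A_{\gamma,n})=\A_{\gamma,n}$, the leading $\epsilon^{-1}$ components give the Hamilton--Jacobi system
\[
\L(t,x,k)=z,\quad \frac{\d S_L}{\d z}=\M_L(t,x,k),\quad \frac{\d S_L}{\d t_{\gamma,n}}=\A_{\gamma,n}(t,x,k),\quad \frac{\d S_L}{\d x}=\log k,
\]
where $k:=e^{\d_x S_L}$. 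This is precisely the componentwise content of the 1-form identity \eqref{dS_L}, so this leading phase coincides with the generating function $S_L$ of Section 3, whose explicit Laurent expansion \eqref{Moperator3} is exactly the formula \eqref{SL} being claimed.

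For $w_R$ the argument is parallel: the prefactor $e^{\phi}$ in $\P_R$ has order $\leq 0$, hence contributes only to the $\mathcal{O}(\epsilon^0)$ correction and does not disturb the leading phase. The analogous semiclassical analysis of \eqref{wRlinear}, using $\sigma^{\epsilon}(\epsilon M_R)=\M_R$, reproduces \eqref{SR} via the 1-form identity \eqref{dS_R}.

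I expect the main obstacle to be the rigorous justification of the symbol-calculus step in the first paragraph, namely that $\exp(\epsilon^{-1}X_L)$ applied to $e^{\xi_L}$ produces an exponential of the claimed WKB form modulo $\mathcal{O}(\epsilon^0)$. This amounts to controlling the iterated commutators $[\Lambda^m,g(x)]=O(\epsilon)$ that arise when expanding $(\epsilon^{-1}X_L)^n e^{\xi_L}$, and verifying that only the contribution matching $\varphi_L(t,x,z^{1/N})$ survives at order $\epsilon^{-1}$; once this is in hand, the explicit identification with \eqref{SL} and \eqref{SR} is immediate from the material already developed in Sections 2--3.
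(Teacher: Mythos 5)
Your argument is correct and follows essentially the route the paper intends: the paper gives no written proof beyond invoking the preceding proposition on the linear equations for $w_L,w_R$, and the Hamilton--Jacobi identification you carry out (matching $\partial_z S_L=\M_L$, $\partial_{t_{\gamma,n}}S_L=\A_{\gamma,n}(e^{\d_x S_L})$, $\d_x S_L=\log k$ against the expansion \eqref{Moperator3} of Section 3) is precisely the computation the paper performs in the proof of the proposition that immediately follows. The single technical step you flag --- that $\exp(\epsilon^{-1}X_L)$ with $ord(X_L)=0$ acting on the exact eigenfunction $e^{\xi_L}$ preserves the WKB form, and that $e^{\phi}$ only perturbs the $O(\epsilon^0)$ term --- is the standard lemma of Takasaki--Takebe and is likewise left implicit in the paper, so your proposal is, if anything, more complete than the source.
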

Then we get following proposition
similar as \cite{Takasaki}
\begin{proposition}
The spectral $z$ of Lax function has following representations
\begin{eqnarray*}  z
&=&e^{N\d_xS_L(x,t,z)} +u_{N-1}e^{(N-1)\d_xS_L(x,t,z)} +\dots+ u_{-M}
e^{-M\d_xS_L(x,t,z)}=\sigma^{\epsilon}(L)|_{k=\d_xS_L(x,t,z)},
\end{eqnarray*}

\begin{eqnarray*}z
=e^{N\d_xS_R(x,t,z)} +u_{N-1}e^{(N-1)\d_xS_R(x,t,z)} +\dots+ u_{-M}
e^{-M\d_xS_R(x,t,z)}=\sigma^{\epsilon}(L)|_{k=\d_xS_R(x,t,z)},
\end{eqnarray*}
and derivatives of S function have formulas
\begin{eqnarray}
d S_L(x,t,z) =\M_L(z)dz +\frac{\d S_L}{\d x}d x+\sum_{n\geq 0}\sum_
{\gamma=-M+1}^{N}\A_{\gamma,n}(e^{\d_x S_L})dt_{\gamma,n},
\end{eqnarray}
\begin{eqnarray}
d S_R(x,t,z) =\M_R(z)dz^{-1} +\frac{\d S_R}{\d x}d x+\sum_{n\geq
0}\sum_ {\gamma=-M+1}^{N}\A_{\gamma,n}(e^{\d_x S_R})dt_{\gamma,n}.
\end{eqnarray}
\end{proposition}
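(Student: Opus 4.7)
The plan is to derive all four assertions by performing a WKB analysis on the linear system for the wave functions $w_L$ and $w_R$ established in Proposition \ref{linearfunction}, and then extracting the leading-order terms in $\epsilon$ as $\epsilon\to 0$. The key input is the ansatz $w_L=\exp\!\bigl(\epsilon^{-1}S_L(t,x,z)+O(\epsilon^0)\bigr)$ together with its $R$-counterpart from (\ref{ WKB asymptotic}) and (\ref{ WKB asymptotic'}); each operator equation in (\ref{wLlinear}) and (\ref{wRlinear}) then collapses to a scalar functional identity at leading order.

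First I would record the leading-order action of difference operators on WKB exponentials. Writing $\Lambda^j=e^{j\epsilon\partial_x}$ and Taylor-expanding $S_L(x+j\epsilon)=S_L(x)+j\epsilon\,\partial_x S_L+O(\epsilon^2)$ gives
\[
\Lambda^j w_L \;=\; w_L\cdot \exp\!\bigl(j\,\partial_x S_L + O(\epsilon)\bigr),
\]
so any difference operator $A$ of $\epsilon$-order zero acts at leading order as multiplication by its principal symbol $\sigma^\epsilon(A)$ evaluated at $k=e^{\partial_x S_L}$. This together with the quasi-classical limits $\sigma^\epsilon(L)=\L$ and $\sigma^\epsilon(A_{\gamma,n})=\A_{\gamma,n}$, plus the crucial relation $\sigma^\epsilon(\epsilon M_L)=\M_L$ (which requires the zero-order condition $ord^\epsilon(\epsilon M_L)=0$ established in the previous proposition), reduces the operator identities to pointwise relations in $k$.

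Applying the recipe term by term: the spectral equation $L w_L=z w_L$ yields $\sigma^\epsilon(L)|_{k=e^{\partial_x S_L}}=z$, which is exactly the first stated representation of $z$; the time-flow equation $\epsilon\partial_{t_{\gamma,n}}w_L=A_{\gamma,n}w_L$ yields $\partial_{t_{\gamma,n}}S_L=\A_{\gamma,n}(e^{\partial_x S_L})$ after dividing by $w_L$ and using $\epsilon\partial_{t_{\gamma,n}}w_L/w_L\to\partial_{t_{\gamma,n}}S_L$; and the dressing-type equation $\epsilon M_L w_L=\epsilon\partial_z w_L$ yields $\partial_z S_L=\M_L$, understood as a function of $z$ through the substitution $k=e^{\partial_x S_L}$ forced by the spectral equation. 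Assembling the partial derivatives into the total differential $dS_L=\partial_z S_L\,dz+\partial_x S_L\,dx+\sum_{\gamma,n}\partial_{t_{\gamma,n}}S_L\,dt_{\gamma,n}$ then produces exactly the claimed formula.

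The $S_R$ statements follow in the same manner from (\ref{wRlinear}), noting that the extra factor $e^{\phi}$ in $\P_R$ contributes only at subleading order because $\phi=O(\epsilon^0)$; the spectral variable naturally enters through $dz^{-1}$ rather than $dz$ on the $R$-side, as one sees from the expansion (\ref{SR}) of $S_R$ in powers of $z$ and $\log z$ and from the structure of $\M_R$ in (\ref{Moperator22'}). The main technical hurdle is being disciplined about $\epsilon$-orders in the WKB expansion, particularly in the $M_L$, $M_R$ equations where the explicit $\epsilon$ factor must be balanced against the condition $ord^\epsilon(\epsilon M_L)=ord^\epsilon(\epsilon M_R)=0$; once that bookkeeping is in place, each of the four identities in the proposition is a direct leading-order consequence of its operator counterpart.
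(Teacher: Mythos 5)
Your proposal matches the paper's own proof: both perform a WKB analysis on the linear system of Proposition \ref{linearfunction} with the ansatz $w_{L/R}=\exp\left(\epsilon^{-1}S_{L/R}+O(\epsilon^0)\right)$, reading off the spectral representation of $z$ from $Lw=zw$, the relation $\partial_z S_L=\M_L$ from the $M_{L/R}$ equations, and $\partial_{t_{\gamma,n}}S_{L}=\A_{\gamma,n}(e^{\partial_x S_L})$ from the time flows, then assembling the total differential. The only difference is that you make explicit some bookkeeping the paper leaves implicit (the leading-order action of $\Lambda^j$ on WKB exponentials, the role of $ord^{\epsilon}(\epsilon M_L)=0$, and the subleading contribution of $e^{\phi}$ in $\P_R$), which is a harmless refinement of the same argument.
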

\begin{proof}
By eqs.\eqref{wLlinear}, we can do the following computation
\begin{eqnarray*}z w_L(t,x,z)&=&L w_L(t,x,z)=(\Lambda^{N}+u_{N-1}\Lambda^{N-1}+\dots + u_{-M}
\Lambda^{-M})\exp\left(\epsilon^{-1}S_L(x,t,z)+O(\epsilon^0)\right).
\end{eqnarray*}
After that, we can pursue the WKB analysis, regarding $S_L(t,x,z)$ as
the phase function which further leads to
\begin{eqnarray}&&\\ \notag z
&=&e^{N\d_xS_L(x,t,z)} +u_{N-1}e^{(N-1)\d_xS_L(x,t,z)} +\dots+ u_{-M}
e^{-M\d_xS_L(x,t,z)}=\sigma^{\epsilon}(L)|_{k=\d_xS_L(x,t,z)}.
\end{eqnarray}
By eqs.\eqref{wLlinear}, we can also get
\begin{eqnarray*}
&& M_Lw_L(t,x,z)\\
&=&
\partial_zw_L(t,x,z)=\partial_z\exp\left(\epsilon^{-1}S_L(x,t,z)+O(\epsilon^0)\right)\\
&=&(\epsilon^{-1}\partial_zS_L(x,t,z)+O(\epsilon^0))e^{\epsilon^{-1}S_L(x,t,z)+o(\epsilon)}\\
&=& \epsilon^{-1}(\sum_{n\geq 0}\sum_
{\alpha=1}^{N}(n+1-\frac{\alpha-1}{N})z^{({n-\frac{\alpha-1}{N}})}t_{\alpha,
n} +\frac{x}{N z} -\sum_ {m=0}^{\infty}\sum_
{\gamma=1}^{N}v_{\gamma,m}(t,x)z^{-(m+2+\frac {1-\gamma}
{N})})w_L(t,x,z)\\
&&+O(\epsilon^0)w_L(t,x,z),
\end{eqnarray*}
which implies $M_Lw_L(t,x,z)=\frac{\M_L}{\epsilon}w_L(t,x,z)
=\ep^{-1}\sigma^{\epsilon}(M_L)|_{L=z}w_L(t,x,z)$. Also by the
equations above, we can get $\partial_zS_L(x,t,z)
=\M_L=\sigma^{\epsilon}(M_L)|_{L=z}$.

 Because
\begin{eqnarray}
 \epsilon \partial_{ t_{\gamma,n}}w_L(t,x,z)&=&\epsilon\partial_{
 t_{\gamma,n}}\exp\left(\epsilon^{-1}S_L(x,t,z)+O(\epsilon^0)\right)\\ \notag&
 =&(\partial_{
 t_{\gamma,n}}S_L(x,t,z)+O(\epsilon^0))\exp\left(\epsilon^{-1}S_L(x,t,z)+O(\epsilon^0)\right)
 \end{eqnarray}
and
\begin{eqnarray*}
 \epsilon \partial_{ t_{\gamma,n}}w_L(t,x,z)&=&A_{\gamma,n}(\Lambda)w_L(t,x,z)\\
  &=&A_{\gamma,n}(\Lambda)\exp\left(\epsilon^{-1}S_L(x,t,z)+O(\epsilon^0)\right)\\
&=&(\A_{\gamma,n}(e^{\d_x
S_L})+O(\epsilon^0))\exp\left(\epsilon^{-1}S_L(x,t,z)+O(\epsilon^0)\right),
\end{eqnarray*}
  we can get
\begin{eqnarray}\label{li}
\partial_{
 t_{\gamma,n}}S_L(x,t,z)
&=&\A_{\gamma,n}(e^{\d_x S_L}).
\end{eqnarray}
This implies
\begin{eqnarray}
\partial_{
 t_{\gamma,n}}(\d_xS_L(x,t,z))
&=&\frac{\d \A_{\gamma,n}(e^{\d_x S_L})}{\d x}.
\end{eqnarray}
So we can consider $\d_xS_L(x,t,z)$ which will be denoted as $\log
k_1$ later as conserved density.

By eqs.\eqref{wRlinear}, we can do the following computation
\begin{eqnarray*}z w_R(t,x,z)&=&L w_R(t,x,z)=(\Lambda^{N}+u_{N-1}\Lambda^{N-1}+\dots + u_{-M}
\Lambda^{-M})\exp\left(\epsilon^{-1}S_R(x,t,z)+O(\epsilon^0)\right).
\end{eqnarray*}
We can also pursue the WKB analysis, also regarding $S_R(t,x,z)$ as
the phase function which leads to
\begin{eqnarray*}z
=e^{N\d_xS_R(x,t,z)} +u_{N-1}e^{(N-1)\d_xS_R(x,t,z)} +\dots+ u_{-M}
e^{-M\d_xS_R(x,t,z)}=\sigma^{\epsilon}(L)|_{k=\d_xS_R(x,t,z)}.
\end{eqnarray*}
Similar result  can also be got as following
\begin{eqnarray}
 M_Rw_R(t,x,z)&=& \frac{\M_R}{\epsilon}w_R(t,x,z)
=\ep^{-1}\sigma^{\epsilon}(M_R)|_{L=z}w_R(t,x,z).
\end{eqnarray}
 Also by the
equations above, we can get $\partial_zS_R(x,t,z)
=\M_R=\sigma^{\epsilon}(M_R)|_{L=z}$. Direct calculation can lead to
\begin{eqnarray}
 \epsilon\partial_{ t_{\gamma,n}}w_R(t,x,z)
 =&(\partial_{
 t_{\gamma,n}}S_R(x,t,z)+O(\epsilon^0))\exp\left(\epsilon^{-1}S_R(x,t,z)+O(\epsilon^0)\right).
 \end{eqnarray}

 By formula eqs.\eqref{wRlinear}, following identities hold
 \begin{eqnarray*}
 \epsilon\partial_{ t_{\gamma,n}}w_R(t,x,z)&=&A_{\gamma,n}(\Lambda)w_R(t,x,z)\\
  &=&A_{\gamma,n}(\Lambda)\exp\left(\epsilon^{-1}S_R(x,t,z)+O(\epsilon^0)\right)\\
&=&\A_{\gamma,n}(e^{\d_x
S_R})+O(\epsilon^0))\exp\left(\epsilon^{-1}S_R(x,t,z)+O(\epsilon^0)\right).
\end{eqnarray*}
 So we can get
\begin{eqnarray}\label{wang}
\partial_{
 t_{\gamma,n}}S_R(x,t,z)
&=&\A_{\gamma,n}(e^{\d_x S_R}).
\end{eqnarray}

So the following derivatives will be correct
\begin{eqnarray}
d S_L(x,t,z) =\M_L(z)dz +\frac{\d S_L}{\d x}d x+\sum_{n\geq 0}\sum_
{\gamma=-M+1}^{N}\A_{\gamma,n}(e^{\d_x S_L})dt_{\gamma,n},
\end{eqnarray}
\begin{eqnarray}
d S_R(x,t,z) =\M_R(z)dz^{-1} +\frac{\d S_R}{\d x}d x+\sum_{n\geq
0}\sum_ {\gamma=-M+1}^{N}\A_{\gamma,n}(e^{\d_x S_R})dt_{\gamma,n}.
\end{eqnarray}

\end{proof}

Then it it time to consider the Legendre-type transformation as following.
\begin{proposition}
The Legendre-type transformation $(t,x,z) \rightarrow (t,x, k_i), i=1,2$
is defined by
\begin{eqnarray}
e^{\d_x S_L(t,x,z)}=k_1,\ \ e^{\d_x S_R(t,x,z)}=k_2,
\end{eqnarray}
the spectral parameters $z$ turns into the $\L$-function of the
dispersionless BTH which has two expressions by $k_1$ and $k_2$ respectively
\begin{eqnarray}
z=\L(t,x,k_1)&=&k^{N}_1+u_{N-1}k^{N-1}_1+\dots + u_{-M} k^{-M}_1,\\
z=\L(t,x,k_2)&=&k^{N}_2+u_{N-1}k^{N-1}_2+\dots + u_{-M} k^{-M}_2,
\end{eqnarray}
and $S_L, S_R$ become the corresponding $S$-functions.
\end{proposition}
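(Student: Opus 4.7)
The plan is to obtain this proposition directly from the WKB calculation of the preceding proposition by a Legendre-type change of the independent variable. I would start from the two identities already derived,
\begin{align*}
z &= e^{N\d_x S_L} + u_{N-1} e^{(N-1)\d_x S_L} + \dots + u_{-M} e^{-M\d_x S_L},\\
z &= e^{N\d_x S_R} + u_{N-1} e^{(N-1)\d_x S_R} + \dots + u_{-M} e^{-M\d_x S_R},
\end{align*}
and simply substitute $k_1 := e^{\d_x S_L}$ and $k_2 := e^{\d_x S_R}$. The two Laurent polynomial representations $z = \L(t,x,k_i)$ then fall out at once, which already covers the first assertion of the proposition.

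Next, I would check that each correspondence $z \leftrightarrow k_i$ is a genuine local change of variable. From the expansion \eqref{SL} one reads $\d_x S_L = \tfrac{1}{N}\log z + O(z^{-1/N})$, so $k_1 \sim z^{1/N}$ for large $z$; the polynomial relation $z = k_1^N + u_{N-1}k_1^{N-1} + \cdots$ is therefore formally invertible and picks out the branch of $\L^{1/N}$ that is asymptotic to $k$ near $k=\infty$. The argument for $k_2$ is symmetric: from \eqref{SR} one gets $\d_x S_R = -\tfrac{1}{M}\log z + \cdots$, so $k_2 \sim z^{-1/M}$ and the polynomial $z = u_{-M}k_2^{-M}(1+\cdots)$ inverts in the opposite branch, the one adapted to $k=0$.

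Finally, I would identify the transformed $S_L(t,x,z), S_R(t,x,z)$ with the $S$-functions of the earlier proposition, whose differentials satisfy $dS_L = \M_L d\L + \log k\, dx + \sum \A_{\alpha,n} dt_{\alpha,n}$. This can be done term-by-term by comparing \eqref{SL} with the Laurent expression given earlier: under $z = \L(t,x,k_1)$ the monomials $z^{n+1-(\alpha-1)/N}$ become $\B_{\alpha,n}$, the contribution $\tfrac{x}{N}\log z$ becomes $\tfrac{x}{N}\log_+ \L$ once $k_1$ is recognised as a local uniformizer at $k=\infty$, and the $v_{\gamma,m}$ coefficients carry over unchanged. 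Alternatively, one can simply read off the three partials $\d_x S_L = \log k_1$ (by definition), $\d_z S_L = \M_L$ (from the second line of \eqref{wLlinear}) and $\d_{t_{\gamma,n}} S_L = \A_{\gamma,n}(e^{\d_x S_L})$ from \eqref{li}, and invoke uniqueness of the primitive of a closed 1-form; the analogous check for $S_R$ uses \eqref{wRlinear} and \eqref{wang}.

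The main obstacle is not algebraic but consists of keeping track of branches: since $\L$ is a Laurent polynomial of degree $N+M$ in $k$, the equation $z = \L(t,x,k)$ has $N+M$ formal roots organised around $k=\infty$ and $k=0$, and I must ensure that $k_1$ is consistently the branch adapted to the expansion near $k=\infty$ while $k_2$ is the branch adapted to the expansion near $k=0$. Once this bookkeeping is handled, the rest of the proof is a direct substitution into identities already proved.
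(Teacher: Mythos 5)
Your proof is correct and matches the paper's intent: the paper states this proposition without any proof, treating it as an immediate consequence of the preceding WKB proposition, and your argument is exactly that direct substitution $k_1=e^{\d_x S_L}$, $k_2=e^{\d_x S_R}$ into the identities $z=\sum_j u_j e^{j\d_x S_{L/R}}$ already established there. Your additional bookkeeping of the two branches ($k_1\sim z^{1/N}$ at $k=\infty$, $k_2\sim z^{-1/M}$ at $k=0$) and the identification of the transformed $S_{L},S_{R}$ via their differentials is a careful elaboration of what the paper leaves implicit, not a different route.
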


After these, we will start to consider the free energy of the dBTH in the next section.

\sectionnew{Tau function and free energy of dBTH}

As paper \cite{ourBlock}, the relation of  tau function for the BTH with
wave functions is as following
\begin{eqnarray*}
w_L(t,x,z) &=&\P_L(x,\Lambda)\exp(\sum_{n\geq 0}\sum_
{\alpha=1}^{N}\ep^{-1}z^{({n+1-\frac{\alpha-1}{N}})}t_{\alpha, n}
+\frac{x}{N\epsilon}\log z)\\
&=&\P_L(x,\lambda^{\frac1N})\exp(\frac{1}{\epsilon}(t_L(z^{\frac1N})+\frac{x}{N}\log z))\\
&=&\frac{\tau(x-\epsilon/2,t-[z^{-\frac1N}]^N)}{\tau(x-\epsilon/2,t)}
\exp(\frac{1}{\epsilon}(t_L(z^{\frac1N})+\frac{x}{N}\log z)),\\
w_{R}(t,x,z)
&=&\P_R(x,\Lambda)\exp(-\sum_{n\geq 0}\sum_
{\beta=-M+1}^{0}\ep^{-1}z^{n+1+\frac{\beta}{M}}t_{\beta,
n}-\frac{x}{M\epsilon}\log z)\\
&=&\P_R(x,z^{-\frac1M})\exp(\frac{1}{\epsilon}(-t_R(z^{\frac1M})-\frac{x}{M}\log z))\\
&=&\frac{\tau(x+\epsilon/2,t+[z^{-\frac1M}]^M)}{\tau(x-\epsilon/2,t)}
\exp(-\frac{1}{\epsilon}t_R(z^{\frac1M})-\frac{x}{M}\log z),
\end{eqnarray*}
 where
     \begin{eqnarray}
 \[z^{-\frac1N}\]^{N}_{\alpha,n} :=
\begin{cases}
  \frac{\ep z^{-(n+1-\frac{\alpha-1}{N})}}{N(n+1 - \frac{\alpha-1}{N}
)}, &\alpha=N,N-1,\dots 1,\\
 0, &\alpha = 0, -1\dots -M+1,
  \end{cases}
\end{eqnarray}
 \begin{eqnarray}
\[z^{-\frac1M}\]^{M}_{\alpha,n} :=
\begin{cases}
0, &\alpha=N, N-1,\dots 1,\\
\frac{\epsilon z^{-(n+1+\frac{\beta}{M})}}{M(n+1
+\frac{\beta}{M} )}, &\alpha = 0, -1, \dots -M+1,
  \end{cases}
\end{eqnarray}
Define free energy $F(t,x)$ as following
 \begin{eqnarray}
\log \tau(\epsilon,t,x)=\epsilon^{-2}F(t,x)+0(\epsilon^{-1}) \ \ \
(\epsilon\rightarrow 0).
\end{eqnarray}
 Taking the logarithm of identities above and comparing them with
the form eq.\eqref{ WKB asymptotic}, one can find that $\log \tau$
should behave as

\begin{eqnarray*}\notag &&\epsilon^{-1}(\sum_{n\geq 0}\sum_
{\alpha=1}^{N}z^{({n+1-\frac{\alpha-1}{N}})}t_{\alpha, n}
+\frac{x}{N}\log z -\sum_ {m=0}^{\infty}\sum_
{\gamma=1}^{N}v_{\gamma,m}(t,x)\frac{ z^{-(m+1+\frac {1-\gamma}
{N})}}{m+1+\frac {1-\gamma} {N}})+O(\epsilon^0)\\&=& \sum_{n\geq
0}\sum_
{\alpha=1}^{N}\frac{z^{({n+1-\frac{\alpha-1}{N}})}}{\epsilon}t_{\alpha,
n} +\frac{x}{N\epsilon}\log z+\log
\tau(\epsilon,t-[z^{-\frac1N}]^N,x)-\log \tau(\epsilon,t,x),
\end{eqnarray*}
which further leads to following relation
\begin{eqnarray}\label{valpham1}
 v_{\alpha,m}(t,x) =\d_{t_{\alpha,m}}F,\ \  1\leq \alpha\leq N, m\geq 0.
\end{eqnarray}
Similarly we can also get
\begin{eqnarray}\label{barvalpham}
\bar v_{\beta,m}(t,x) =\d_{t_{\beta,m}}F,\ \  -M+1\leq \beta\leq 0, m\geq 0.
\end{eqnarray}
Considering Lemma \ref{vderivative}, the second derivatives of free
energy will have  formula in following lemma.
\begin{lemma}
The derivatives of free energy have the following formula
\begin{eqnarray}
F_{\alpha,n;\beta,m} =Res_{k}\B_{\alpha,n} d_k (\B_{\beta,m})_+,\ \  -M+1\leq \alpha,\beta\leq N, m,n\geq 0,
\end{eqnarray}
where $F_{\alpha,n;\beta,m}:=\d_{t_{\alpha,n}}\d_{t_{\beta,m}}F$.
\end{lemma}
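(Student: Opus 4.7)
\emph{Proof plan.} First, combining \eqref{valpham1} and \eqref{barvalpham} we may express $\d_{t_{\beta,m}}F$ uniformly as $v_{\beta,m}$ when $1\leq \beta\leq N$ and $\bar v_{\beta,m}$ when $-M+1\leq \beta\leq 0$. Differentiating once more in $t_{\alpha,n}$ and invoking Lemma \ref{vderivative} yields
\begin{equation*}
F_{\alpha,n;\beta,m}=\res \B_{\beta,m}\, d_k\A_{\alpha,n},
\end{equation*}
and, by equality of mixed partials (applying the same rewriting to the derivative of $F$ in $t_{\alpha,n}$ first instead), we equivalently have
\begin{equation*}
F_{\alpha,n;\beta,m}=\res \B_{\alpha,n}\, d_k\A_{\beta,m},
\end{equation*}
which already brings us within one rewriting step of the claim.

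Next I would replace $\A_{\beta,m}$ by $(\B_{\beta,m})_+$. For $\beta\geq 1$ this replacement is exact by definition, so there is nothing to check. For $\beta\leq 0$, where $\A_{\beta,m}=-(\B_{\beta,m})_-$, the needed conversion reduces, via the splitting $\B_{\beta,m}=(\B_{\beta,m})_++(\B_{\beta,m})_-$, to the single vanishing identity
\begin{equation*}
\res \B_{\alpha,n}\, d_k\B_{\beta,m}=0.
\end{equation*}
The key observation is that $\B_{\alpha,n}$ and $\B_{\beta,m}$ are both real powers of the common Lax function $\L$: writing $\B_{\alpha,n}=\L^a$ and $\B_{\beta,m}=\L^b$ with $a,b>0$ in every admissible range of indices (so in particular $a+b\neq 0$), the chain rule gives
\begin{equation*}
\B_{\alpha,n}\, d_k\B_{\beta,m}=b\,\L^{a+b-1}\,d_k\L=\frac{b}{a+b}\,d_k\L^{a+b},
\end{equation*}
an exact $k$-differential whose residue vanishes. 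Assembling the two cases with the symmetric expression for $F_{\alpha,n;\beta,m}$ then yields the claimed formula on the full index range $-M+1\leq \alpha,\beta\leq N$, $m,n\geq 0$.

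The main technical obstacle I foresee lies in justifying the exact-differential step in the truly mixed case $\alpha\geq 1$, $\beta\leq 0$: here $\B_{\alpha,n}$ is realized via the $\L^{1/N}$-expansion around $k=\infty$ (infinite negative tail in $k$) while $\B_{\beta,m}$ is realized via the $\L^{1/M}$-expansion around $k=0$ (infinite positive tail), and the a priori formal product of two Laurent series with mutually infinite tails in opposite directions is not automatically well-defined. The standard remedy, following the Takasaki--Takebe dispersionless two-dimensional Toda formalism that underlies the paper, is to interpret each $\res$ as a sum of local residues at $k=0$ and $k=\infty$, under which the chain rule and the vanishing of residues of exact differentials hold on the nose and match the usage already made in Lemma \ref{vderivative}. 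With that convention in place, the argument above completes the proof.
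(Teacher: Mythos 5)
Your proposal follows essentially the same route as the paper: the paper's entire proof is the single sentence that substituting \eqref{valpham1} and \eqref{barvalpham} into Lemma \ref{vderivative} gives the result. That substitution literally produces $F_{\alpha,n;\beta,m}=\res \B_{\alpha,n}\,d_k\A_{\beta,m}$ rather than the stated $\res \B_{\alpha,n}\,d_k(\B_{\beta,m})_+$, so the conversion $\A_{\beta,m}\mapsto(\B_{\beta,m})_+$ for $\beta\le 0$ --- which you supply via the exact-differential identity $\L^{a}\,d_k\L^{b}=\tfrac{b}{a+b}\,d_k\L^{a+b}$ with $a,b>0$ --- is a step the paper silently omits, and your observation that in the truly mixed case $\alpha\ge 1$, $\beta\le 0$ the two factors live in different expansions of $\L$ (around $k=\infty$ and $k=0$ respectively), so that the residue must be read as a sum of local residues in the Takasaki--Takebe sense for the formula even to be well defined, points to a gap in the printed statement and proof rather than in your argument.
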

\begin{proof}
Bringing  eq.\eqref{valpham1} and eq.\eqref{barvalpham} into  Lemma \ref{vderivative} will lead to this lemma.
\end{proof}

Eqs.\eqref{SL}, \eqref{SR} and eqs.\eqref{li}, \eqref{wang}  can lead to following proposition.
\begin{proposition}
The following identities hold
\begin{eqnarray} \A_{\gamma,n}(t,x,k_1)&=&z^{({n+1-\frac{\gamma-1}{N}})}
 -\sum_ {m=0}^{\infty}\sum_
{\alpha=1}^{N}F_{\alpha,m;\gamma,n}(t,x)\frac{ z^{-(m+1+\frac
{1-\alpha} {N})}}{m+1+\frac {1-\alpha} {N}},
\end{eqnarray}
\begin{eqnarray}\A_{\gamma,n}(t,x,k_1)&=& -\sum_ {m=0}^{\infty}\sum_
{\alpha=1}^{N}F_{\alpha,m;\gamma,n}(t,x)\frac{ z^{-(m+1+\frac
{1-\alpha} {N})}}{m+1+\frac {1-\alpha} {N}},
\end{eqnarray}
\begin{eqnarray}\notag
\A_{\gamma,n}(t,x,k_2)&=&\sum_ {m=0}^{\infty}\sum_
{\beta=-M+1}^{0}F_{\beta,m;\gamma,n}(t,x)\frac{z^{-(m+1+\frac
{\beta} {M})}}{m+1+\frac {\beta} {M}},
\end{eqnarray}
\begin{eqnarray}\notag
\A_{\gamma,n}(t,x,k_2)&=&- z^{n+1+\frac{\gamma}{M}}+\sum_
{m=0}^{\infty}\sum_
{\beta=-M+1}^{0}F_{\beta,m;\gamma,n}(t,x)\frac{z^{-(m+1+\frac
{\beta} {M})}}{m+1+\frac {\beta} {M}},
\end{eqnarray}
where $-M+1\leq \gamma \leq N, n\in \Z_+,$
\begin{eqnarray}
e^{\d_x S_L(t,x,z)}=k_1,\ \ e^{\d_x S_R(t,x,z)}=k_2,\ \ z=\L(t,x,k_1)=\L(t,x,k_2).
\end{eqnarray}
\end{proposition}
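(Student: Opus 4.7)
The approach is to work directly from the Laurent expansions (\ref{SL}) and (\ref{SR}) for $S_L$ and $S_R$, differentiate them term by term with respect to $t_{\gamma,n}$, and identify the result with $\A_{\gamma,n}(e^{\d_x S_L})$ and $\A_{\gamma,n}(e^{\d_x S_R})$ supplied by the free-energy equations \eqref{li} and \eqref{wang}. The Legendre-type substitutions $e^{\d_x S_L}=k_1$ and $e^{\d_x S_R}=k_2$ will then convert these into the desired expressions in terms of $z$ via $z=\L(t,x,k_i)$.

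First I would treat $S_L$. Since the explicit polynomial part $\sum_{n,\alpha=1}^{N}z^{n+1-(\alpha-1)/N}t_{\alpha,n}$ in (\ref{SL}) only involves the times $t_{\alpha,n}$ with $1\le\alpha\le N$, differentiation with respect to $t_{\gamma,n}$ splits into two cases. For $1\le\gamma\le N$, the polynomial part contributes the term $z^{n+1-(\gamma-1)/N}$, while for $-M+1\le\gamma\le 0$ it contributes nothing. In either case, the tail contributes $-\sum_{m,\alpha=1}^{N}(\d_{t_{\gamma,n}}v_{\alpha,m})\,z^{-(m+1+(1-\alpha)/N)}/(m+1+(1-\alpha)/N)$. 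Invoking \eqref{valpham1}, i.e.\ $v_{\alpha,m}=\d_{t_{\alpha,m}}F$, replaces the derivative by $F_{\alpha,m;\gamma,n}$. Comparing with \eqref{li} and using $e^{\d_x S_L}=k_1$ yields the first two identities of the proposition.

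Next I would repeat the procedure for $S_R$. Now the explicit polynomial part of (\ref{SR}) is $-\sum_{n,\beta=-M+1}^{0}z^{n+1+\beta/M}t_{\beta,n}$, so differentiation with respect to $t_{\gamma,n}$ yields $-z^{n+1+\gamma/M}$ when $-M+1\le\gamma\le 0$ and nothing when $1\le\gamma\le N$. The tail gives $\sum_{m,\beta=-M+1}^{0}(\d_{t_{\gamma,n}}\bar v_{\beta,m})\,z^{-(m+1+\beta/M)}/(m+1+\beta/M)$, and \eqref{barvalpham} turns the coefficient into $F_{\beta,m;\gamma,n}$. Using \eqref{wang} together with $e^{\d_x S_R}=k_2$ produces the third and fourth identities.

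The computation is essentially mechanical, so the only genuine subtlety is conceptual: the single operator $\A_{\gamma,n}$ has two distinct evaluations $\A_{\gamma,n}(t,x,k_1)$ and $\A_{\gamma,n}(t,x,k_2)$ because, under $z=\L(t,x,k)=k^N+u_{N-1}k^{N-1}+\dots+u_{-M}k^{-M}$, there are two branches $k_1(z)$ and $k_2(z)$ corresponding to expansions near $z=\infty$ and $z=0$ respectively. The main obstacle is therefore not the calculation itself but ensuring that the Laurent expansions of $\A_{\gamma,n}$ in non-negative and non-positive powers of $z$ produced by the $S_L$ and $S_R$ routes match the expected branch — this is guaranteed by the respective structures of $S_L$ (which expands around $k_1\sim z^{1/N}$) and $S_R$ (which expands around $k_2\sim z^{-1/M}$), and it is precisely what distinguishes the $k_1$ identities from the $k_2$ identities.
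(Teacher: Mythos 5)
Your proposal is correct and follows exactly the route the paper intends: the paper offers no written proof beyond the remark that eqs.~\eqref{SL}, \eqref{SR}, \eqref{li}, \eqref{wang} lead to the proposition, and your term-by-term differentiation of $S_L$ and $S_R$ combined with $v_{\alpha,m}=\d_{t_{\alpha,m}}F$, $\bar v_{\beta,m}=\d_{t_{\beta,m}}F$ is precisely that computation, with the case split in $\gamma$ correctly accounting for which identities acquire the explicit $z$-power term. Your closing remark on the two branches $k_1\sim z^{1/N}$ and $k_2\sim z^{-1/M}$ correctly identifies the only conceptual point the paper leaves implicit.
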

After the definition of free energy of the dBTH, the dispersionless Hirota  bilinear identities about tau
 function will be derived in the next section.

\sectionnew{Dispersionless Hirota  bilinear identities}
In our paper \cite{ourJMP}, we have got the following Hirota
bilinear identity of the BTH in one corollary
\begin{eqnarray}\notag
&& \res_{\lambda }
 \left\{ \lambda^{m-1}
\tau(x,t-[\lambda^{-1}]^N)\times
\tau(x-(m-1)\epsilon,t'+[\lambda^{-1}]^N) e^{\xi_L(t-t')} \right\}
\\\label{HBE511} &&=  \res_{\lambda }
 \left\{ \lambda^{m-1}
\tau(x+\epsilon,t+[\lambda]^M)\times
\tau(x-m\epsilon,t'-[\lambda]^M) e^{\xi_R(t-t')}\right\} ,
\end{eqnarray}
     where
     \begin{eqnarray}
 \[\lambda^{-1}\]^{N}_{\alpha,n} :=
\begin{cases}
  \frac{\ep \lambda^{-N(n+1-\frac{\alpha-1}{N})}}{N(n+1 - \frac{\alpha-1}{N}
)}, &\alpha=N,N-1,\dots 1,\\
 0, &\alpha = 0, -1\dots -M+1,
  \end{cases}
\end{eqnarray}
 \begin{eqnarray}
\[\lambda\]^{M}_{\alpha,n} :=
\begin{cases}
0, &\alpha=N, N-1,\dots 1,\\
\frac{\epsilon \lambda^{M(n+1+\frac{\beta}{M})}}{M(n+1
+\frac{\beta}{M} )}, &\alpha = 0, -1, \dots -M+1,
  \end{cases}
\end{eqnarray}

\begin{eqnarray*} \xi_L(t-t')&=&\sum_{n\geq 0} \sum_
{\alpha=1}^{N}\lambda^{N({n+1-\frac{\alpha-1}{N}})}(t_{\alpha,n}-t'_{\alpha,n}),\\
\xi_R(t-t')&=&-\sum_{n\geq 0} \sum_ {\beta=-M+1}^{0}\lambda^{-M({n+1+\frac{\beta}{M}})}(t_{\beta,n}-t'_{\beta,n}).
\end{eqnarray*}
In order to understand the properties of the dBTH more, it is useful to get the fay-like identities from the ones of BTH.

We can choose different values for $m, t, t'$ which lead to the
following dispersionless Fay-like
identities:\\

{\bf  \Rmnum{1}.} $m=0, t'=t-[\lambda_1^{-1}]^N-[\lambda_2^{-1}]^N$.
In this case the Hirota bilinear identity \eqref{HBE511} will lead to
\begin{eqnarray*}\label{HBE6}
&& \res_{\lambda }
 \left\{ \tau(x,t-[\lambda^{-1}]^N)\times
\tau(x+\epsilon,t'+[\lambda^{-1}]^N)\frac{1}{(1-\lambda
\lambda_1^{-1})(1-\lambda \lambda_2^{-1})}\frac{1}{\lambda}
\right\}  \\
&&=  \res_{\lambda }
 \left\{
\tau(x+\epsilon,t+[\lambda]^M)\times
\tau(x,t'-[\lambda]^M)\frac{1}{\lambda} \right\}.
\end{eqnarray*}
Using \begin{eqnarray*}
(1-\lambda_1^{-1}\lambda)^{-1}(1-\lambda_2^{-1}\lambda)^{-1} =
\frac{\lambda_1^{-1}}{\lambda_1^{-1}-\lambda_2^{-1}}(1-\lambda_1^{-1}\lambda)^{-1}-\frac{\lambda_2^{-1}}{\lambda_1^{-1}-\lambda_2^{-1}}
(1-\lambda_2^{-1}\lambda)^{-1}, \end{eqnarray*} we  get
\begin{eqnarray*}\label{HBE6}
&&  \frac{\lambda_1^{-1}}{\lambda_1^{-1}-\lambda_2^{-1}}
\tau(x,t-[\lambda_1^{-1}]^N) \tau(x+\epsilon,t'+[\lambda_1^{-1}]^N)
-\frac{\lambda_2^{-1}}{\lambda_1^{-1}-\lambda_2^{-1}}
\tau(x,t-[\lambda_2^{-1}]^N)
\tau(x+\epsilon,t'+[\lambda_2^{-1}]^N)  \\
&&= \tau(x+\epsilon,t) \tau(x,t').
\end{eqnarray*}
It further leads to
\begin{eqnarray}\notag
\frac{\lambda_1^{-1}}{\lambda_1^{-1}-\lambda_2^{-1}}
\tau(x,t-[\lambda_1^{-1}]^N) \tau(x+\epsilon,t-[\lambda_2^{-1}]^N)
-\frac{\lambda_2^{-1}}{\lambda_1^{-1}-\lambda_2^{-1}}
\tau(x,t-[\lambda_2^{-1}]^N) \tau(x+\epsilon,t-[\lambda_1^{-1}]^N)
\\ \label{HBE61} = \tau(x+\epsilon,t)
\tau(x,t-[\lambda_1^{-1}]^N-[\lambda_2^{-1}]^N).
\end{eqnarray}
Dividing both sides of eq.\eqref{HBE61}  by $\tau(x,t-[\lambda_1^{-1}]^N) \tau(x+\epsilon,t-[\lambda_2^{-1}]^N)$will lead to the following
equation
\begin{eqnarray}\label{HBE610}\notag
&&  \frac{\lambda_1^{-1}}{\lambda_1^{-1}-\lambda_2^{-1}}
-\frac{\lambda_2^{-1}}{\lambda_1^{-1}-\lambda_2^{-1}}
\frac{\tau(x,t-[\lambda_2^{-1}]^N)
\tau(x+\epsilon,t-[\lambda_1^{-1}]^N)}{\tau(x,t-[\lambda_1^{-1}]^N) \tau(x+\epsilon,t-[\lambda_2^{-1}]^N)}  \\
&&= \frac{\tau(x+\epsilon,t)
\tau(x,t-[\lambda_1^{-1}]^N-[\lambda_2^{-1}]^N)}{\tau(x,t-[\lambda_1^{-1}]^N)
\tau(x+\epsilon,t-[\lambda_2^{-1}]^N)}.
\end{eqnarray}
Finally we can get the following bilinear equations on free energy,
\begin{eqnarray*}\label{HBE6107}\notag
&&  \lambda_1^{-1}\exp\(-\sum_{n\geq 0}\sum_
{\alpha=1}^{N}\frac{\lambda_2^{-N(n+1-\frac{{\alpha-1}}{N})}
}{N(n+1- \frac{\alpha-1}{N} )}
 \frac{\partial^2
F}{\partial{ t_{\alpha ,n}}\partial{x}}\) -\lambda_2^{-1}
\exp\(-\sum_{n\geq 0}\sum_
{\alpha=1}^{N}\frac{\lambda_1^{-N(n+1-\frac{{\alpha-1}}{N})}}{N(n+1-
\frac{\alpha-1}{N} )}
 \frac{\partial^2 F}{\partial{
t_{\alpha ,n}}\partial{x}}\) \\
&&= (\lambda_1^{-1}-\lambda_2^{-1})\exp\(\sum_{m,n\geq 0}\sum_
{\alpha,\alpha'=1}^{N}\frac{\lambda_1^{-N(n+1-\frac{{\alpha-1}}{N})}\lambda_2^{-N(m+1-\frac{{\alpha'-1}}{N})}
}{N^2(n+1 - \frac{\alpha-1}{N} )(m+1 - \frac{\alpha'-1}{N} )}
 \frac{\partial^2 F}{\partial{
t_{\alpha ,n}}\partial{
t_{\alpha' ,m}}}\).
\end{eqnarray*}

{\bf  \Rmnum{2}.} $m=0, t'=t+[\lambda_1]^M+[\lambda_2]^M$. In this
case the Hirota bilinear identity \eqref{HBE511} will lead to
\begin{eqnarray*}\label{HBE62}
&& \res_{\lambda }
 \left\{ \tau(x,t-[\lambda^{-1}]^N)\times
\tau(x+\epsilon,t'+[\lambda^{-1}]^N)\lambda^{-1}
\right\} \\
&&=  \res_{\lambda }
 \left\{
\tau(x+\epsilon,t+[\lambda]^M)\times
\tau(x,t'-[\lambda]^M)\lambda^{-1}\frac{1}{(1-\lambda^{-1}
\lambda_1)(1-\lambda^{-1} \lambda_2)} \right\}.
\end{eqnarray*}
Using \begin{eqnarray*}
(1-\lambda^{-1}\lambda_1)^{-1}(1-\lambda_2\lambda^{-1})^{-1} =
\frac{\lambda_1}{\lambda_1-\lambda_2}(1-\lambda_1\lambda^{-1})^{-1}-\frac{\lambda_2}{\lambda_1-\lambda_2}
(1-\lambda_2\lambda^{-1})^{-1}, \end{eqnarray*} we  get
\begin{eqnarray*}\label{HBE62}
&&\tau(x,t) \tau(x+\epsilon,t')\\
&=& \frac{\lambda_1}{\lambda_1-\lambda_2}
\tau(x+\epsilon,t+[\lambda_1]^M) \tau(x,t'-[\lambda_1]^M)
-\frac{\lambda_2}{\lambda_1-\lambda_2}
\tau(x+\epsilon,t+[\lambda_2]^M) \tau(x,t'-[\lambda_2]^M).
\end{eqnarray*}
It further leads to
\begin{eqnarray}\label{HBE62'}
&&\tau(x,t) \tau(x+\epsilon,t+[\lambda_1]^M+[\lambda_2]^M)\\\notag
&=& \frac{\lambda_1}{\lambda_1-\lambda_2}
\tau(x+\epsilon,t+[\lambda_1]^M) \tau(x,t+[\lambda_2]^M)
-\frac{\lambda_2}{\lambda_1-\lambda_2}
\tau(x+\epsilon,t+[\lambda_2]^M) \tau(x,t+[\lambda_1]^M).
\end{eqnarray}

Finally we can get the following bilinear equations on free energy,
\begin{eqnarray}\label{HBE62'1}
&&\exp\( \sum_{m,n\geq 0}\sum_
{\beta,\beta'=-M+1}^{0}\frac{\lambda_1^{M(n+1+\frac{{\beta}}{M})}\lambda_2^{M(m+1+\frac{{\beta'}}{M})}
}{M^2(n+1 + \frac{\beta}{M} )(m+1 + \frac{\beta'}{M} )}
 \frac{\partial^2 F}{\partial{
t_{\beta ,n}}\partial{ t_{\beta' ,m}}}\)\\\notag &=&
\frac{\lambda_1}{\lambda_1-\lambda_2}\exp\(-\sum_{n\geq 0}\sum_
{\beta=-M+1}^{0}\frac{\lambda_2^{M(n+1+\frac{{\beta}}{M})} }{M(n+1 +
\frac{\beta}{M} )}
 \frac{\partial^2 F}{\partial{
t_{\beta ,n}}\partial{x}}\) \\ \notag
&&-\frac{\lambda_2}{\lambda_1-\lambda_2} \exp\(-\sum_{n\geq 0}\sum_
{\beta=-M+1}^{0}\frac{\lambda_1^{M(n+1+\frac{{\beta}}{M})}}{M(n+1 +
\frac{\beta}{M} )}
 \frac{\partial^2 F}{\partial{
t_{\beta ,n}}\partial{x}}\).
\end{eqnarray}

{\bf  \Rmnum{3}.} $m=1, t'=t-[\lambda_1^{-1}]^N+[\lambda_2]^M$. In
this case the Hirota bilinear identity \eqref{HBE511} will lead to
\begin{eqnarray*}\label{HBE62}
&& \res_{\lambda }
 \left\{ \tau(x,t-[\lambda^{-1}]^N)\times
\tau(x,t'+[\lambda^{-1}]^N)\frac{1}{1-\lambda \lambda_1^{-1}}
\right\} \\
&&=  \res_{\lambda }
 \left\{
\tau(x+\epsilon,t+[\lambda]^M)\times
\tau(x-\epsilon,t'-[\lambda]^M)\frac{1}{1-\lambda^{-1} \lambda_2}
\right\} ,
\end{eqnarray*}
which is equivalent to
\begin{eqnarray*}\label{HBE62'''}
 \lambda_1 (\tau(x,t-[\lambda_1^{-1}]^N) \tau(x,t'+[\lambda_1^{-1}]^N)-\tau(x,t) \tau(x,t')) =
\lambda_2\tau(x+\epsilon,t+[\lambda_2]^M)
\tau(x-\epsilon,t'-[\lambda_2]^M).
\end{eqnarray*}
It further implies
\begin{eqnarray}\notag
 \lambda_1 (\tau(x,t-[\lambda_1^{-1}]^N)\tau(x,t+[\lambda_2]^M)-\tau(x,t) \tau(x,t-[\lambda_1^{-1}]^N+[\lambda_2]^M))
 \\\label{HBE62''}
=\lambda_2\tau(x+\epsilon,t+[\lambda_2]^M)
\tau(x-\epsilon,t-[\lambda_1^{-1}]^N).
\end{eqnarray}

Similarly we can also get the following Hirota bilinear equation on
free energy,
\begin{eqnarray}\notag
 &&1-\exp(\sum_{m,n\geq 0}\sum_
{\beta=-M+1}^{0}\sum_
{\alpha=1}^{N}\frac{\lambda_1^{-N(n+1-\frac{{\alpha-1}}{N})}
\lambda_2^{M(m+1+\frac{{\beta}}{M})} }{NM(n+1- \frac{\alpha-1}{N}
)(n+1 + \frac{\beta}{M} )}
 \frac{\partial^2 F}{\partial{
t_{\alpha,n}}\partial{
t_{\beta ,m}}})\\ \notag
&=& \lambda_1^{-1}\lambda_2\exp\left(\frac{\partial^2
F}{\partial^2{x}}+\sum_{n\geq 0}\sum_ {\alpha=1}^{N}\frac{\lambda_1^{-N(n+1-\frac{{\alpha-1}}{N})}}{N(n+1- \frac{\alpha-1}{N} )}
 \frac{\partial^2 F}{\partial{
t_{\alpha ,n}}\partial{x}}\right.\\
&&\left.+\sum_{n\geq 0}\sum_ {\beta=-M+1}^{0}\frac{\lambda_2^{M(n+1+\frac{{\beta}}{M})}}{M(n+1 +
\frac{\beta}{M} )}
 \frac{\partial^2 F}{\partial{
t_{\beta,n}}\partial{x}}\right).
\end{eqnarray}

The properties of the dBTH mentioned above provide a very sound mathematical background in its possible applications in deriving solutions\cite{kodamaSolution} and some combinatorics meaning in matrix model\cite{Kodama-Pierce}.

\sectionnew{Conclusions and Discussions}
 We define Orlov-Schulman's $\M_L$,
$\M_R$ function of the dBTH and give the  additional symmetry of the
dBTH. We find this kind of Block type Lie algebraic structure is
still preserved by the dBTH. Also we give the quasi-classical limit
relation  of the BTH and  the dBTH and some Hirota bilinear
equations of the dBTH.  We hope the additional symmetry and
dispersionless Hirota bilinear identity of the dBTH can be used more
in other fields of mathematical physics, particularly topological
fields theory and string theory.

The main difference of Block symmetry and HBEs of the dBTH from the ones of the BTH is that the representation space here is a directly a functional space and  the HBEs of the dBTH here are in form of free energy using WKB analysis.

{\bf Acknowledgments} {\noindent \small  This work is supported by NSF of Zhejiang Province under Grant No. LY12A01007,
 NSF of China under Grant No.10971109 and K.C.Wong Magna Fund in
Ningbo University. Jingsong He is also supported by Program for NCET
under Grant No. NCET-08-0515 and NSF of Ningbo under Grant
No.2011A610179. We also thank Professor Yishen Li(USTC,
China) for long-term encouragements and supports. Chuanzhong Li would
like to thank Professor Yuji Kodama in Department of Mathematics at
   Ohio State University for his useful discussions.}

%%%%%%%%%%%%%%%%% References  %%%%%%%%%%%%%%%%%%%%%%%%%%%%%%%%%%%%%%%
\newpage{}
\vskip20pt

%%%%%%%%%%%%%%%%%%%%%%%%%%%%%%%%%%%%%%%%%%%%%%%%%%%%%%%%%%%%%%%%%%%%%%%%%%%%%%%

%---------------------------------------------------------------------------------------

\end{document}